\numberwithin{equation}{section}
\newcommand{\thefont}[2]{\fontsize{#1}{#2}\fontshape{n}\selectfont}
\def\argmin{\mathop{\rm arg \; min}\limits}%
\theoremstyle{plain}
\newtheorem{prop}{Proposition}[section]
\newtheorem{coro}{Corollary}[section]
\newtheorem{theo}{Theorem}[section]
\newtheorem{lem}{Lemma}[section]
\theoremstyle{definition}
\newtheorem{defin}{Definition}[section]
\newtheorem{exe}{Example}[section]
\theoremstyle{remark}
\newtheorem{rem}{Remark}[section]
\newcommand{\cost}{\operatorname{K}_W}
\newcommand{\costn}{\operatorname{K}_W^{(n)}}
\newcommand{\costnb}{\operatorname{\bf K}_W^{(n)}}
\newcommand{\costh}{\operatorname{K}_X}
\newcommand{\costH}{\operatorname{H}_X}
\newcommand{\costhn}{\operatorname{K}_X^{(n)}}
\newcommand{\costhnb}{\operatorname{\bf K}_X^{(n)}}
\newcommand{\costHn}{\operatorname{H}_X^{(n)}}
\newcommand{\spann}{\operatorname{Sp}}
\newcommand{\bv}{\boldsymbol{v}}
\newcommand{\bu}{\boldsymbol{u}}
\newcommand{\bx}{\boldsymbol{x}}
\newcommand{\bnu}{\boldsymbol{\nu}}
\newcommand{\E}{{\mathbb E}}
\newcommand{\R}{{\mathbb R}}
\renewcommand{\P}{{\mathbb P}}
\newcommand{\M}{{\mathcal M}}
\newcommand{\U}{{\mathcal U}}
\newcommand{\LL}{L^{2}_{\mu}(\Omega)}
\newcommand{\VV}{V_{\mu}(\Omega)}
\newcommand{\CL}{\mathrm{CL}}
\newcommand{\WS}{W_2(\Omega)}
\newcommand{\GG}{\ensuremath{\mathcal G}}
\newcommand{\NN}{\ensuremath{\mathcal N}}
\newcommand{\1}{\rlap{\thefont{10pt}{12pt}1}\kern.16em\rlap{\thefont{11pt}{13.2pt}1}\kern.4em}
\title{Geodesic PCA in the Wasserstein space}% via Convex PCA in a Hilbert space}
\author{ J\'{e}r\'{e}mie Bigot$^{1}$,  Ra\'ul Gouet$^{2}$, Thierry Klein$^{3}$ \& Alfredo L\'{o}pez$^{4}$  \\
\\  Institut de Math\'ematiques de Bordeaux et CNRS  (UMR 5251)$^{1}$  \\ Universit\'e de Bordeaux  \vspace{0.1cm}  \\ Depto. de Ingenier\'{\i}a Matem\'{a}tica and CMM (CNRS, UMI 2807)$^{2}$ \\ Universidad de Chile \vspace{0.1cm}\\ Institut de Math\'ematiques de Toulouse et CNRS  (UMR 5219)$^{3}$ \\ Universit\'e de Toulouse  \vspace{0.1cm}  \\ CSIRO Chile International Centre of Excellence$^{4}$}
\date{\today}
\begin{document}

\maketitle

\thispagestyle{empty}

\begin{abstract}
We introduce the method of  Geodesic Principal Component Analysis (GPCA) on the space of probability measures on the line, with finite second moment, endowed with the Wasserstein metric. We discuss the advantages of this approach, over a standard functional PCA of probability densities in the Hilbert space of square-integrable functions. We establish the consistency of the method by showing that the empirical GPCA converges to its population counterpart, as the sample size tends to infinity. A key property in the study of GPCA is the isometry between the Wasserstein space and a closed convex subset of the space of square-integrable functions, with respect to an appropriate measure. Therefore, we consider the general problem of PCA in a closed convex subset of a separable Hilbert space, which serves as basis for the analysis of GPCA and also has interest in its own right. We provide  illustrative examples on simple statistical models, to show the benefits of this approach for data analysis. The method is also applied to a real dataset of population pyramids.
\end{abstract}

\noindent \emph{Keywords:}   Wasserstein space, Geodesic and Convex Principal Component Analysis, Fr\'echet mean, Functional data analysis, Geodesic space, Inference for family of densities.

\noindent\emph{AMS classifications:} Primary 62G08; secondary 62G20.

\section*{Acknowledgments} The authors acknowledge the support of the French Agence Nationale de la Recherche (ANR) under reference ANR-JCJC-SIMI1 DEMOS, and the financial support from the French embassy in Chile and CONICYT PFCHA-2012. J. Bigot would like to thank the Center for Mathematical Modeling (CMM) and the CNRS for financial support and excellent hospitality while visiting Santiago, where part of this work was carried out. R. Gouet acknowledges support from Fondecyt grant 1120408 and project PFB-03-CMM.
% \tableofcontents

\section{Introduction}

\subsection{Main goal of this paper}

The main goal of this paper is to define a notion of principal component analysis (PCA) of a family of  probability measures $\nu_1,\ldots,\nu_n$, defined on the real line $\R$.  In the case where the measures admit square-integrable densities $f_1,\ldots,f_n$, the standard approach is to use functional PCA (FPCA) (see e.g.\ \cite{MR650934,MR2168993, MR1389877}) on the Hilbert space $L^{2}(\R)$, of square-integrable functions, endowed with its usual inner product. This method has already been applied in \cite{MR2736564,MR1946423} for analysing the main modes of variability of a set of densities.

We briefly introduce elements of standard PCA in a separable Hilbert space $H$,  endowed with inner product $\langle \cdot, \cdot \rangle$ and norm $\| \cdot \|$. A PCA of the data $x_{1},\ldots,x_{n}$ in $H$ is carried out by diagonalizing the empirical covariance operator
$K x = \frac{1}{n} \sum_{i=1}^{n} \langle x_{i} - \bar{x}_{n}, x \rangle (x_{i} - \bar{x}_{n}), \; x \in H$,  where $\bar{x}_{n} = \frac{1}{n} \sum_{i=1}^{n} x_{i}$ is the Euclidean mean.

The eigenvectors of $K$, associated to the largest eigenvalues,  describe the principal modes of data variability around $\bar{x}_n$. The first principal mode of linear variation of the data is  defined by the $H$-valued curve $g :\R \to H$ given by
\begin{equation}
g_{t}= \bar{x}_{n} + t{\sigma}_{1}{w}_{1}, \; t \in \R, \label{eq:g}
\end{equation}
where ${w}_{1} \in H$ is the eigenvector corresponding to the largest eigenvalue ${\sigma}_{1} \geq 0$ of $K$. On the other hand, it is well known that PCA can be formulated as the problem of finding a sequence of nested affine subspaces, minimizing the sum of norms of projection residuals. In particular, $w_1$ is a solution of
\begin{equation}\label{eq:pca}
\min_{v \in H, \; \| v \| = 1} \sum_{i=1}^{n} d^2(x_i,S_v) = \min_{v \in H, \; \| v \| = 1} \sum_{i=1}^{n}  \| x_{i} - \bar{x}_{n} - \langle x_{i} - \bar{x}_{n} , v \rangle v  \|^2,
\end{equation}
where $S_{v} = \{\bar{x}_{n} + t v, \; t \in \R\}$ is the affine subspace through $\bar{x}_{n}$, with direction $v \in H$, and $d(x,S) =\inf_{x' \in S}\|x-x'\|$ denotes the distance from $x \in H$ to $S \subset H$.

We illustrate the strategy discussed above on the set of Gaussian densities $f_1,\ldots,f_4$, shown in Figure \ref{fig:exintro}. These densities are sampled from the following location-scale model, to be used throughout the paper as illustrative example: let $f_0$ be a density in $L^{2}(\R)$ and, for $(a_{i},b_{i})  \in (0,\infty) \times \R,\ i=1,\ldots,n$, we define $\nu_{i}$ as the  probability measure with density
\begin{equation}
f_{i}(x) := a_i^{-1}f_{0}\left( a_i^{-1}(x-b_{i})\right), \; x \in \R.\  %\mathrm{(Location-Scale\  model)}
\end{equation}

This model is appropriate in many applications such as curve registration and signal warping, see e.g.\ \cite{BK12} and \cite{Gallon2011}. The main sources of variability in these densities are the variation in location along the $x$-axis, and the scaling variation. One of the purposes of this paper is to develop a notion of PCA, that has desirable and coherent properties with respect to this variability and the model. A first requirement is that the principal modes of variation be densities. Moreover, they should reflect the fact that the data vary in location and scale around $f_{0}$.

The densities displayed in Figure \ref{fig:exintro} represent an example of realizations of this model, with $f_0$ the standard normal density and $n=4$. Let us first consider the FPCA of this dataset.
 To that end we compute the Euclidean mean $\bar{f}_4$, shown in Figure \ref{fig:exintro}(e), a bi-modal density which is not a "satisfactory" average of the uni-modal densities $f_1,\ldots,f_4$.
  In Figure \ref{fig:linPC} we display the first mode of linear variation $g$, given by \eqref{eq:g}, and observe that it is not a "meaningful" descriptor of the variability in the data. Indeed, for $|t|$ sufficiently large, $g_{t}$ may take negative values and does not integrate to one, as illustrated in Figure \ref{fig:linPC}(a),(e),(f). Moreover, even for small values of $|t|$, $g_{t}$ does not represent the typical shape of the observed densities, as shown by  Figure \ref{fig:linPC}(c),(d).
  Therefore, the FPCA of densities in $L^{2}(\R)$  is not always appropriate as it may lead to principal modes of linear variation that are not coherent with the sources of variability observed  in the data (e.g.,\ sampled from a location-scale model). To overcome some of these issues, one could constraint the  first mode of variation to lie in the set of positive functions, integrating to one. However, such a constrained PCA would be computed via the $L^{2}(\R)$ norm, so the Euclidean mean $\bar{f}_4$ would stay unchanged and still not be satisfactory. We believe these drawbacks of FPCA are mainly due to the fact that the  Euclidean distance in $L^{2}(\R)$ is not  appropriate to perform PCA for densities. %to compare two measures $\nu_1 $ and $\nu_2$, admitting  $f_1$ and $f_2$ as densities.

\begin{figure}[h!]
\centering
\subfigure[]{ \includegraphics[width=3cm]{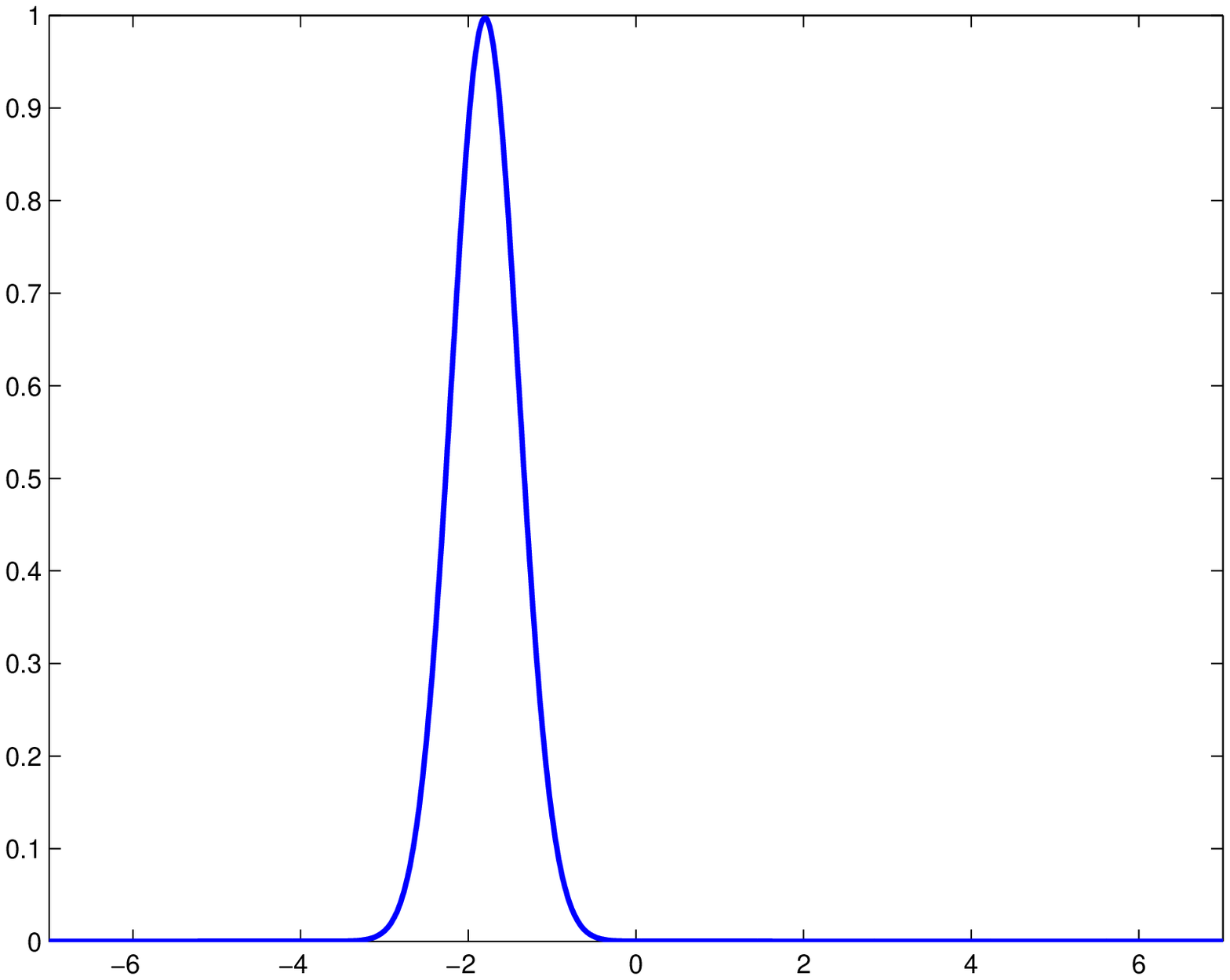} }
\subfigure[]{ \includegraphics[width=3cm]{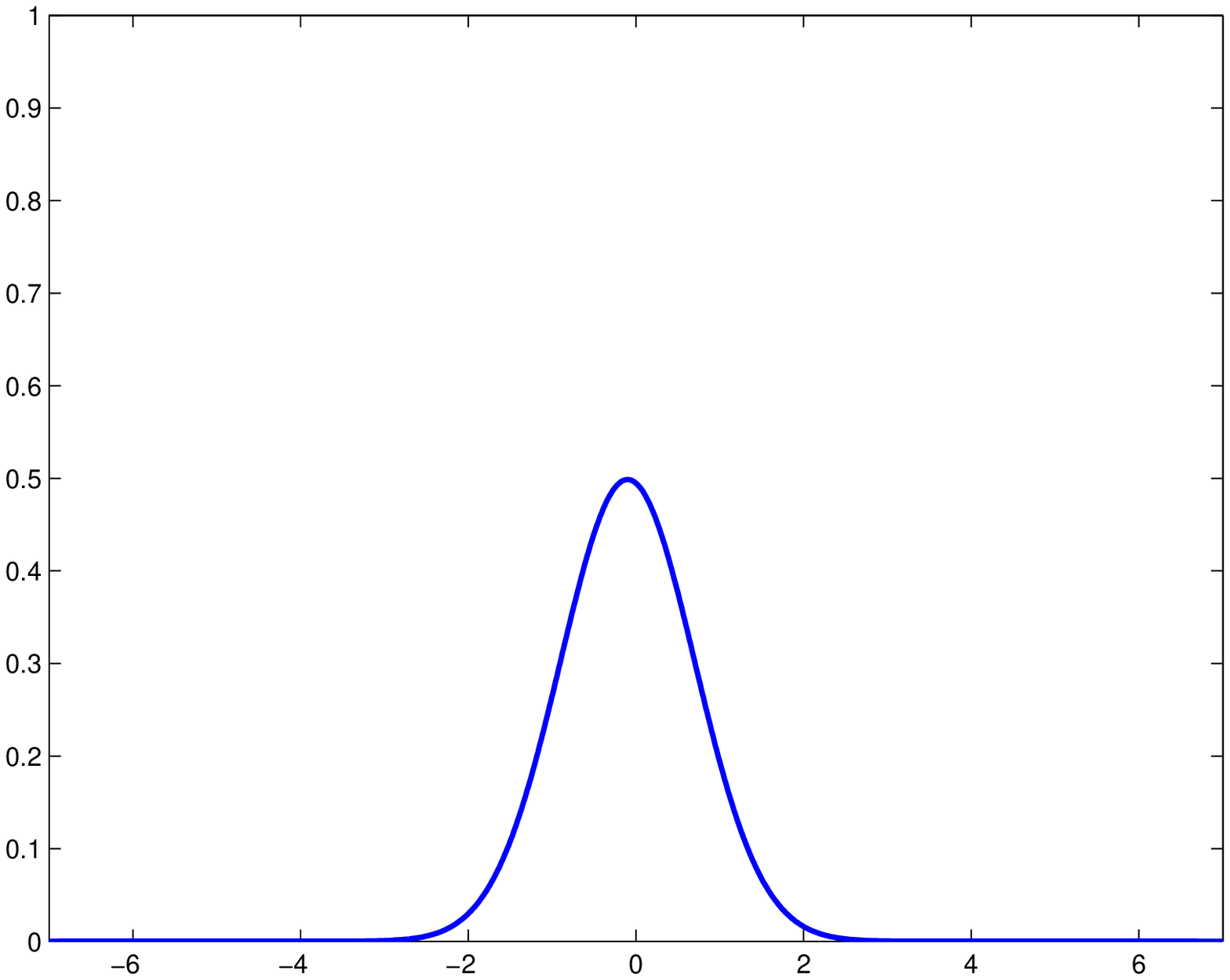} }
\subfigure[]{ \includegraphics[width=3cm]{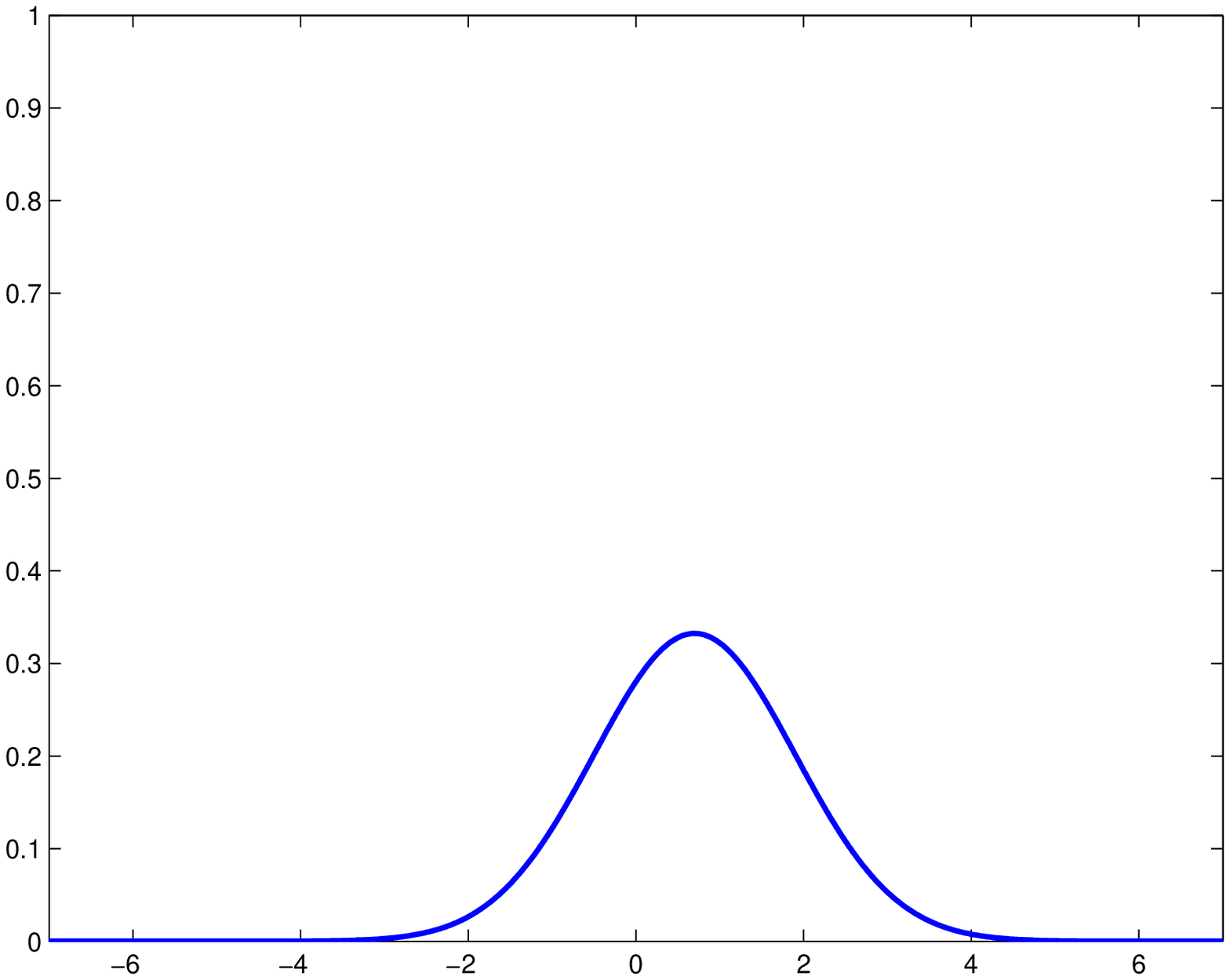} }
\subfigure[]{ \includegraphics[width=3cm]{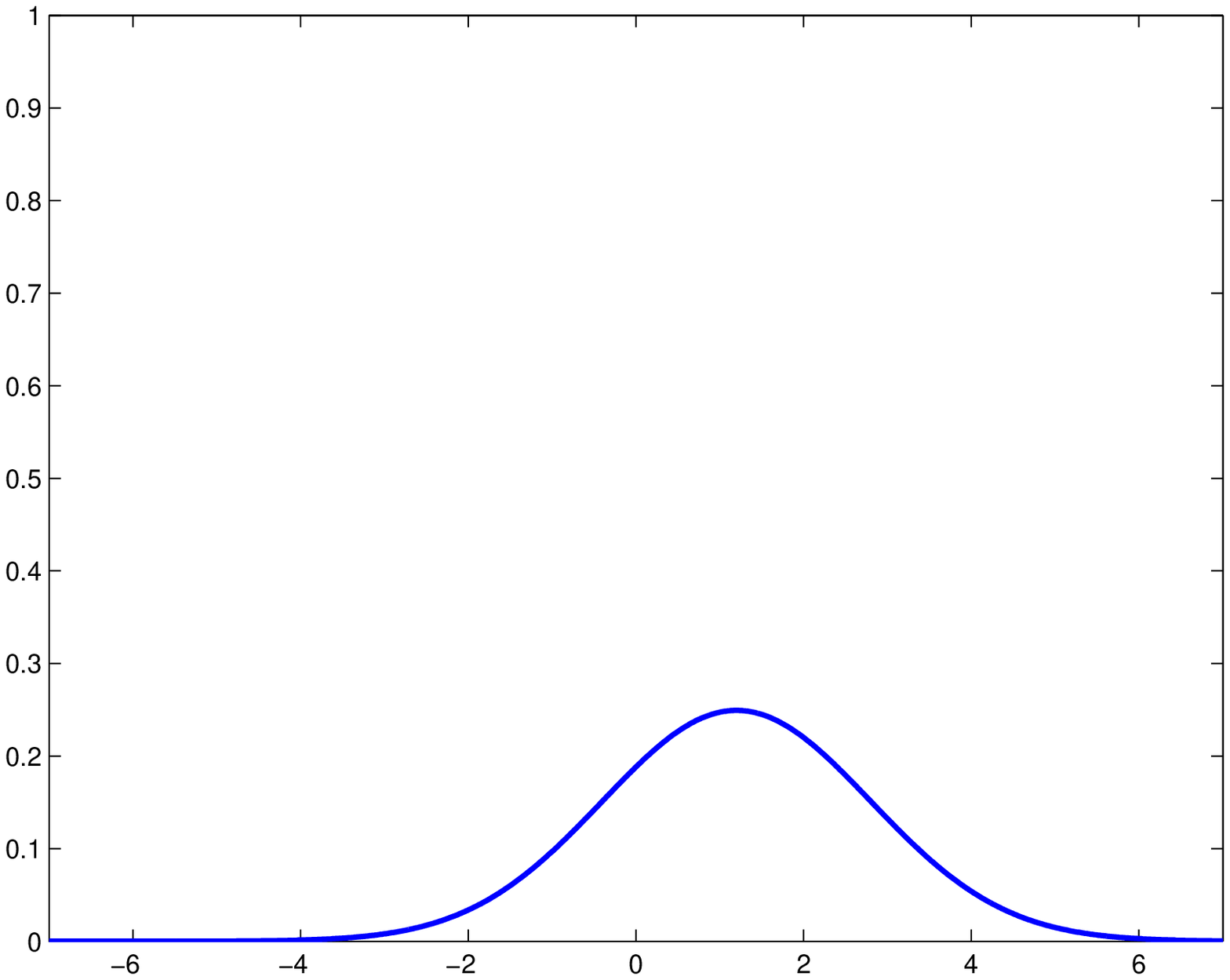} }

\subfigure[]{ \includegraphics[width=3cm]{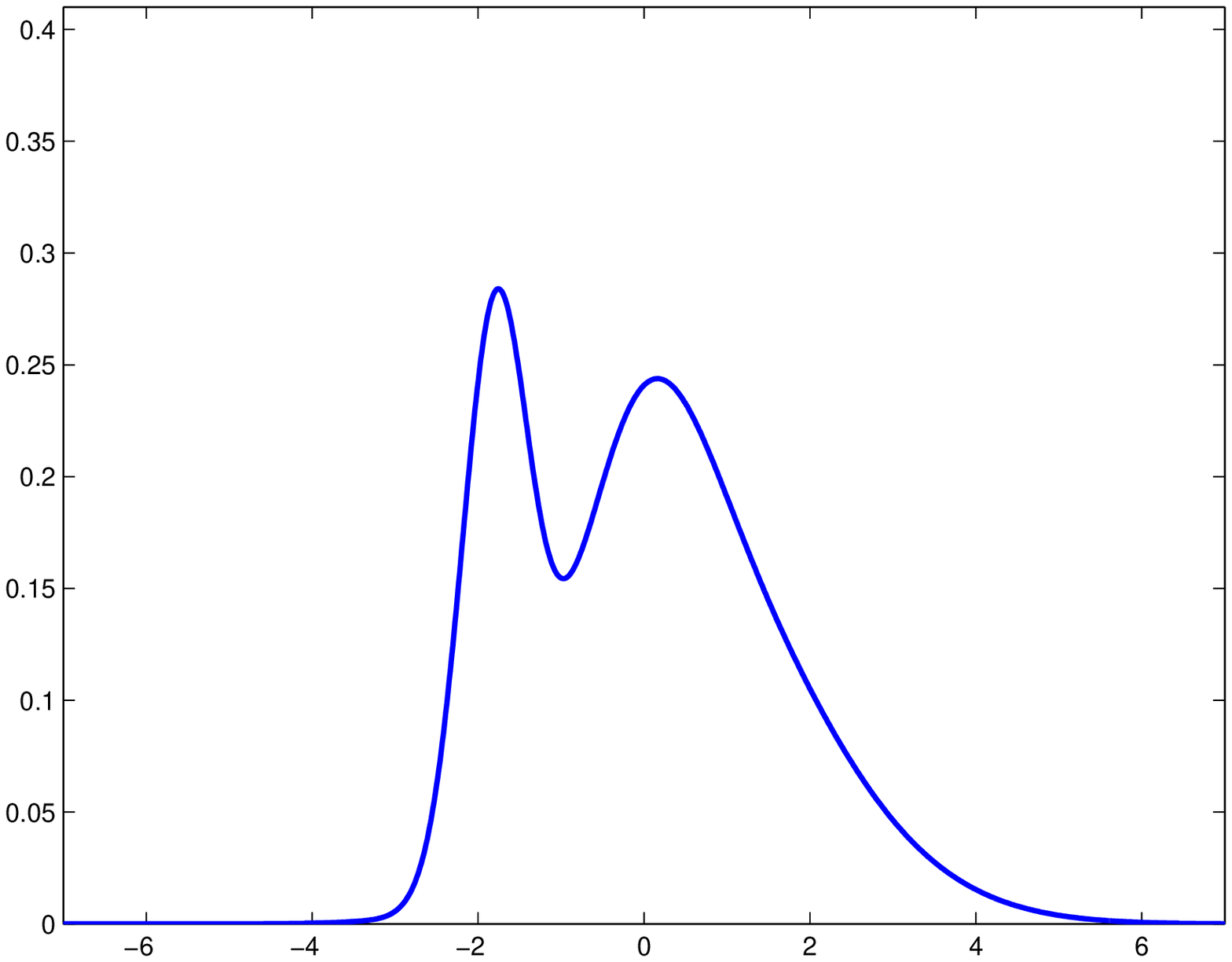} }
\subfigure[]{ \includegraphics[width=3cm]{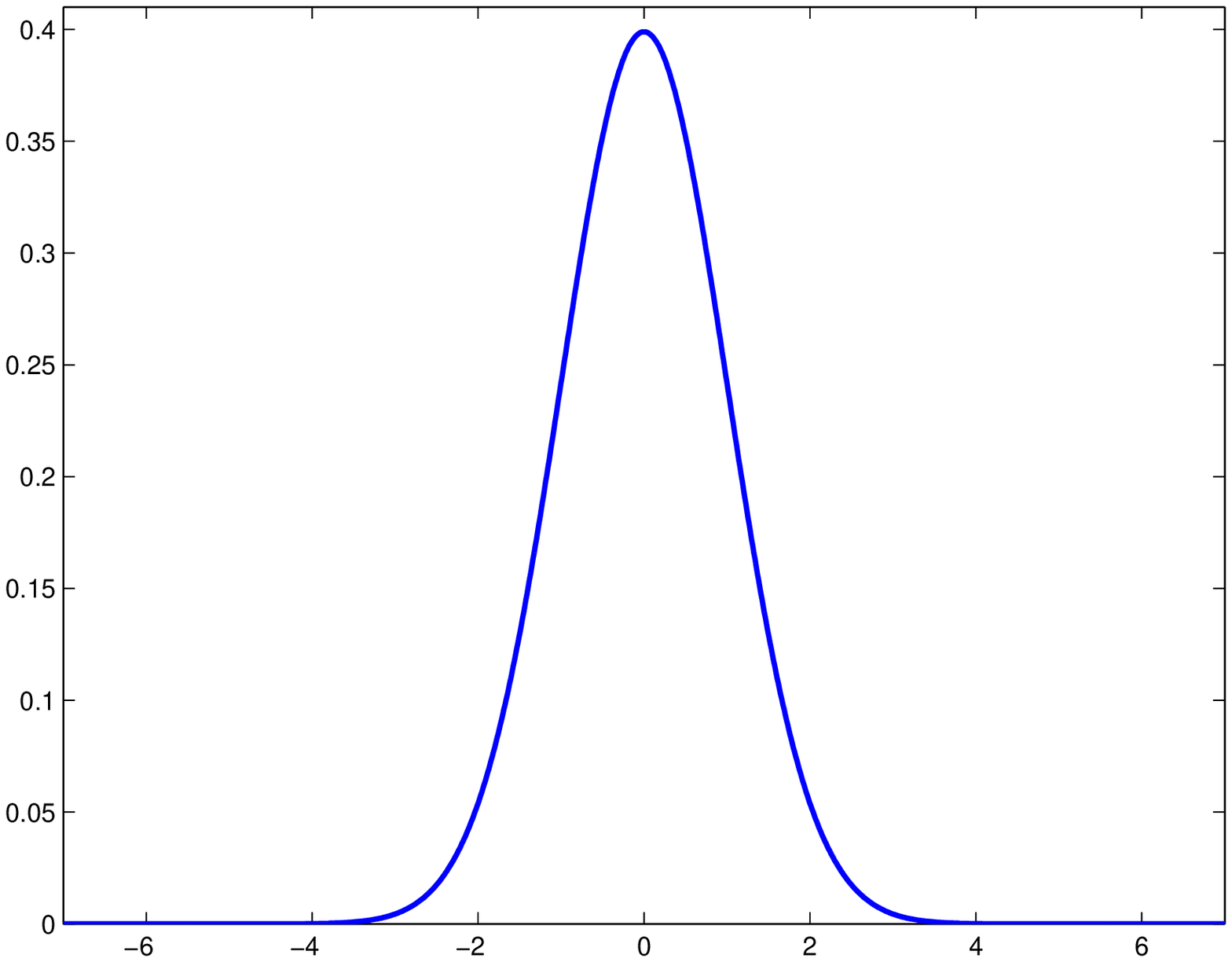} }

\caption{(a,b,c,d) Graphs of Gaussian densities $f_1,\ldots,f_4$, with different means and variances sampled from a location-scale model. (e) Euclidean mean of $f_1,\ldots,f_4$ in $L^2(\R)$. (f) Density of the barycenter $\bar{\nu}_{4}$ of $\nu_1,\ldots,\nu_4$ in the Wasserstein space $W_2(\R)$.} \label{fig:exintro}
\end{figure}

\begin{figure}[h!]
\centering

\subfigure[$g_{-2}$]{ \includegraphics[width=3cm]{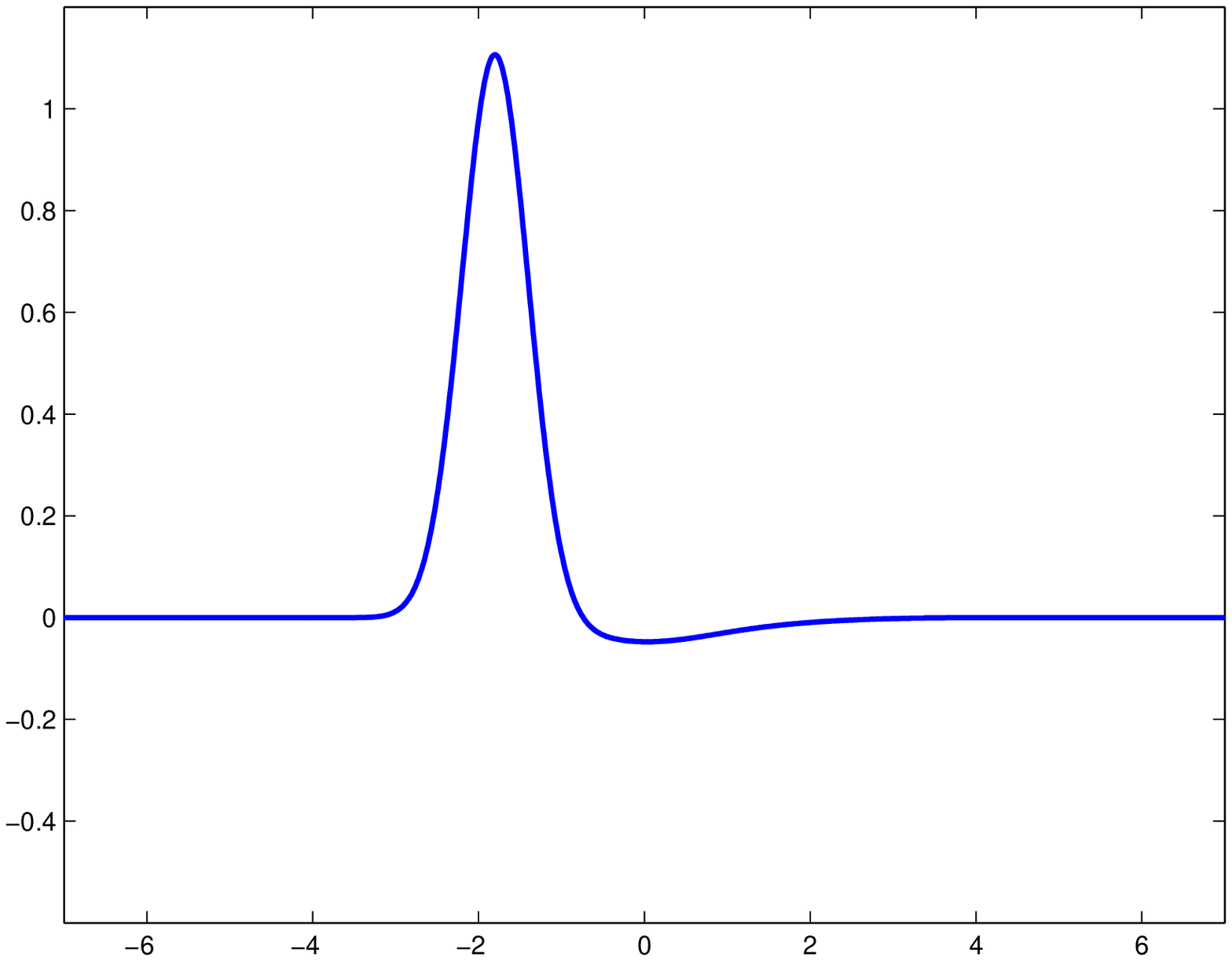} }
\subfigure[$g_{-1}$]{ \includegraphics[width=3cm]{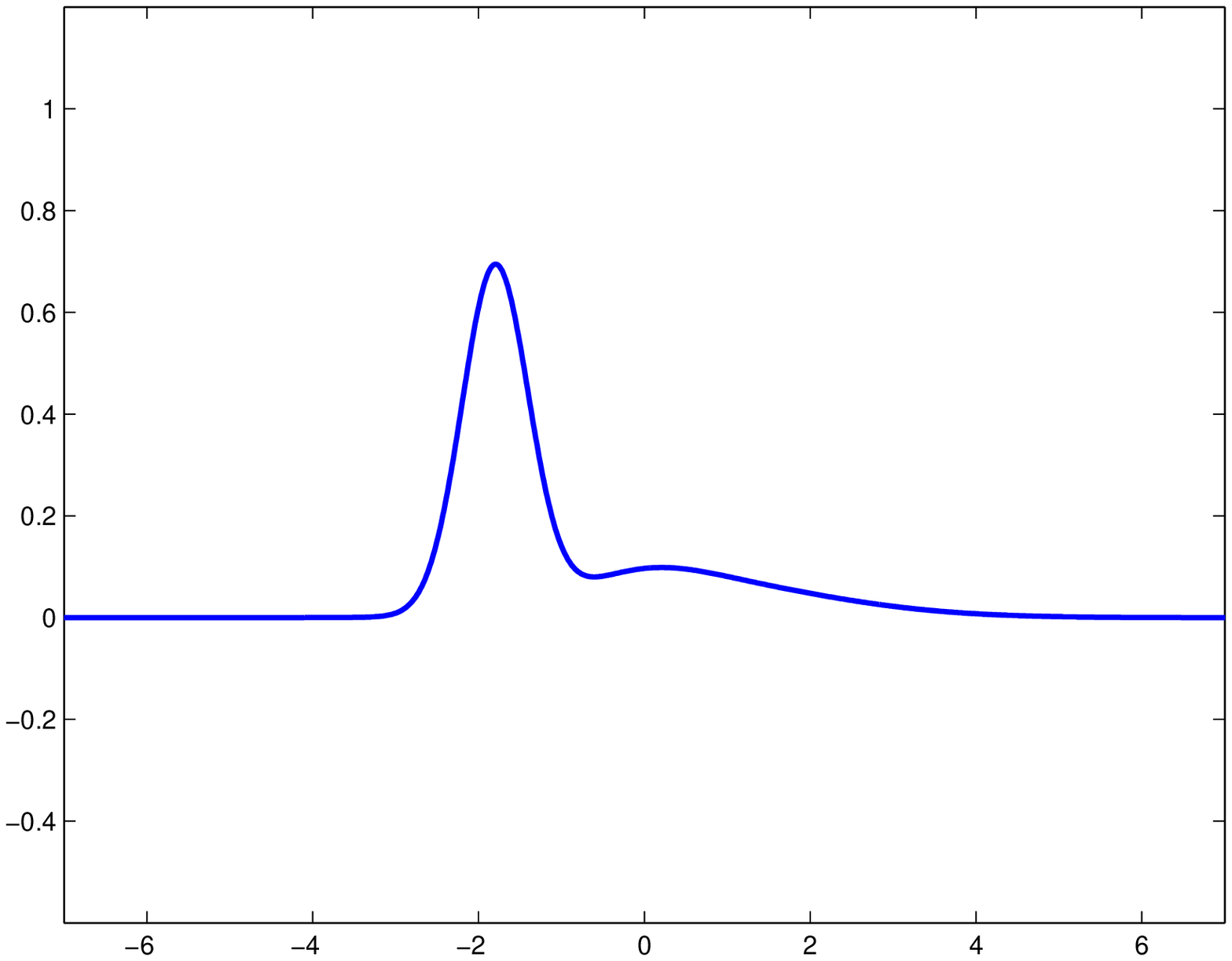} }
\subfigure[$g_{-0.5}$]{ \includegraphics[width=3cm]{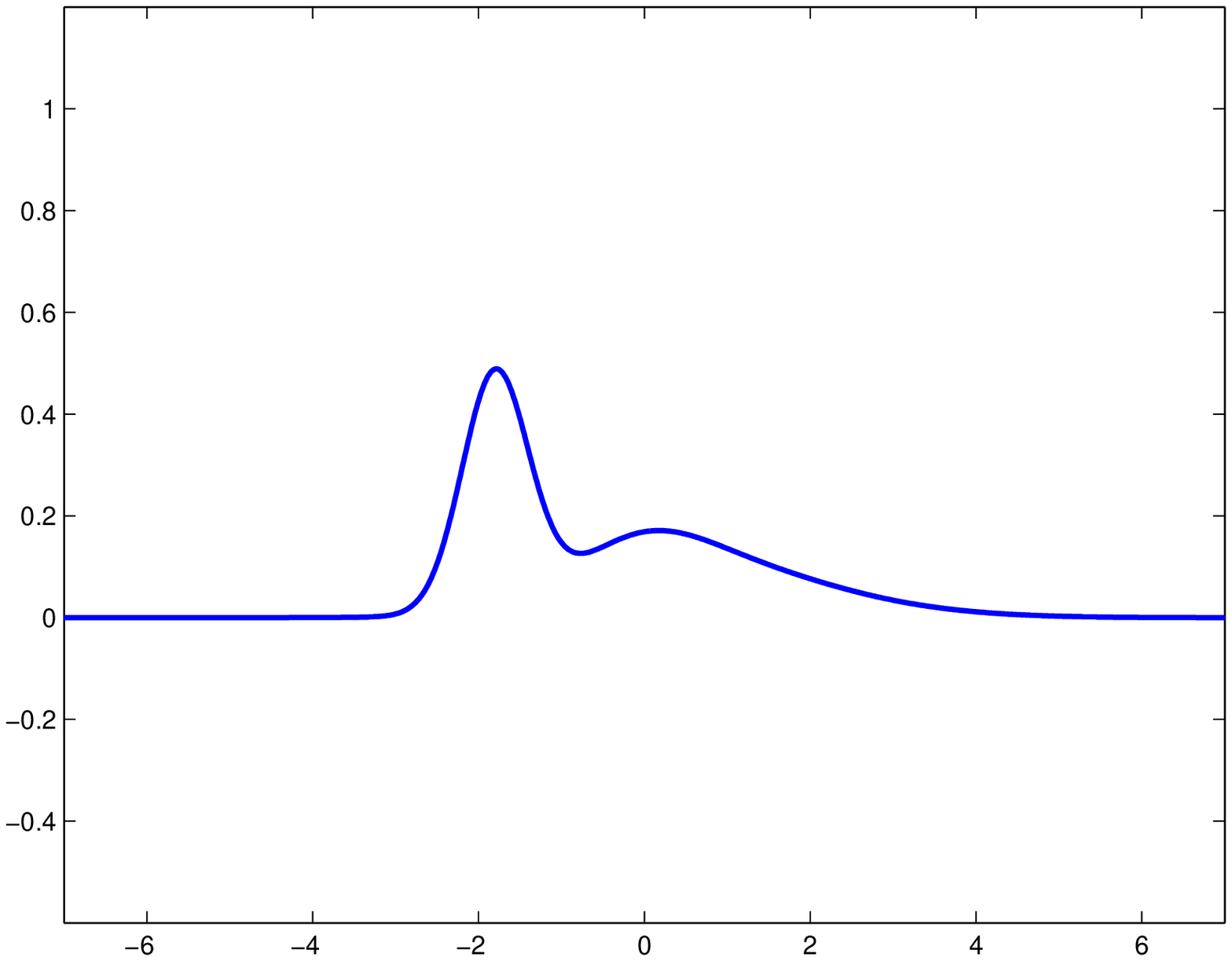} }

\subfigure[$g_{0.5}$]{ \includegraphics[width=3cm]{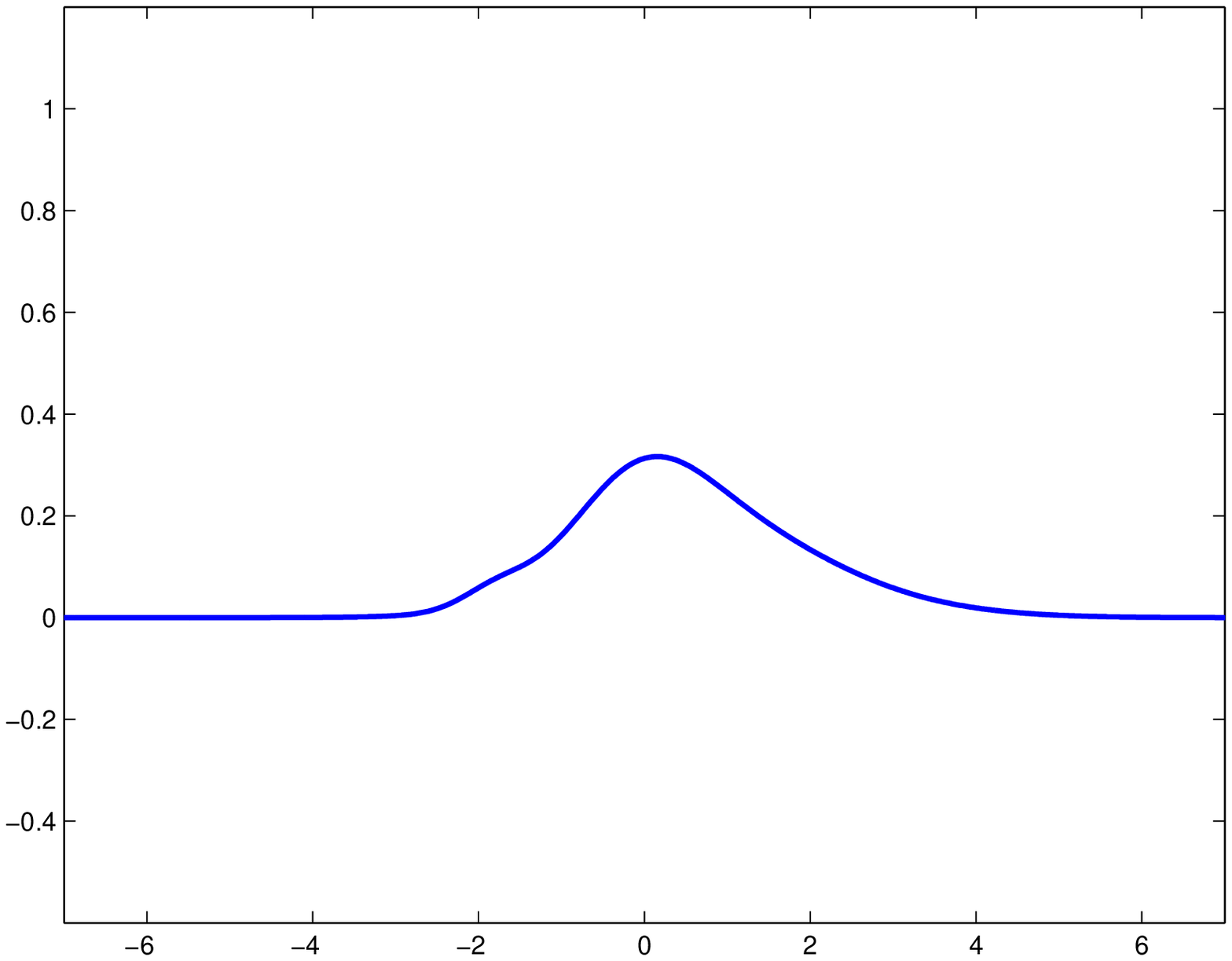} }
\subfigure[$g_{1}$]{ \includegraphics[width=3cm]{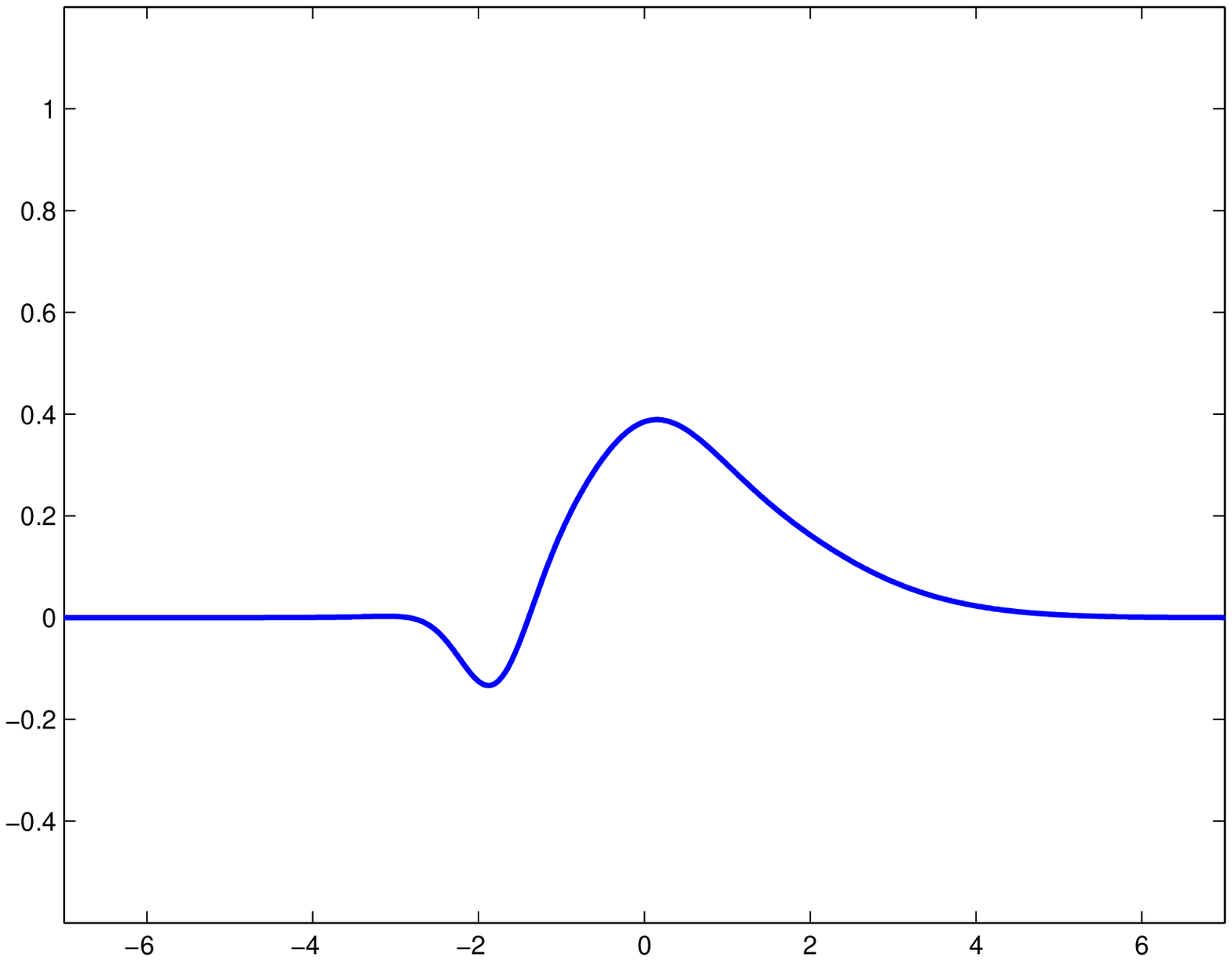} }
\subfigure[$g_{2}$]{ \includegraphics[width=3cm]{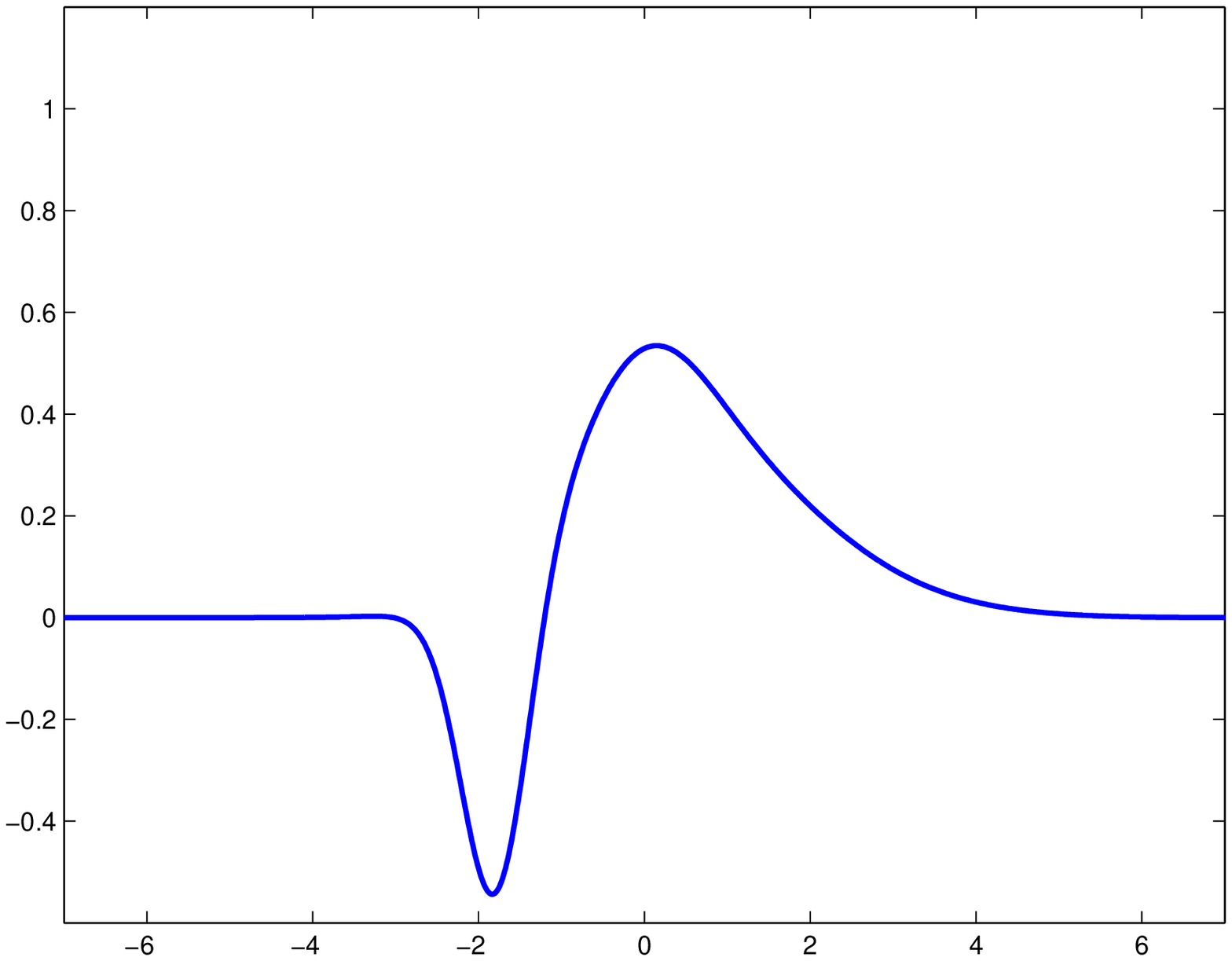} }

\caption{An example of functional PCA of densities. First principal mode of linear variation $g_{t}$ in $L^{2}(\R)$, for $-2 \leq t \leq 2$,  of the densities displayed in Figure \ref{fig:exintro}; see equation \eqref{eq:g}.} \label{fig:linPC}
\end{figure}

\subsection{Main contributions and organization of the paper}

In this paper we suggest to rather consider that $\nu_1,\ldots,\nu_n$  belong to the Wasserstein space $\WS$ of probability measures over $\Omega$, with finite second order moment, where $\Omega$ is $\R$ or a closed interval of $\R$. This space is endowed with the  Wasserstein distance, associated to the quadratic cost; see \cite{villani-topics} for an overview of Wasserstein spaces. In this setting it is not possible to  define a notion of PCA in the usual sense as $\WS$ is not a linear space. Nevertheless,  we show how to define a proper notion of Geodesic PCA (GPCA), by relying on the formal Riemannian structure of $\WS$, developed in \cite{ambrosio2004gradient}, that we describe in Section \ref{sec:Riemannian}. A first idea in that direction is related to the mean of the data, which is an essential ingredient in any notion of PCA. We propose to use the Fr\'echet mean (also called barycenter) as introduced in \cite{MR2801182}, with asymptotic properties studied in \cite{BK12}. It is significant that the barycenter of $\nu_1,\ldots,\nu_4$, in our example above, preserves the shapes of the densities; see Figure \ref{fig:exintro}(f).

Before precisely defining  GPCA in $\WS$, we display $\tilde{g}$  in Figure \ref{fig:geoPC}, the first principal mode of geodesic variation in $\WS$, of the data  displayed in Figure \ref{fig:exintro}; see equation \eqref{eq:gtilde}. GPCA clearly gives  a better  description of the variability in the data, compared to the results in Figure \ref{fig:linPC}, that correspond to the  first principal  mode of linear variation $g$ in $L^{2}(\R)$, given by \eqref{eq:g}.

\begin{figure}[h!]
\centering

\subfigure[$\tilde{g}_{-2}$]{ \includegraphics[width=3cm]{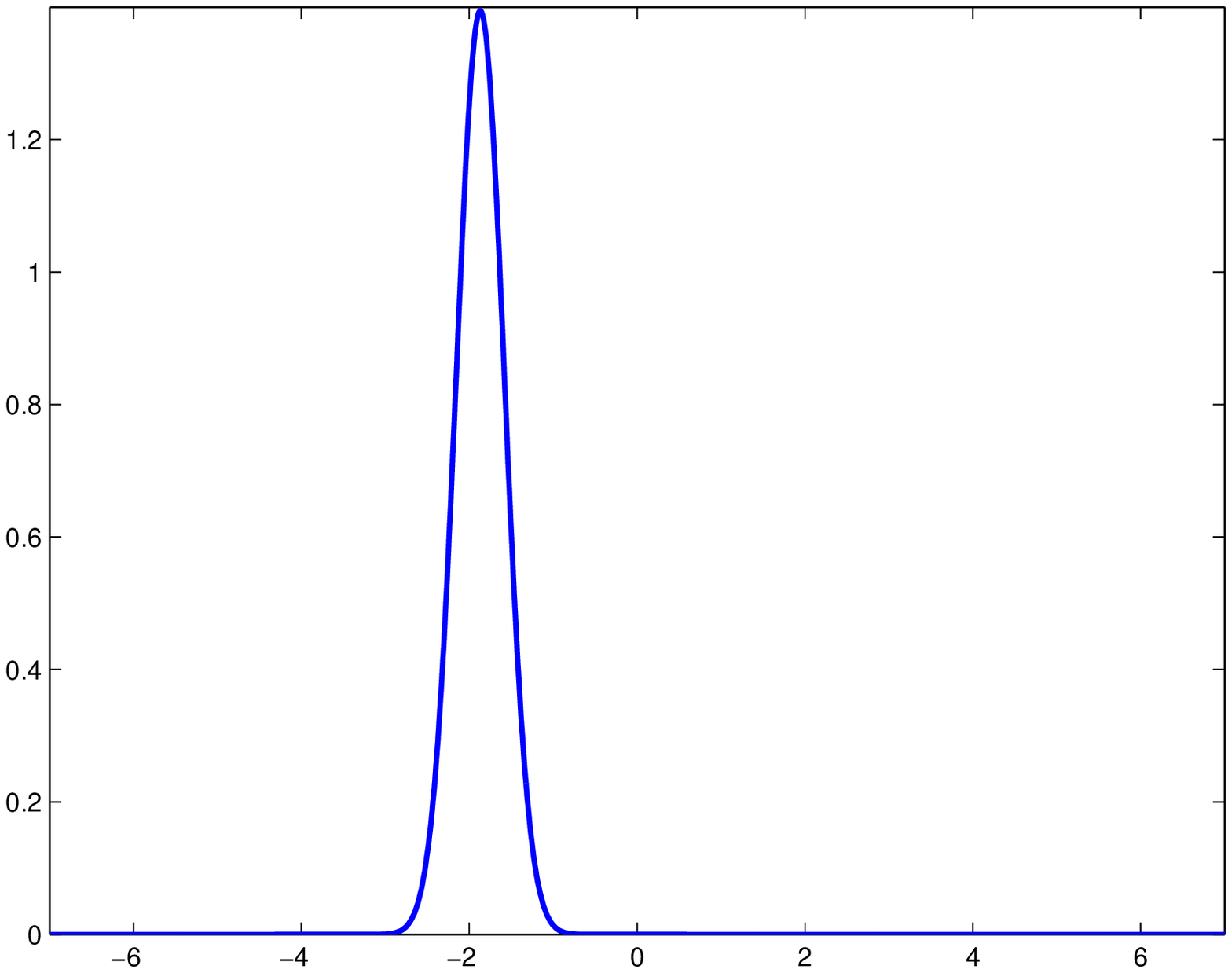} }
\subfigure[$\tilde{g}_{-1}$]{ \includegraphics[width=3cm]{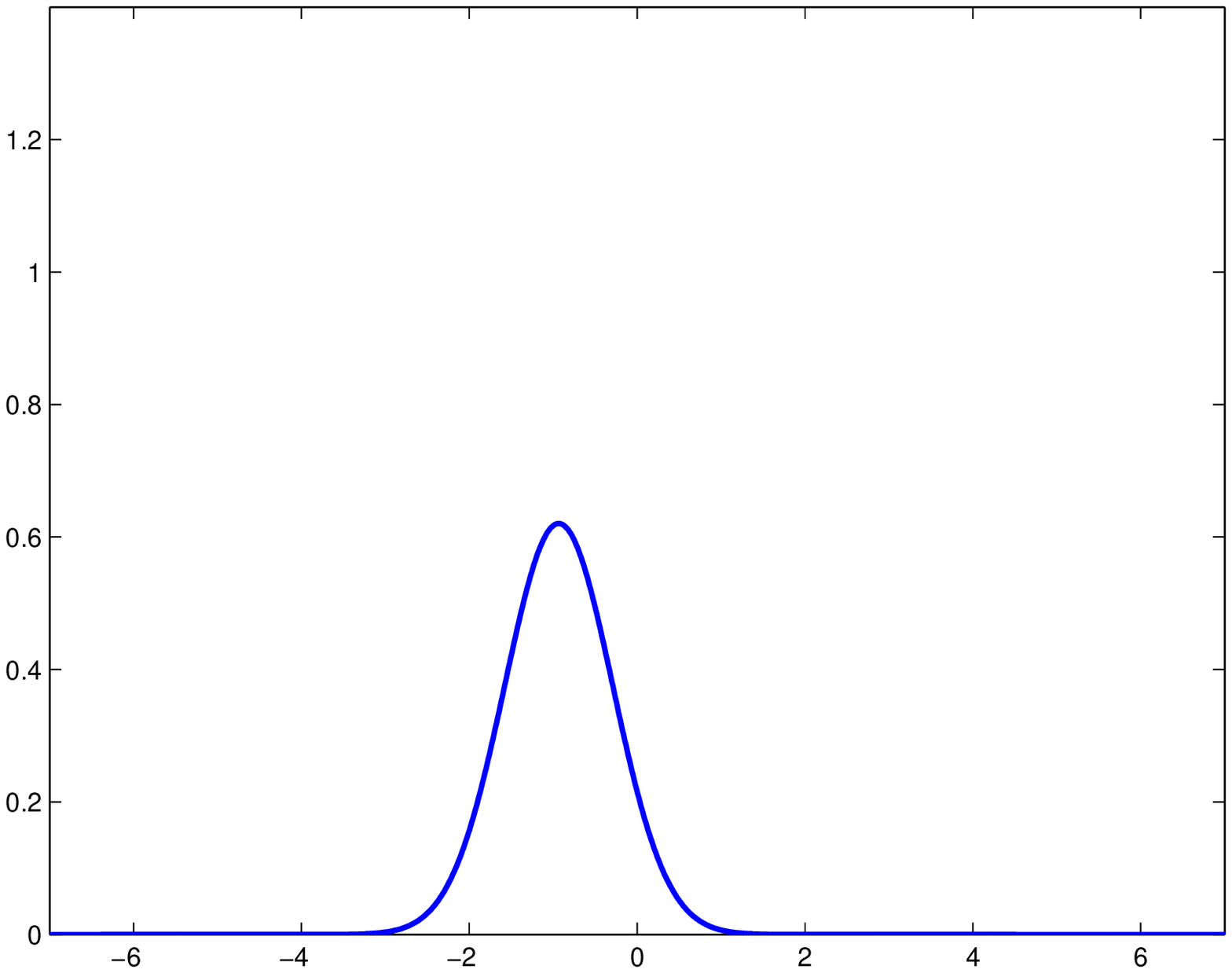} }
\subfigure[$\tilde{g}_{-0.5}$]{ \includegraphics[width=3cm]{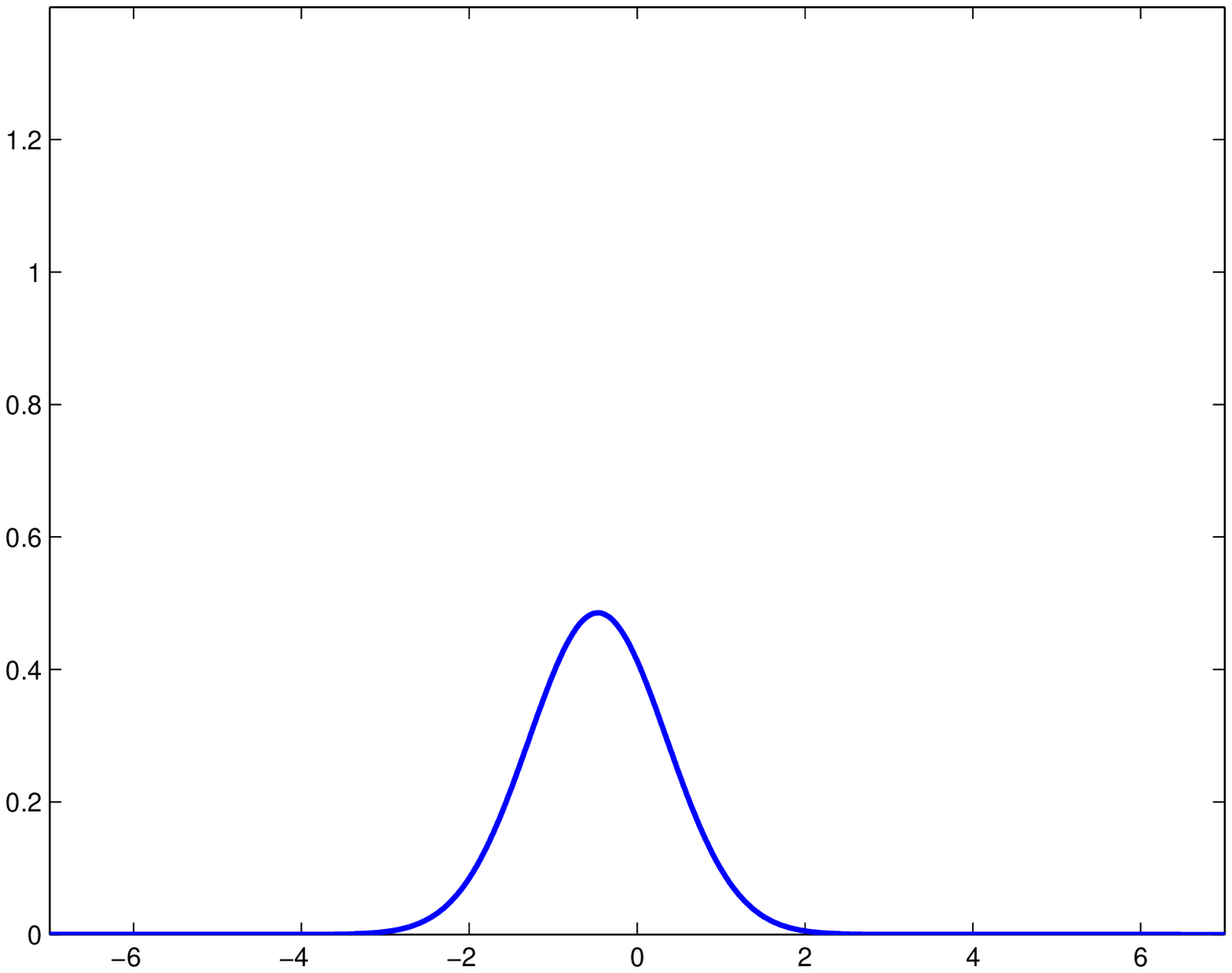} }

\subfigure[$\tilde{g}_{0.5}$]{ \includegraphics[width=3cm]{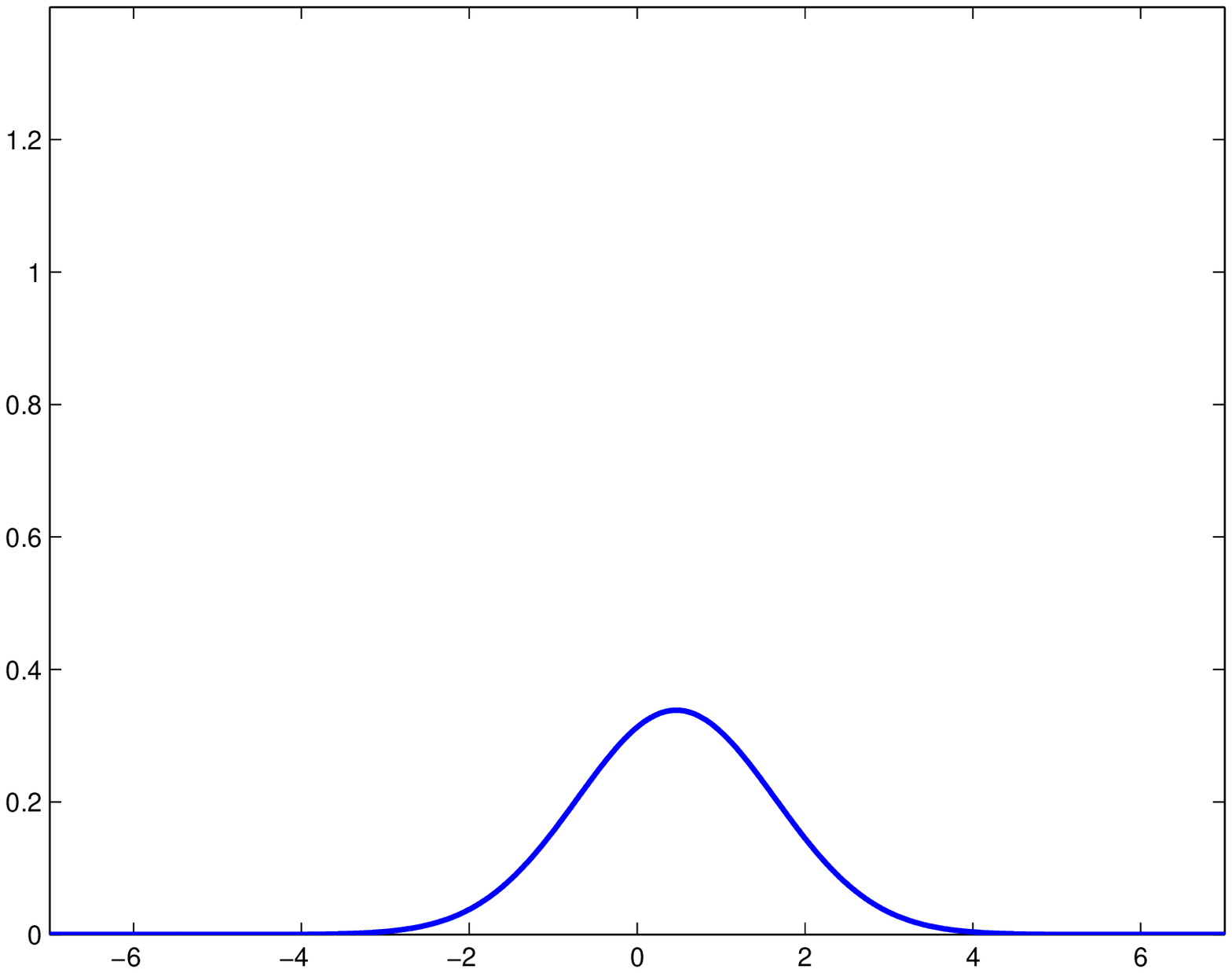} }
\subfigure[$\tilde{g}_{1}$]{ \includegraphics[width=3cm]{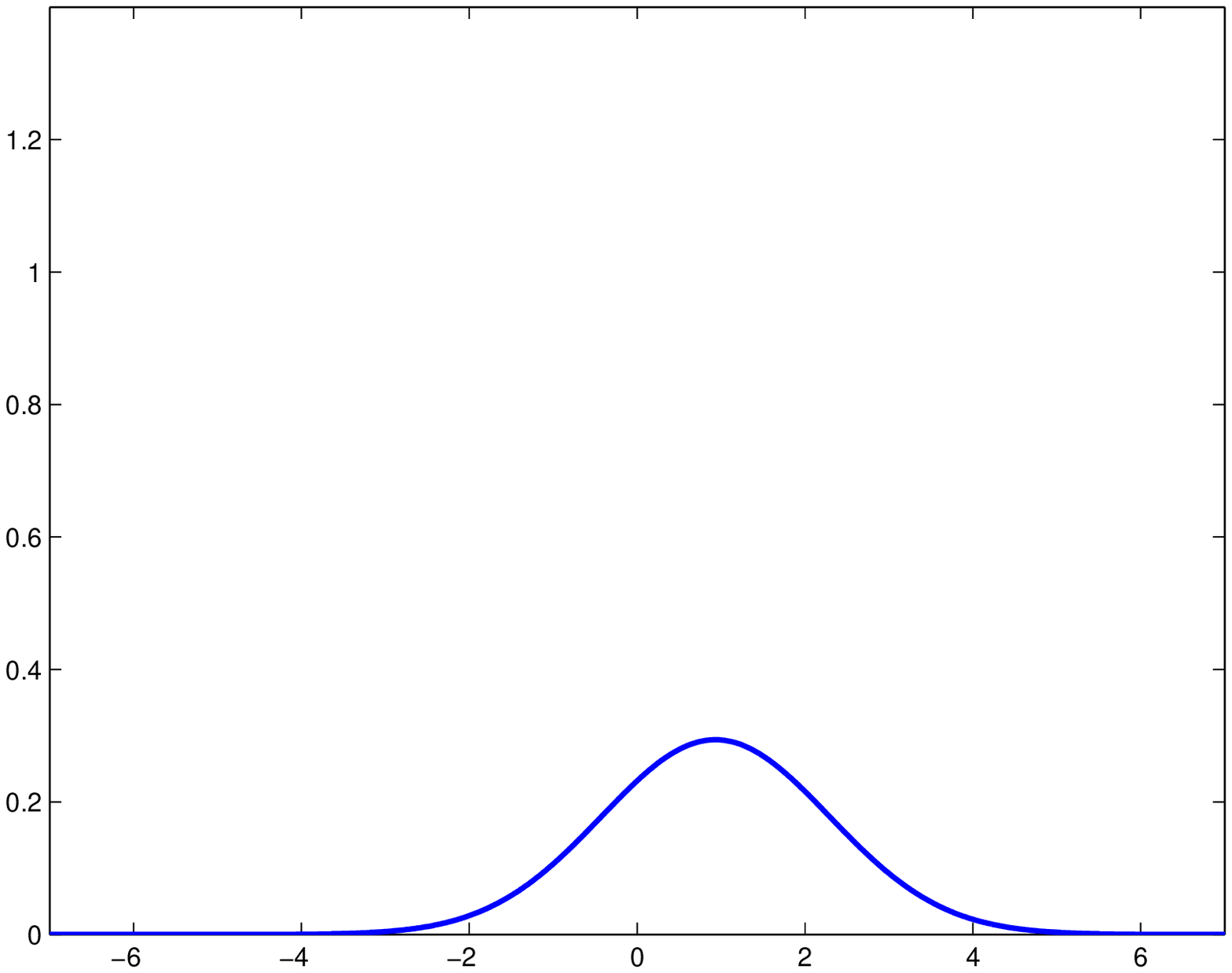} }
\subfigure[$\tilde{g}_{2}$]{ \includegraphics[width=3cm]{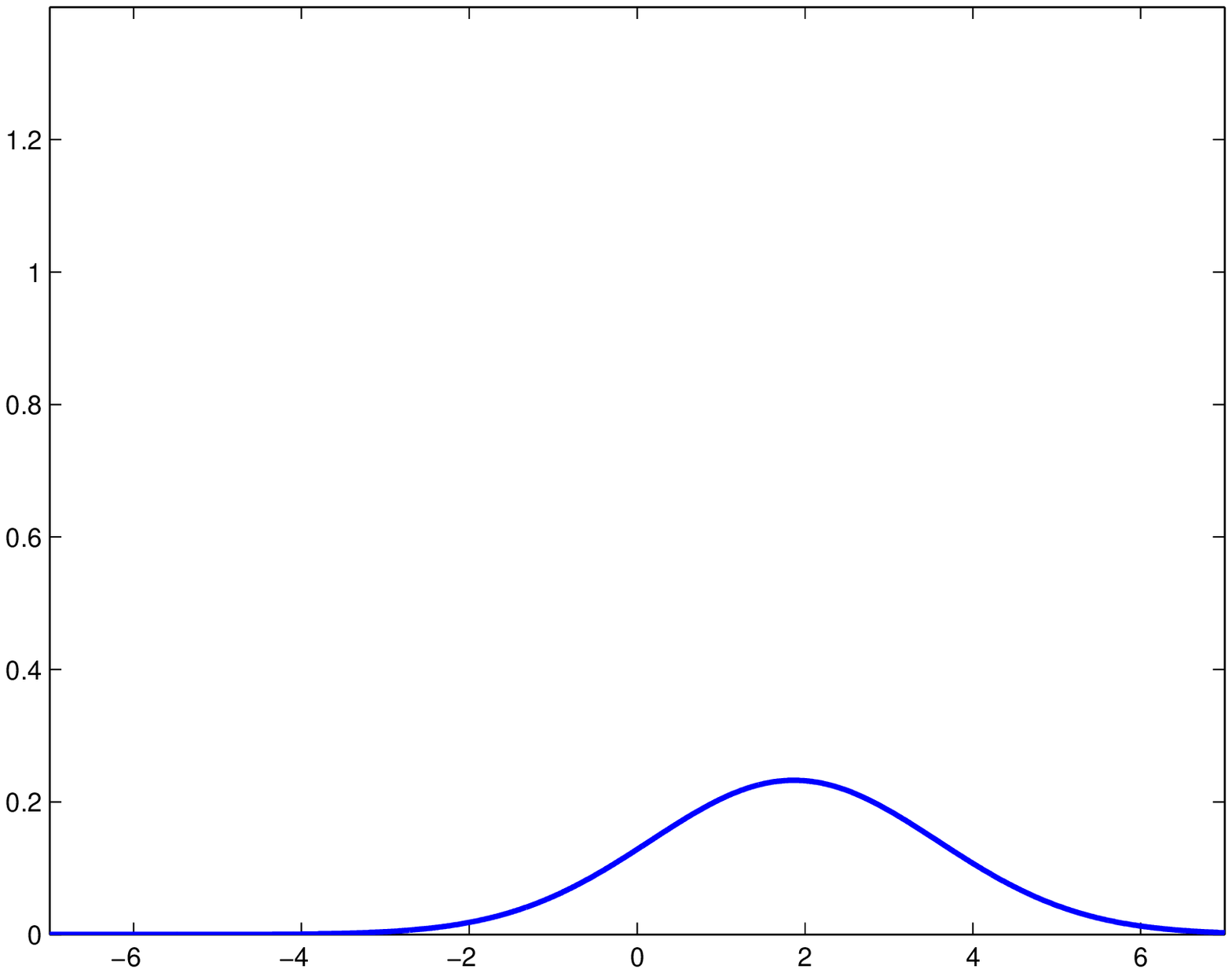} }

\caption{An example of GPCA of densities. First principal mode of geodesic variation $\tilde{g}_{t}$ in $W_2(\R)$, for $-2 \leq t \leq 2$,  of the densities displayed in Figure \ref{fig:exintro}; see \eqref{eq:gtilde}.}  \label{fig:geoPC}
\end{figure}

Our approach shares similarities with analogs of PCA for data belonging to a Riemannian manifold. There  is currently a growing interest in the statistical literature on the development of nonlinear analogs of PCA, for the analysis of data belonging to curved Riemannian manifolds; see e.g.\ \cite{geodesicPCA,citeulike:10314363,exactPGA}  and references therein. These methods, generally referred to as Principal Geodesic Analysis (PGA), extend the notion of classical PCA in Hilbert spaces. Nevertheless, as the Wasserstein space is not a Riemannian manifold, existing methods to perform a PGA cannot be directly applied to the setting of this paper.

The key property that we use to develop a notion of GPCA in the Wasserstein space is the isometry between $\WS$ and a closed convex subset of the Hilbert space of square-integrable functions $\LL$, with respect to an appropriate measure $\mu$; see Theorem \ref{theo:exp}. In this paper we thus  consider the statement of the general problem of PCA in a closed convex subset of a Hilbert space, which not only serves as basis for the analysis  of GPCA in $W_{2}(\Omega)$, but may also have interest in its own right, for further developments. For example, the notion of convex PCA introduced in this paper could be of interest, when probability distributions are characterized by observed parameters, belonging to some convex subset of an Euclidean space.

Throughout the paper, various notions from Riemannian geometry such as geodesic, tangent space, exponential and logarithmic maps, are used to illustrate the connection between our approach and PGA. However, the important issue here is not the geometry of $W_{2}(\Omega)$ but rather the use of these notions to state precisely the isometry between $W_{2}(\Omega)$ and a closed convex set of $\LL$. The GPCA in the Wasserstein space is then an application of these results. \\

The rest of the paper is organized as follows. In   Section \ref{sec:preliminaries}, we present the isometry between $\WS$ and a closed convex subset of $\LL$. We also recall basic definitions such as tangent space, geodesic, exponential and logarithmic maps in the Wasserstein space framework, having their analogs in the Riemannian setting. Section \ref{sec:cpca} is devoted to the definition and analysis of Convex PCA (CPCA) in a general framework. The main results on GPCA are gathered in Section \ref{sec:GPCA}. In Section \ref{sec:numerical} we describe some numerical aspects of GPCA on simulated examples, using simple statistical models. We also analyze a real dataset of population pyramids of 223 countries, for the year 2000. Section \ref{sec:consitency} is dedicated to the consistency of the empirical CPCA and GPCA, as the number of random data points tends to infinity.  We conclude the paper in Section \ref{sec:conclusion}, discussing the differences  between GPCA and existing PGA methods on Riemannian manifolds. We also mention potential extensions of this work. Finally, to make the paper self-contained, we collect in the Appendix some technical results about quantiles, geodesic spaces, Kuratowski convergence and $\Gamma$-convergence.

\section{Convexity of the Wasserstein space $\WS$ up to an isometry}\label{sec:preliminaries}

\subsection{The pseudo-Riemannian structure of $\WS$}\label{sec:Riemannian}

Let $\Omega$ be either the real line $\R$ or a closed interval of $\R$ and let $\WS$ be the set of probability measures over $(\Omega,{\cal B}(\Omega))$, with finite second moment, where ${\cal B}(\Omega)$ is the $\sigma$-algebra of Borel subsets of $\Omega$. %We also denote by $W_2^{ac}(\Omega) \subset \WS$ the set of probability measures that are absolutely continuous with respect to the Lebesgue measure $dx$ on $\Omega$.
For $\nu \in \WS$ and $T : \Omega \to \Omega$ (always assumed measurable),  we recall that the  push-forward measure $ T  \# \nu $ is defined by $( T  \# \nu  )(A) =  \nu\{x\in\Omega|T(x)\in A\}$,  for $A \in {\cal B}(\Omega)$. The cumulative distribution function (cdf) and the quantile function of $\nu$ are denoted respectively by $F_\nu$ and $F_\nu^{-}$; see Definition \ref{def:quantile}. If $\nu$ is absolutely continuous (a.c.), its density is denoted by $f_\nu$.

\begin{defin}
The quadratic Wasserstein distance $d_W$ in $\WS$ is defined by
\begin{equation*}
d_W^2(\nu_1,\nu_2) := \inf_{\pi \in {\Pi}(\nu_1,\nu_2)} \int |x-y|^2 d \pi(x,y),\; \nu_1, \nu_2\in\WS,
\end{equation*}
where $\Pi(\nu_1,\nu_2)$ is the set of probability measures on $\Omega \times \Omega$, with marginals $\nu_1$ and $\nu_2$.
\end{defin}
It can be shown that $\WS$ endowed with $d_W$ is a metric space, usually called Wasserstein space. For a detailed analysis of $\WS$, we refer to \cite{villani-topics}. In particular, the following formula, from Theorem 2.18 in \cite{villani-topics}, is important in the sequel:

\begin{equation}\label{eq:wdist1}
d_W^2(\nu_1,\nu_2) = \int_{0}^1 (F_{\nu_2}^{-}(y) - F_{\nu_1}^{-}(y))^2 dy.
\end{equation}
Also important is the following celebrated theorem (stated for measures on $\R^d$), from optimal transportation theory, due to Brenier \cite{Brenier91}.

\begin{theo}\label{theo:brenier}
Let $\mu, \nu \in W_2(\R^d)$ such that $\mu$ gives no mass to small sets, then
\begin{equation}\label{eq:wassestein}
d_W^2(\mu,\nu) = \inf_{T \in {\rm MP}(\mu,\nu) } \int_{\Omega} | T(x) - x |^2 d \mu(x),
\end{equation}
where ${\rm MP}(\mu,\nu)=\{ T: \R^d \to \R^d \; | \;  \nu = T  \# \mu \}$. Moreover, there exists $T^* \in  {\rm MP}(\mu,\nu)$ such that $d_W^2(\mu,\nu) = \int_{\Omega} | T^*(x) - x |^2 d \mu(x)$, characterized as the unique (up to a $\mu$-negligible set) element in ${\rm MP}(\mu,\nu)$ that can be represented, $\mu$-almost everywhere (a.e.), as the gradient of a convex function.
\end{theo}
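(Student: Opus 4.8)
The plan is to reduce the statement to the standard machinery of optimal transport on $\R^d$, recovering $T^*$ as the gradient of a convex function via Rockafellar's theorem. First observe that every $T \in {\rm MP}(\mu,\nu)$ induces a transport plan $\pi_T := ({\rm id},T)\#\mu \in \Pi(\mu,\nu)$ with $\int |T(x)-x|^2\,d\mu(x) = \int |x-y|^2\,d\pi_T(x,y)$, so the infimum in \eqref{eq:wassestein} is bounded below by $d_W^2(\mu,\nu)$; it therefore suffices to exhibit a single map attaining the Kantorovich value $d_W^2(\mu,\nu)$. Existence of an optimal plan $\pi^*$ follows from the weak compactness of $\Pi(\mu,\nu)$ (its marginals being fixed, hence tight on the Polish space $\R^d$) together with the weak lower semicontinuity of $\pi \mapsto \int |x-y|^2\,d\pi$.

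Since $\mu$ and $\nu$ have finite second moments, expanding $|x-y|^2 = |x|^2 - 2\langle x,y\rangle + |y|^2$ shows that minimizing $\int |x-y|^2\,d\pi$ over $\Pi(\mu,\nu)$ is equivalent to maximizing $\int \langle x,y\rangle\,d\pi$. A local rerouting argument (a finite cyclic configuration in the support violating monotonicity would allow a small rearrangement of mass strictly increasing $\int\langle x,y\rangle\,d\pi$) then gives that $\Gamma := {\rm supp}(\pi^*)$ is cyclically monotone: $\sum_{i=1}^{N}\langle x_{i+1}-x_i,\,y_i\rangle \le 0$ for every finite family $(x_i,y_i)\in\Gamma$ (indices mod $N$). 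By Rockafellar's theorem, such a $\Gamma$ lies in the graph of the subdifferential $\partial\varphi$ of a proper lower semicontinuous convex function $\varphi:\R^d\to\R\cup\{+\infty\}$, given explicitly as a supremum over finite chains in $\Gamma$.

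A finite convex function on $\R^d$ is locally Lipschitz, hence differentiable off a set contained in a countable union of Lipschitz hypersurfaces, which is a ``small set''. As $\mu$ gives no mass to such sets (and $\mu$ charges only the interior of $\mathrm{dom}\,\varphi$), $\varphi$ is differentiable $\mu$-a.e., and at every such point $x$ one has $\partial\varphi(x)=\{\nabla\varphi(x)\}$; hence for $\pi^*$-a.e. $(x,y)$, $y=\nabla\varphi(x)$, so $\pi^* = ({\rm id},\nabla\varphi)\#\mu$. Setting $T^*:=\nabla\varphi$, its pushforward $T^*\#\mu$ equals the second marginal $\nu$, so $T^*\in{\rm MP}(\mu,\nu)$, and $\int |T^*(x)-x|^2\,d\mu = \int |x-y|^2\,d\pi^* = d_W^2(\mu,\nu)$, which together with the first paragraph proves \eqref{eq:wassestein} and its attainment. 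For the characterization: if $\tilde T=\nabla\tilde\varphi\in{\rm MP}(\mu,\nu)$ with $\tilde\varphi$ convex, then $({\rm id},\tilde T)\#\mu$ is supported on the cyclically monotone set $\mathrm{graph}(\partial\tilde\varphi)$, hence is optimal; then $\tfrac12\big(({\rm id},T^*)\#\mu+({\rm id},\tilde T)\#\mu\big)$ is optimal, its support is cyclically monotone and thus contained in a single subdifferential graph, which is $\mu$-a.e. single-valued, forcing $\tilde T = T^*$ $\mu$-a.e.

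The delicate point is the differentiability step: one must know that the non-differentiability set of a convex function on $\R^d$ is small in precisely the sense hypothesized for $\mu$, so that differentiability holds $\mu$-a.e. rather than merely Lebesgue-a.e., and one must check that $\mu$ is concentrated on the relative interior of $\mathrm{dom}\,\varphi$ so that $\nabla\varphi$ is genuinely defined $\mu$-a.e. Establishing cyclical monotonicity of the whole support from the absence of profitable finite reroutings, and the converse fact that plans on cyclically monotone sets are optimal, also require careful (but standard) measure-theoretic arguments.
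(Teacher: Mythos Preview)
The paper does not prove this theorem: it is stated as a classical result and attributed to Brenier, with a citation to \cite{Brenier91} and, implicitly, to the exposition in \cite{villani-topics}. There is therefore no ``paper's own proof'' to compare against; the authors treat Theorem~\ref{theo:brenier} as background.

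Your proposal is a correct outline of the standard modern proof (essentially the Gangbo--McCann route via cyclical monotonicity and Rockafellar's theorem), and the steps you flag as delicate are indeed the right ones. A couple of minor remarks: the passage from ``$\mu$-a.e.\ $x$ has $\partial\varphi(x)\neq\emptyset$'' to ``$\mu$ is concentrated on the interior of $\mathrm{dom}\,\varphi$'' uses that the boundary of a convex set is itself a small set, which you note but do not spell out; and the statement that the non-differentiability set of a convex function lies in a countable union of Lipschitz hypersurfaces is a theorem of Zaj\'i\v{c}ek that deserves a citation if you want the argument to stand on its own. The uniqueness argument via averaging two optimal plans is clean and correct.
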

Since we are in dimension $d=1$, $T^*$ in Theorem \ref{theo:brenier}, being the gradient of a convex function, is increasing. Observe also that $T^*$  may possibly be defined and be increasing only in a set of $\mu$ measure 1, but still $T^*\# \mu$ makes sense; see \cite{villani-topics}, page 67. Finally note that in $\R$ it suffices to assume $\mu$ atomless, that is, $F_\mu$ continuous.
Under the above stated conditions it is well known that $T^{*} = F_\nu^{-} \circ F_\mu$ and
\begin{equation}\label{eq:wdist2}
d_W^2(\mu,\nu)=\int_\Omega(F_\nu^{-} \circ F_\mu(x)-x)^2d\mu(x),%=\int_0^1(F_\nu^{-}(y) -F^-_\mu(y))^2dy,
\end{equation}
with $F_\nu^-\circ F_\mu$ defined on the full $\mu$-measure set $A_\mu:=\{x\in\Omega|F_\mu(x)\in(0,1)\}$.

The $\WS$ space has a formal Riemannian structure described, for example, in \cite{ambrosio2004gradient}.  We provide some basic definitions, having their analogs in the Riemannian manifold setting.

From here onwards we consider that $\mu\in\WS$ is a reference measure, with continuous cdf $F_\mu$. Following \cite{ambrosio2004gradient}, we define the  tangent space at $\mu$ as the Hilbert space $\LL$ of real-valued, $\mu$-square-integrable functions on $\Omega$, equipped with the standard inner product $\langle\cdot,\cdot\rangle_\mu$ and norm $\| \cdot \|_{\mu}$. Furthermore, we define  the  exponential and the logarithmic maps at $\mu$, as follows.

\begin{defin} \label{def:explog}
Let ${\rm id}$ be the identity on $\Omega$. The exponential $\exp_{\mu}: \LL \to \WS$ and logarithmic $\log_{\mu}: \WS \to \LL$ maps are defined respectively as
\begin{equation}\label{eq:exp}
 \exp_{\mu}(v)=(v+{\rm id}) \# \mu   \quad \text{ and } \quad \log_{\mu}(\nu)=F_\nu^{-} \circ F_\mu - {\rm id}.
\end{equation}
\end{defin}
\begin{rem}
(a) $ \exp_{\mu}(v) \in \WS$, for any $v \in \LL$, since
\begin{equation*}
\int x^2 d \exp_{\mu}(v)(x)  = \int (x + v(x))^2 d \mu(x) \leq 2 \int x ^2 d \mu(x) + \int v^2(x) d \mu(x) < + \infty.
\end{equation*}
(b) By Theorem \ref{theo:brenier} and \eqref{eq:wdist2}, $\log_\mu(\nu)$ is unique ($\mu$-a.e.) and belongs to $\LL $ since $\| \log_{\mu}(\nu)\|^2_{\mu} = d_W^2(\mu,\nu) < + \infty$, for all $\nu \in  \WS$. But, as commented after \eqref{eq:wdist2},  $ \log_{\mu}(\nu)$ is only defined on $A_\mu$. Finally, the continuity of $F_\mu$ implies $\exp_\mu(\log_\mu(\nu))=\nu$.
\end{rem}
\begin{exe}
We illustrate the notions of exponential and logarithmic maps, using again the location-scale model. For $\mu_0 \in W_2(\R)$ a.c.\ and $(a,b)  \in (0,\infty) \times \R$, let $\nu^{(a,b)}$ be the  probability measure, with cdf and density respectively given by
\begin{equation}
F^{(a,b)}(x) := F_{\mu_0}\left( (x-b)/a\right),\quad f^{(a,b)}(x) :=  f_{\mu_0}\left( (x-b)/a\right)/a,\; x \in \R. \label{eq:locsclmodel}
\end{equation}
From \eqref{eq:exp},
$
\log_{\mu}(\nu^{(a,b)})(x) = [F^{(a,b)}]^{-} \circ F_\mu(x) - x$ and
$\log_{\mu_0}(\nu^{(a,b)})(x) = (a - 1) x + b$.
Therefore, letting $v(x) = (a - 1) x + b$, we have
\begin{equation}\label{eq:expnuab}
\exp_{\mu_0}(v) = \nu^{(a,b)}.
\end{equation}
\end{exe}
In the setting of Riemannian manifolds, the exponential map at a given point is a local homeomorphism from a neighborhood of the origin in the tangent space to the manifold. However, this is not the case for $\exp_{\mu}$ defined above, as it is possible to find two arbitrarily small functions in $\LL$, with equal exponentials, see e.g.\ \cite{ambrosio2004gradient}. On the other hand, we show that $\exp_{\mu}$ is an isometry when  restricted to a specific set of functions defined below.

\subsection{Isometry between $\WS$ and a closed convex subset of $\LL$}
We consider below the image of $\WS$ under the logarithmic map, denoted $V_\mu(\Omega)$, which is shown to be a closed convex subset of $\LL$. We also prove that $\exp_\mu$, restricted to $\VV$, is an  isometry. These are crucial properties needed to define and to compute the GPCA in $\WS$.

\begin{theo}\label{theo:exp}
%We have $\exp_{\mu}(\VV)=\WS$ and $\exp_{\mu}(V_{\mu}^s(\Omega))=W_2^{ac}(\Omega)$. Also,
The exponential map $\exp_{\mu}$ restricted to $\VV:=\log_\mu(\WS)$ is an isometric homeomorphism, with inverse $\log_{\mu}$.
\end{theo}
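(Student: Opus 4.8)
The plan is to establish three things: (i) $\VV:=\log_\mu(\WS)$ is a closed convex subset of $\LL$; (ii) $\exp_\mu$ restricted to $\VV$ is a bijection onto $\WS$ with inverse $\log_\mu$; and (iii) both maps are isometric, hence homeomorphisms. The isometry part is essentially free: for $\nu_1,\nu_2\in\WS$, formula \eqref{eq:wdist1} gives $d_W^2(\nu_1,\nu_2)=\int_0^1(F_{\nu_2}^-(y)-F_{\nu_1}^-(y))^2\,dy$, and a change of variables $y=F_\mu(x)$ (valid because $F_\mu$ is continuous, so $F_\mu\#\mu$ is the uniform measure on $[0,1]$ restricted to $A_\mu$, which has full $\mu$-measure) turns this into $\int_\Omega\big((F_{\nu_2}^-\circ F_\mu)(x)-(F_{\nu_1}^-\circ F_\mu)(x)\big)^2\,d\mu(x)=\|\log_\mu(\nu_2)-\log_\mu(\nu_1)\|_\mu^2$. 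So $\log_\mu:\WS\to\LL$ is a distance-preserving map onto $\VV$, and since $\exp_\mu\circ\log_\mu=\mathrm{id}_{\WS}$ by the Remark following Definition \ref{def:explog} (using continuity of $F_\mu$), $\exp_\mu|_\VV$ is the two-sided inverse of $\log_\mu$, hence also an isometry. Isometries between metric spaces are homeomorphisms, which finishes (ii) and (iii) once (i) is in hand.

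The substantive part is (i): characterizing $\VV$ and proving closedness and convexity. First I would identify $\VV$ explicitly. Since $\log_\mu(\nu)=F_\nu^-\circ F_\mu-\mathrm{id}$ and $F_\nu^-$ is nondecreasing and left-continuous, the function $T_\nu:=F_\nu^-\circ F_\mu$ is nondecreasing on $A_\mu$; conversely, given any nondecreasing $T\in\LL$ (defined $\mu$-a.e.), one checks $T\#\mu\in\WS$ and $T=F_{T\#\mu}^-\circ F_\mu$ $\mu$-a.e.\ — this is the one-dimensional Brenier/monotone-rearrangement fact already quoted via Theorem \ref{theo:brenier} and equation \eqref{eq:wdist2}. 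Hence $\VV=\{T-\mathrm{id}: T\in\LL,\ T\ \text{nondecreasing }\mu\text{-a.e.}\}$. Convexity is then immediate: a convex combination of nondecreasing functions is nondecreasing, and $\LL$ is a vector space, so $\lambda v_1+(1-\lambda)v_2\in\VV$ whenever $v_1,v_2\in\VV$ and $\lambda\in[0,1]$.

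For closedness, suppose $v_n=T_n-\mathrm{id}\to v$ in $\LL$, with each $T_n$ nondecreasing. Then $T_n\to T:=v+\mathrm{id}$ in $\LL$, so along a subsequence $T_n\to T$ $\mu$-a.e.; the $\mu$-a.e.\ pointwise limit of nondecreasing functions is nondecreasing (on the $\mu$-full set where convergence holds and which, since $F_\mu$ is continuous, is dense enough in the relevant order sense — one can pass to a monotone representative). Thus $T$ is (a.e.\ equal to) a nondecreasing function in $\LL$, so $v=T-\mathrm{id}\in\VV$. The delicate point, and the one I expect to be the main obstacle, is handling the "$\mu$-a.e.\ defined, nondecreasing only on a full-measure set" subtleties: one must argue carefully that a function nondecreasing merely on a set of full $\mu$-measure admits a genuinely nondecreasing representative and that the identifications $T=F_{T\#\mu}^-\circ F_\mu$ hold $\mu$-a.e.\ — this is where the continuity of $F_\mu$ (equivalently, $\mu$ atomless) and the change-of-variables justification via $F_\mu\#\mu=\mathrm{Leb}|_{[0,1]}$ on $A_\mu$ are used repeatedly, and where appeal to the Appendix results on quantiles is natural. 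Everything else is bookkeeping.
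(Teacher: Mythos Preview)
Your argument for the isometry and the bijectivity is correct and matches the paper's proof almost verbatim: the paper also uses \eqref{eq:wdist1} together with the change of variables $y=F_\mu(x)$ to get $d_W^2(\nu_1,\nu_2)=\|\log_\mu(\nu_1)-\log_\mu(\nu_2)\|_\mu^2$, and the relation $\exp_\mu\circ\log_\mu=\mathrm{id}_{\WS}$ (equivalently, the Brenier uniqueness of the monotone pushforward) to identify $\exp_\mu|_\VV$ as the inverse of $\log_\mu$.

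However, your item (i) --- closedness and convexity of $\VV$ --- is \emph{not part of the theorem} and is not used anywhere in the proof of (ii) and (iii). Your own argument for (ii) and (iii) makes no appeal to (i), so the clause ``which finishes (ii) and (iii) once (i) is in hand'' is misleading: (ii) and (iii) are already finished before you ever touch (i). In the paper, closedness and convexity of $\VV$ are stated and proved separately as Proposition~\ref{prop:clcv}, immediately after the theorem, using the appendix result on quantile functions (Proposition~\ref{prop:closedconvexquantile}). Your direct arguments for convexity and closedness are fine and are essentially what the appendix proposition unpacks, but they belong to a different statement.
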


\begin{proof}
Let  $\nu \in \WS$ then, from Theorem \ref{theo:brenier}, $F_\nu^{-} \circ F_\mu$ is the unique $\mu$-a.e. increasing map (see Definition \ref{def:incr}), such that $(F_\nu^{-} \circ F_\mu) \# \mu = \nu$. In other words, $v := \log_{\mu}(\nu) =F_\nu^{-} \circ F_\mu - {\rm id}$ is the unique element in $\VV$ such that  $\exp_{\mu}(v) = \nu$. The isometry property follows from \eqref{eq:wdist1} because
$d_W^2(\nu_1,\nu_2)=\int_{0}^1 (F_{\nu_2}^{-}(y) - F_{\nu_1}^{-}(y))^2 dy=\|F_{\nu_1}^-\circ F_\mu-F_{\nu_2}^-\circ F_\mu\|_\mu^2=\|  \log_{\mu}(\nu_1) -  \log_{\mu}(\nu_2) \|^2_{\mu}$, for any $\nu_1,\nu_2 \in \WS$.
\end{proof}

\begin{prop}\label{prop:clcv}
The set $\VV:=\log_\mu(\WS)$ is closed and convex in $\LL$.
\end{prop}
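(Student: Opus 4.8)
The plan is to characterize $\VV = \log_\mu(\WS)$ explicitly as the set of functions $v \in \LL$ such that $v + {\rm id}$ is $\mu$-a.e.\ equal to a nondecreasing function on $\Omega$, and then to verify closedness and convexity of this set directly. The first reduction: by Theorem \ref{theo:brenier} (via the one-dimensional remarks following it), $\log_\mu(\nu) = F_\nu^- \circ F_\mu - {\rm id}$, and $F_\nu^- \circ F_\mu$ is $\mu$-a.e.\ nondecreasing; conversely, if $T \in \LL$ is $\mu$-a.e.\ nondecreasing, then $T \# \mu =: \nu$ lies in $\WS$ (finite second moment since $T \in \LL$) and, by the uniqueness part of Theorem \ref{theo:brenier}, $T = F_\nu^- \circ F_\mu$ $\mu$-a.e., so $T - {\rm id} \in \VV$. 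Hence $\VV = \{ v \in \LL : v + {\rm id} \text{ is $\mu$-a.e.\ nondecreasing on } \Omega \}$. I will state this characterization as the crux of the argument.

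Given the characterization, convexity is immediate: if $v_0, v_1 \in \VV$ and $s \in [0,1]$, then $(1-s)v_0 + s v_1 + {\rm id} = (1-s)(v_0 + {\rm id}) + s(v_1 + {\rm id})$ is a convex combination of two $\mu$-a.e.\ nondecreasing functions, hence $\mu$-a.e.\ nondecreasing, so $(1-s)v_0 + s v_1 \in \VV$. The only point to be careful about is that ``$\mu$-a.e.\ nondecreasing'' (in the sense of Definition \ref{def:incr}, referenced earlier) is preserved under convex combination, i.e.\ that if $g_0, g_1$ are each nondecreasing off a $\mu$-null set, then the same holds for $(1-s)g_0 + s g_1$ off the union of those null sets; this is routine.

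For closedness, suppose $v_n \to v$ in $\LL$ with $v_n \in \VV$, and set $T_n := v_n + {\rm id}$, $T := v + {\rm id}$, so $T_n \to T$ in $\LL$. Passing to a subsequence, $T_n \to T$ $\mu$-a.e., and since each $T_n$ is $\mu$-a.e.\ nondecreasing, the pointwise a.e.\ limit $T$ is $\mu$-a.e.\ nondecreasing as well (on the set of full $\mu$-measure where all $T_n$ are nondecreasing and where convergence holds, the limit preserves the inequality $T(x) \le T(y)$ for $x < y$). Therefore $v = T - {\rm id} \in \VV$, which proves $\VV$ is closed.

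The main obstacle is the careful handling of the ``$\mu$-a.e.\ nondecreasing'' notion: one must make precise (using Definition \ref{def:incr}) that a function agreeing $\mu$-a.e.\ with a nondecreasing function admits a genuinely nondecreasing representative on a set of full $\mu$-measure, and that this property is stable under both convex combinations and a.e.\ pointwise limits along subsequences. Once that bookkeeping is in place — together with the equivalence ``$T \in \LL$ is $\mu$-a.e.\ nondecreasing $\iff T = F_{T\#\mu}^- \circ F_\mu$ $\mu$-a.e.'', which is exactly the content of Brenier's theorem in dimension one — both closedness and convexity fall out with no further analysis.
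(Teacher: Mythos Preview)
Your argument is essentially correct, with one omission: the characterization of $\VV$ you state is accurate only when $\Omega = \R$. When $\Omega$ is a closed bounded interval, $T := v + {\rm id}$ must additionally take values in $\Omega$ ($\mu$-a.e.); otherwise $T\#\mu$ need not be supported in $\Omega$ and hence fails to lie in $\WS$. This is precisely the content of Remark~\ref{rem:VV}. The fix is routine --- the constraint ``$T(x)\in\Omega$ $\mu$-a.e.'' is preserved under convex combinations (since $\Omega$ is an interval) and under $\mu$-a.e.\ pointwise limits along a subsequence (since $\Omega$ is closed) --- so your closedness and convexity arguments carry over unchanged once you include it.

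Regarding the comparison with the paper: the paper does not argue directly in $\LL$ via the characterization of Remark~\ref{rem:VV}. Instead it uses the continuity of $F_\mu$ to transfer the problem to $L^2(0,1)$, reducing both assertions to the separately proved fact that the set of quantile functions $\{F_\nu^- : \nu\in\WS\}$ is closed and convex in $L^2(0,1)$ (Proposition~\ref{prop:closedconvexquantile}). The analytic core is identical to yours --- convex combinations of nondecreasing functions are nondecreasing, and $L^2$-limits of nondecreasing functions are a.e.\ nondecreasing along a subsequence --- but the paper packages it in the quantile-function language, while you work directly with $v+{\rm id}$ in $\LL$. Your route is slightly more self-contained; the paper's has the modest advantage of isolating a reusable statement about quantile functions.
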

\begin{proof} Let $(\nu_n)$ be a sequence in $\WS$, such that $\log_\mu(\nu_n)\to v \in\LL$. Then $F_{\nu_n}^-\circ F_\mu\to v+{\rm id}$ and, because $F_\mu$ is continuous, we have $F_{\nu_n}^-\to w\in L^2(0,1)$ (the space of square-integrable functions with respect to the Lebesgue measure on $(0,1)$). From Proposition \ref{prop:closedconvexquantile}, there exists $\nu\in\WS$ such that $w=F_\nu^{-1}$ a.e. and so, $F_{\nu_n}^-\circ F_\mu\to F_{\nu}^-\circ F_\mu$ in $\LL$, that is, $\log_\mu(\nu_n)\to \log_\mu(\nu)\in\VV$.
Convexity follows also from Proposition \ref{prop:closedconvexquantile} because, for $\lambda \in [0,1]$, there exists $\nu_\lambda\in \WS$ such that
%\begin{equation*}
$\lambda \log_\mu(\nu_1)+(1-\lambda)\log_\mu(\nu_2)=(\lambda F_{\nu_1}^-+(1-\lambda)F_{\nu_2}^-)\circ F_\mu-{\rm id}=F_{\nu_\lambda}^-\circ F_\mu-{\rm id}\in\VV.$
%\end{equation*}
\end{proof}
\begin{rem}\label{rem:VV}
The space $\VV$ can be characterized as the set of functions $v \in \LL$ such that $T:={\rm id}+v$ is $\mu$-a.e. increasing (see Definition \ref{def:incr}) and that $T(x) \in \Omega$, for $x \in \Omega$.
\end{rem}

\subsection{Geodesics in $\WS$}\label{sec:Wgeo}

 A general overview of geodesics in a metric space is given in the Appendix.  In this section, we consider  the notion of geodesic in $\WS$, as given in Definition \ref{def:geodesic}. A direct consequence of Corollary \ref{coro:gamma}, Proposition \ref{prop:clcv} and Theorem \ref{theo:exp} is that geodesics in $\WS$ are exactly the image under $\exp_{\mu}$ of straight lines in $\VV$. In particular, given two measures in $\WS$, there exists a unique shortest path connecting them. This property is stated in the following lemma.

\begin{lem}\label{lem:gamma}
Let $\gamma : [0,1] \to \WS$ be a curve and let $v_0 := \log_{\mu}(\gamma(0))$, $v_1 := \log_{\mu}(\gamma(1))$. Then $\gamma$ is a geodesic if and only if $\gamma(t) = \exp_{\mu}((1-t)v_0+tv_1)$, for all $t \in [0,1]$.
\end{lem}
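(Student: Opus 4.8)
The plan is to exploit the isometry of Theorem~\ref{theo:exp} to transport the problem entirely into the Hilbert space $\LL$, where geodesics are literally straight-line segments. First I would recall what ``geodesic'' means in $\WS$ via Definition~\ref{def:geodesic}: a curve $\gamma:[0,1]\to\WS$ is a (constant-speed, minimizing) geodesic iff $d_W(\gamma(s),\gamma(t)) = |s-t|\, d_W(\gamma(0),\gamma(1))$ for all $s,t\in[0,1]$. Since $\exp_\mu$ restricted to $\VV$ is an isometric homeomorphism with inverse $\log_\mu$ (Theorem~\ref{theo:exp}), the curve $\gamma$ is a geodesic in $\WS$ if and only if the curve $c:[0,1]\to\VV\subset\LL$ defined by $c(t):=\log_\mu(\gamma(t))$ is a geodesic in $\VV$ with the metric inherited from $\LL$; that is, iff $\|c(s)-c(t)\|_\mu = |s-t|\,\|c(0)-c(1)\|_\mu$ for all $s,t$. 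Note that $c(0)=v_0$ and $c(1)=v_1$ by definition.

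The core of the argument is then the elementary Hilbert-space fact: a curve $c:[0,1]\to H$ with endpoints $v_0,v_1$ satisfies $\|c(s)-c(t)\| = |s-t|\,\|v_0-v_1\|$ for all $s,t$ if and only if $c(t) = (1-t)v_0 + tv_1$. One direction is a direct computation. For the converse, I would argue that for each fixed $t$, taking $s=0$ and $s=1$ gives $\|c(t)-v_0\| = t\|v_0-v_1\|$ and $\|c(t)-v_1\| = (1-t)\|v_0-v_1\|$, so equality holds in the triangle inequality $\|v_0-v_1\| \le \|c(t)-v_0\| + \|c(t)-v_1\|$; in a strictly convex normed space (in particular a Hilbert space) this forces $c(t)$ to lie on the segment $[v_0,v_1]$, and the distance constraint then pins down the parameter, giving $c(t)=(1-t)v_0+tv_1$. (Alternatively, one can cite Corollary~\ref{coro:gamma} together with Proposition~\ref{prop:clcv}, which already packages the statement that geodesics of a closed convex subset of a Hilbert space are exactly the straight segments — this is essentially the remark made just before the lemma.)

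Combining the two equivalences, $\gamma$ is a geodesic in $\WS$ iff $\log_\mu(\gamma(t)) = (1-t)v_0 + tv_1$ for all $t$, and applying $\exp_\mu$ (valid since the right-hand side lies in $\VV$, which is convex by Proposition~\ref{prop:clcv}, and $\exp_\mu\circ\log_\mu = \mathrm{id}$ on $\WS$) this is equivalent to $\gamma(t) = \exp_\mu((1-t)v_0 + tv_1)$. That completes the proof.

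I do not expect a serious obstacle here; the statement is genuinely a corollary of Theorem~\ref{theo:exp} and Proposition~\ref{prop:clcv}, as the text preceding the lemma announces. The only point requiring a little care is making sure the intermediate points $(1-t)v_0+tv_1$ genuinely lie in $\VV$ so that $\exp_\mu$ can be applied to them and the isometry is available along the whole curve — this is exactly the convexity of $\VV$, so it is already in hand. A second minor point is to be explicit about which formulation of ``geodesic'' from the Appendix (Definition~\ref{def:geodesic}) is being used, since the constant-speed/minimizing distinction is what makes the rigidity argument go through; with the isometric-embedding picture in place, no estimates beyond the triangle inequality are needed.
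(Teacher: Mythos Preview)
Your proposal is correct and follows exactly the route the paper indicates: the paper does not give a separate proof of this lemma but declares it a direct consequence of Corollary~\ref{coro:gamma}, Proposition~\ref{prop:clcv} and Theorem~\ref{theo:exp}, and you have simply spelled out those three ingredients (isometry onto $\VV$, convexity of $\VV$, and the straight-line characterization of geodesics in a Hilbert space). Your explicit Hilbert-space rigidity argument via equality in the triangle inequality is precisely the content of Lemma~\ref{lem:gammaH}/Corollary~\ref{coro:gamma}, so there is no methodological difference to report.
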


\begin{exe} \label{ex:geoW2}
To illustrate Lemma \ref{lem:gamma}, let us consider again the location-scale  model \eqref{eq:locsclmodel}. Then one has
%\begin{equation*}
$v_0(x) := \log_{\mu_0}(\nu^{(1,0)}) = 0 \mbox{ and } v_1(x) := \log_{\mu_0}(\nu^{(a,b)}) = (a - 1) x + b, \;  x \in \R$.
%\end{equation*}
From Lemma \ref{lem:gamma}, the curve $\gamma : [0,1] \to \WS$, defined by
\begin{equation*}
\gamma(t) = \exp_{\mu_0}((1-t)v_0+tv_1) = \exp_{\mu_0}( t(a - 1) x + tb ) = \nu^{(a_t , b_t )} , \; t \in [0,1],
\end{equation*}
is a geodesic such that $\gamma(0)=\mu_0 = \nu^{(1,0)}$ and $\gamma(1)= \nu^{(a,b)}$, where $a_t = 1-t + ta$ and $b_t = t b$. Moreover, for each time $ t \in [0,1]$, the measure $\gamma(t)$ admits the density
\begin{equation}
f^{(a_t , b_t )}(x) = a_t^{-1} f_{0}\left( a_t^{-1}(x - b_t)\right), \; x \in \R. \label{eq:bfunt}
\end{equation}
In Figure \ref{fig:exgeodesic} we display the densities $f^{(a_t , b_t )}$ for some values of $t \in [0,1]$, with $\mu_0$ the standard Gaussian measure, $a=0.5$ and $b = 2$.
\end{exe}

\begin{figure}[h!]
\centering

\subfigure[$f^{(a_t , b_t )}$ for $t = 0$]{ \includegraphics[width=3cm]{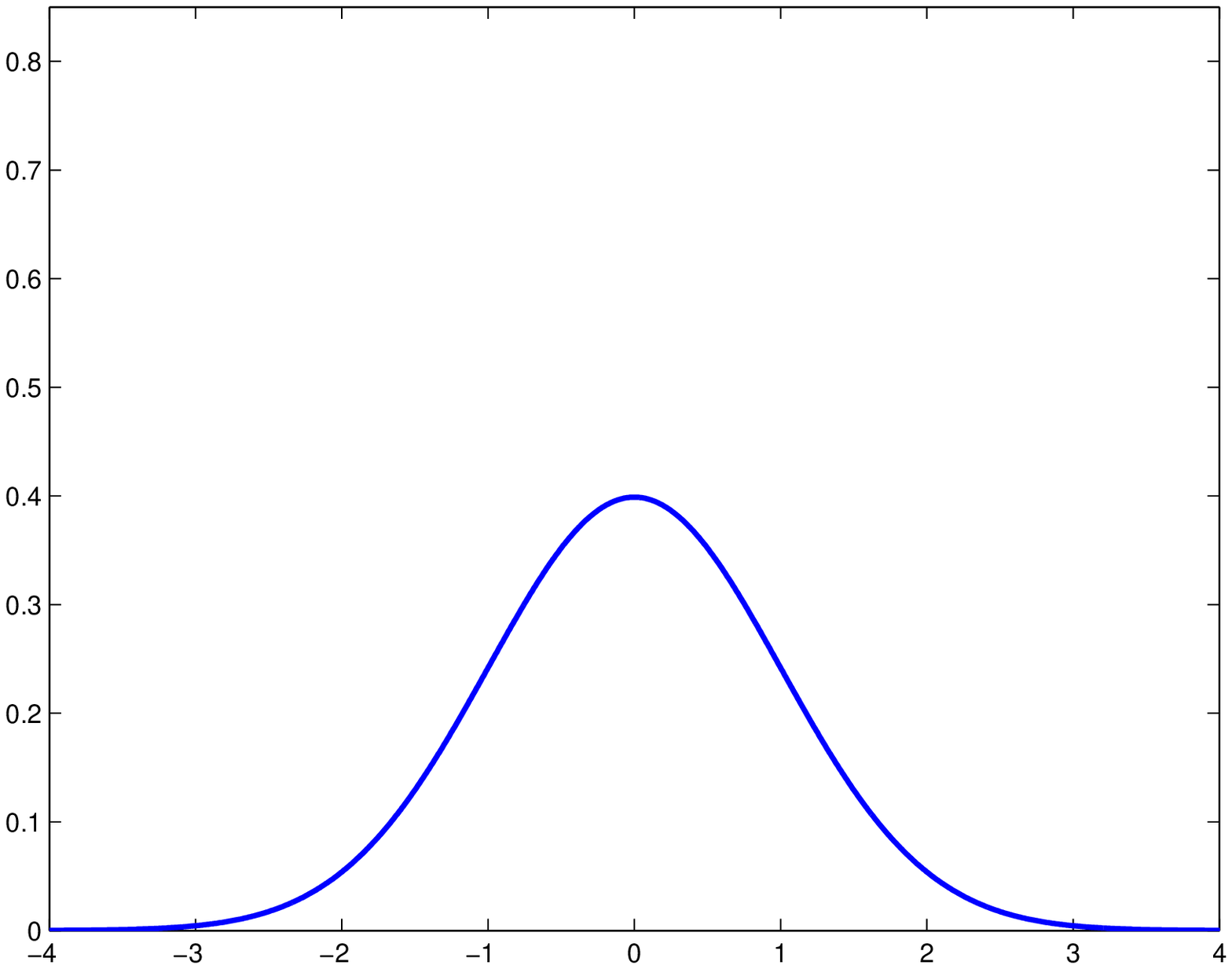} }
\subfigure[$f^{(a_t , b_t )}$ for $t = 0.2$]{ \includegraphics[width=3cm]{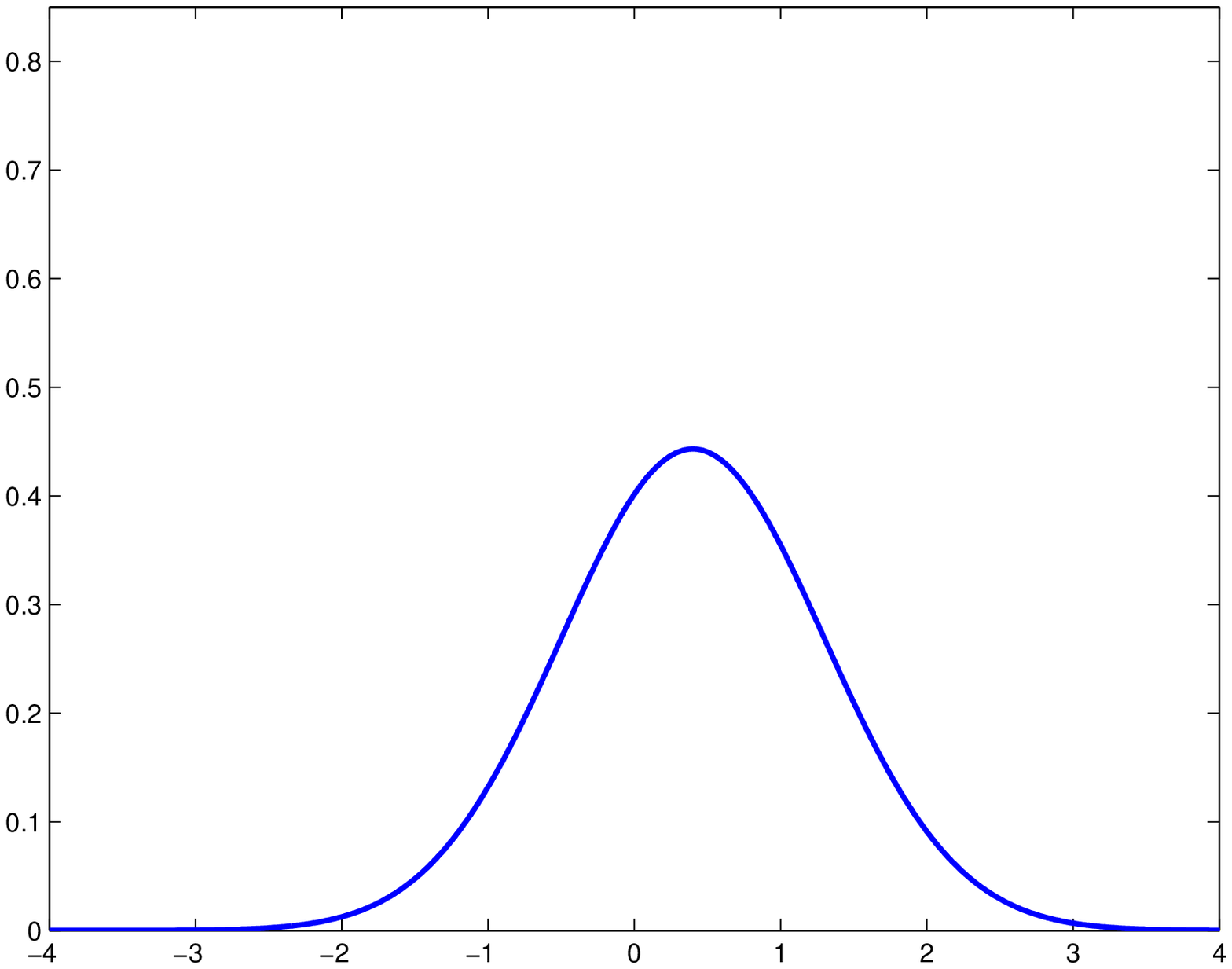} }
\subfigure[$f^{(a_t , b_t )}$ for $t = 0.4$]{ \includegraphics[width=3cm]{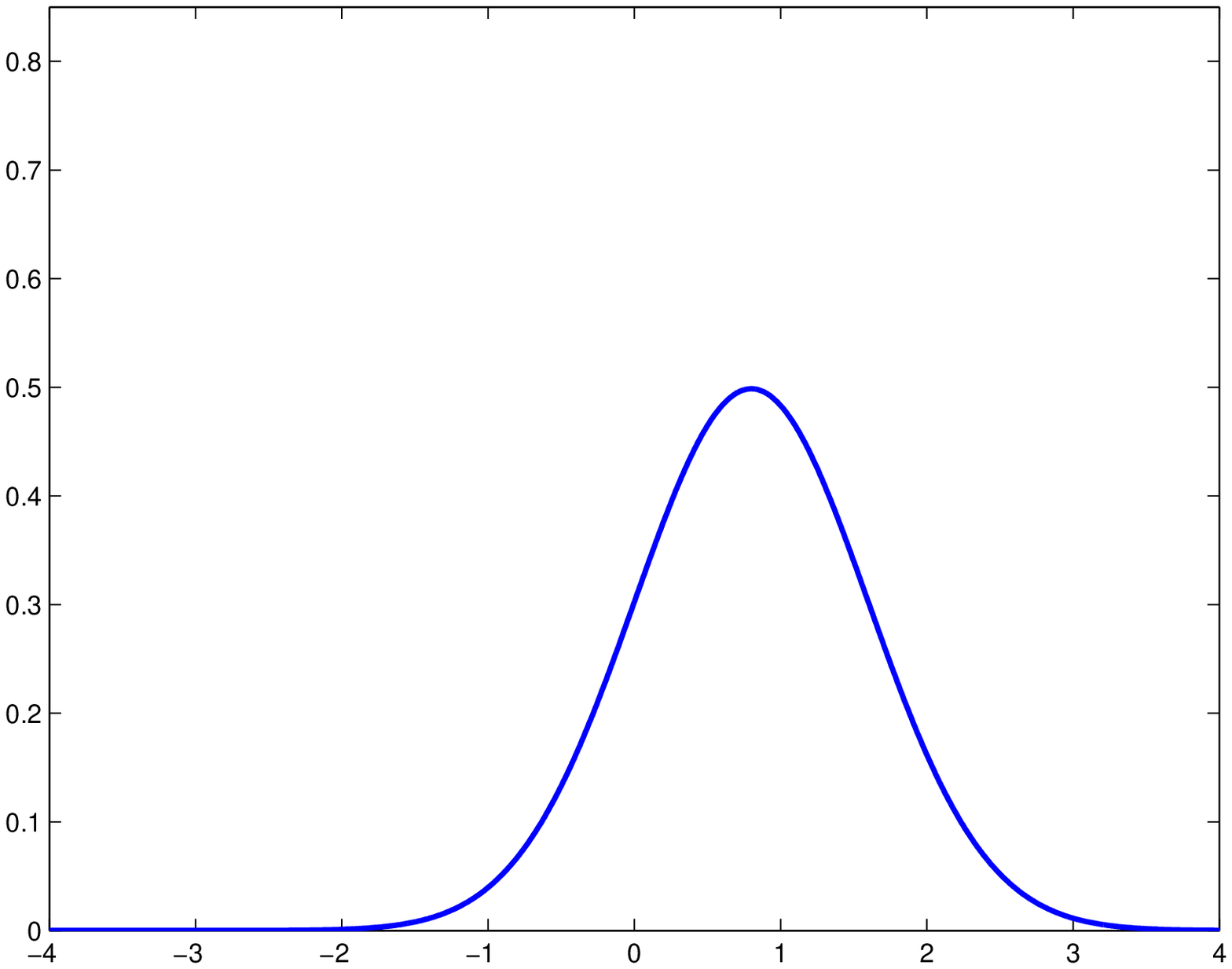} }

\subfigure[$f^{(a_t , b_t )}$ for $t = 0.6$]{ \includegraphics[width=3cm]{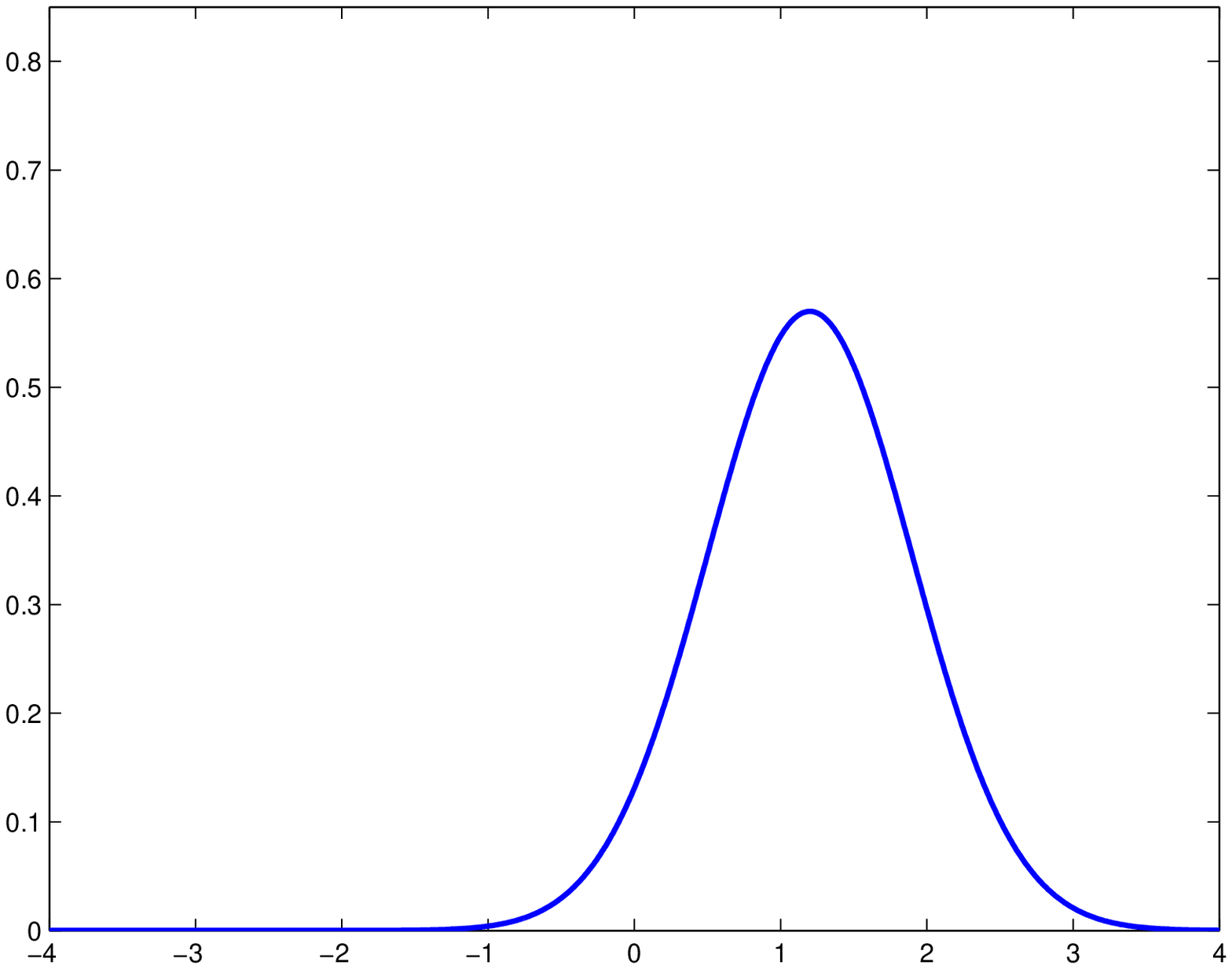} }
\subfigure[$f^{(a_t , b_t )}$ for $t = 0.8$]{ \includegraphics[width=3cm]{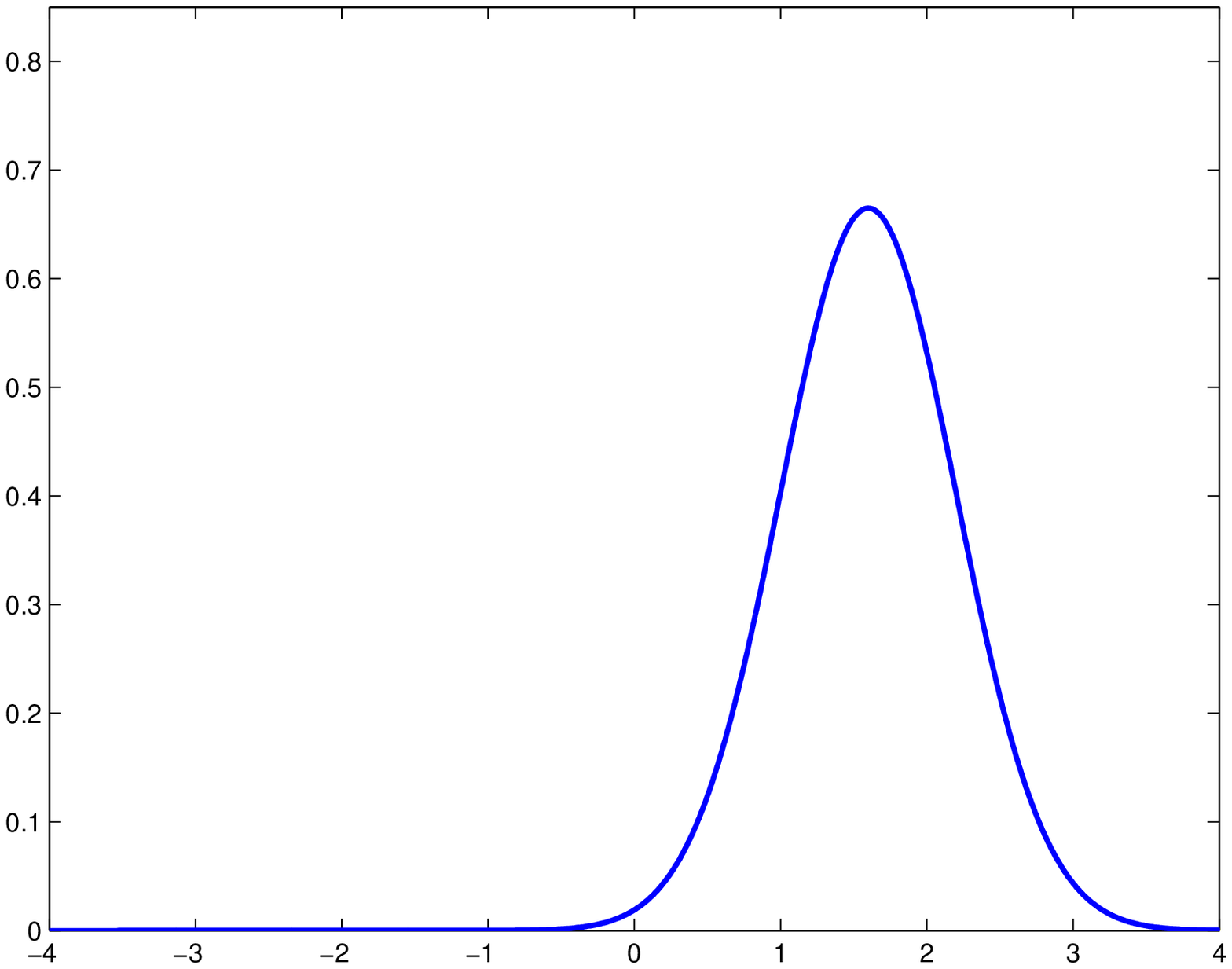} }
\subfigure[$f^{(a_t , b_t )}$ for $t = 1$]{ \includegraphics[width=3cm]{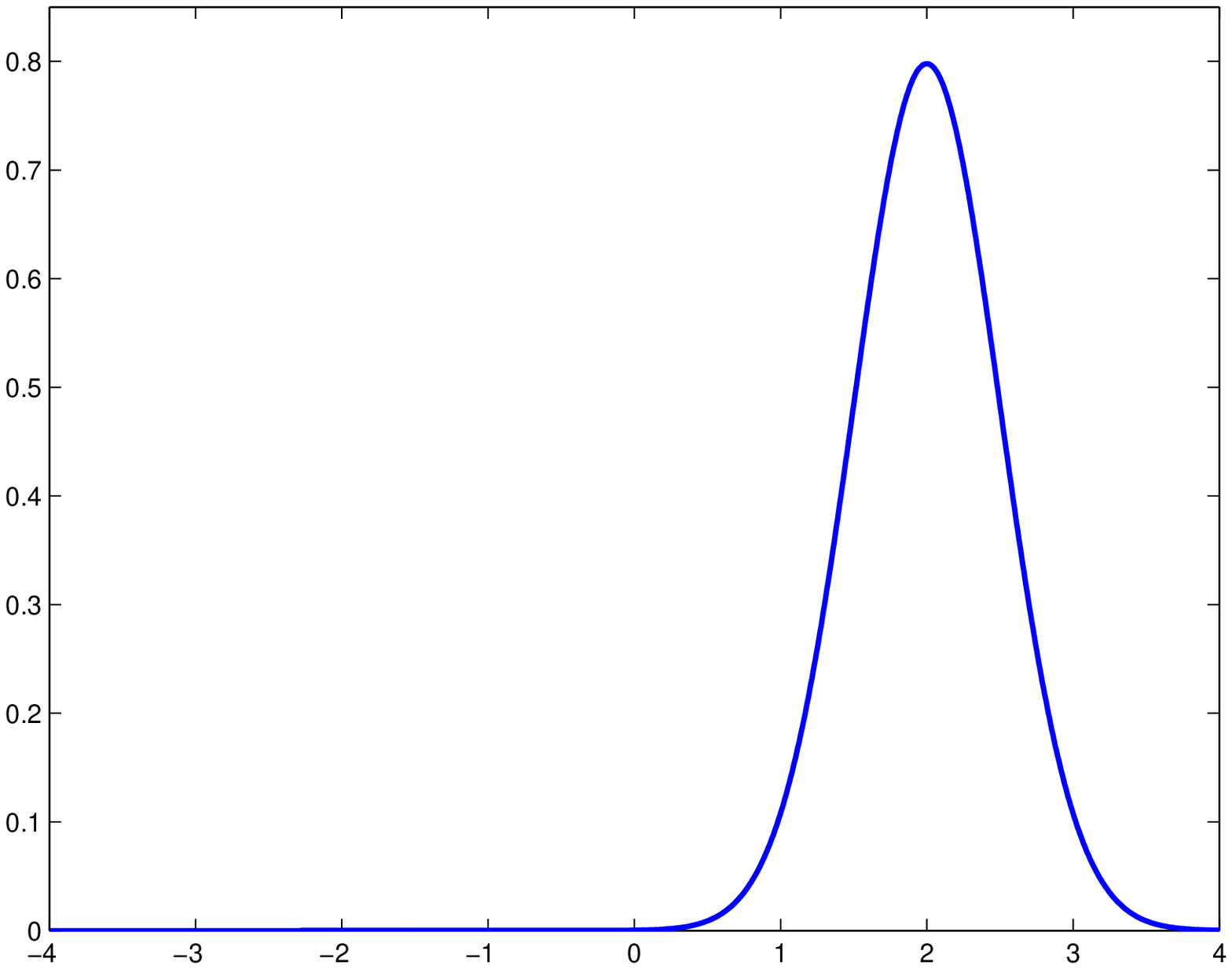} }

\caption{Visualization of the densities $f^{(a_t , b_t )}$ associated to the geodesic curve $\gamma(t) = \nu^{(a_t , b_t )}$ in $W_2$, described in Example \ref{ex:geoW2}, with $a=0.5$ and $b = 2$, in the case where $\mu=\mu_0$ is the standard Gaussian measure.
} \label{fig:exgeodesic}
\end{figure}
\begin{rem}
By  Lemma \ref{lem:gamma}, $\WS$ endowed with the Wasserstein distance $d_W$ is a geodesic space. Moreover, we have the following  corollary.
\end{rem}
\begin{coro}\label{coro:geoconvex}
A set $G  \subset \WS$ is geodesic (in the sense of Definition \ref{def:geodesicspace}) if and only if $\log_{\mu}(G)$ is convex.
\end{coro}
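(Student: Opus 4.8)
The plan is to transport the defining property of a geodesic subset through the isometric homeomorphism of Theorem \ref{theo:exp}. The only structural input needed beyond that theorem is Lemma \ref{lem:gamma}, which identifies the geodesics of $\WS$ with the images under $\exp_\mu$ of straight segments in $\VV=\log_\mu(\WS)$, together with the convexity of $\VV$ from Proposition \ref{prop:clcv} and the relation $\exp_\mu(\log_\mu(\nu))=\nu$ for all $\nu\in\WS$ recorded after Definition \ref{def:explog}. Throughout I read Definition \ref{def:geodesicspace} as: $G$ is geodesic when for every pair of points of $G$ there is a geodesic of $\WS$ joining them whose image is contained in $G$; since, as noted after Lemma \ref{lem:gamma}, the geodesic between two given measures is unique, this is the same as requiring that the unique such geodesic stays in $G$.

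First I would prove the implication ``$G$ geodesic $\Rightarrow$ $\log_\mu(G)$ convex''. Take $v_0,v_1\in\log_\mu(G)$ and write $v_i=\log_\mu(\nu_i)$ with $\nu_i\in G$. By hypothesis there is a geodesic $\gamma:[0,1]\to\WS$ with $\gamma(0)=\nu_0$, $\gamma(1)=\nu_1$ and $\gamma([0,1])\subset G$. By Proposition \ref{prop:clcv} the segment $(1-t)v_0+tv_1$ lies in $\VV$, so by Theorem \ref{theo:exp} it is the unique preimage of $\exp_\mu((1-t)v_0+tv_1)$ under $\exp_\mu|_{\VV}$, with preimage map $\log_\mu$. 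Lemma \ref{lem:gamma} gives $\gamma(t)=\exp_\mu((1-t)v_0+tv_1)$, hence $\log_\mu(\gamma(t))=(1-t)v_0+tv_1$. Since $\gamma(t)\in G$, this segment lies in $\log_\mu(G)$ for all $t\in[0,1]$, proving convexity.

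For the converse, assume $\log_\mu(G)$ is convex and pick $\nu_0,\nu_1\in G$. Set $v_i=\log_\mu(\nu_i)\in\log_\mu(G)$; by convexity $(1-t)v_0+tv_1\in\log_\mu(G)$ for every $t$, say $(1-t)v_0+tv_1=\log_\mu(\nu_t)$ with $\nu_t\in G$. Define $\gamma(t):=\exp_\mu((1-t)v_0+tv_1)$. By Lemma \ref{lem:gamma} this $\gamma$ is a geodesic with endpoints $\nu_0,\nu_1$, and $\gamma(t)=\exp_\mu(\log_\mu(\nu_t))=\nu_t\in G$, so the geodesic stays in $G$; hence $G$ is geodesic. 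I do not expect any genuine obstacle here: the two potential pitfalls are purely bookkeeping, namely (i) making sure one uses the ``at least one geodesic in $G$'' reading of Definition \ref{def:geodesicspace} (harmless, by uniqueness of geodesics in $\WS$), and (ii) the fact that $\log_\mu$ is a priori only defined on $A_\mu$ — this is exactly what Theorem \ref{theo:exp} plus the convexity of $\VV$ in Proposition \ref{prop:clcv} resolve, by making $\log_\mu$ a bona fide inverse of $\exp_\mu$ on the convex set $\VV$ that contains all the segments in play.
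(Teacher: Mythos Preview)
Your argument is correct and follows exactly the route the paper intends: the corollary is stated without proof as a direct consequence of Lemma \ref{lem:gamma}, Theorem \ref{theo:exp} and Proposition \ref{prop:clcv}, and you have simply spelled out the two implications using precisely those ingredients. There is nothing to add or fix.
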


\begin{defin}\label{def:dimG}
Let $G\subseteq \WS$ be geodesic. The dimension of $G$, denoted $\dim(G)$, is defined as the dimension of the smallest affine subspace of $\LL$ containing $\log_{\mu}(G)$.
\end{defin}

\begin{rem}
$\dim(G)$ does not depend on the the reference measure $\mu$. Indeed, $\mu' \in \WS$ (atomless) and $E$ an affine subspace of $\LL$, such that $\log_{\mu}(G)\subseteq E$. It is easy to see that $ \log_{\mu'} \circ \exp_{\mu}:\LL \to L^2_{\mu'}(\Omega)$ is affine, therefore $\log_{\mu'} \circ \exp_{\mu}(E)$ is an affine subspace of $L^2_{\mu'}(\Omega)$ containing $\log_{\mu'}(G)$ and $\dim(E)=\dim(\log_{\mu'} \circ \exp_{\mu}(E))$. Observe also that, if $\gamma : [0,1] \to \WS$ is a geodesic, then $\gamma([0,1])$ is a geodesic space of dimension $1$.
\end{rem}

\section{Convex PCA}\label{sec:cpca}
We have shown in Section \ref{sec:preliminaries} that $\WS$ is isometric to the closed convex subset $\VV$, of the Hilbert space $\LL$. As can be seen in Section \ref{sec:GPCA}, the notion of GPCA in $\WS$ is strongly linked to a PCA constrained to $\VV$. It is then natural to develop a general strategy of convex-constrained PCA, in a general Hilbert space. This method, which we call Convex PCA (CPCA), could be applicable beyond the GPCA in $\WS$. We introduce the following  notation:
\ \\
- $H$ is a separable Hilbert space, with inner product $\langle \cdot, \cdot \rangle$ and norm $\| \cdot \|$.
\ \\
-  $d(x,y) := \|x-y\|$ and $d(x,E) := \inf_{z \in E}d(x,z)$, for $x,y \in H, E \subset H$.
\ \\
- $X$ is a closed convex subset of $H$, equipped with its Borel $\sigma$-algebra ${\cal B}(X)$.
\ \\
-   $\bx$ is an $X$-valued random element, assumed square-integrable, in the sense that $\E\|\bx \|^2< + \infty$, with expected value $\E \bx $.
\ \\
- $x_0\in X$ is a reference element and $k\ge1$ an integer.

\begin{rem}
$(\E \| \bx \|)^2\leq \E\|\bx \|^2 < + \infty$ and so, $\E \bx \in H$. It is well-known that $\E \bx$ is characterized as the unique element in $H$ satisfying $\langle \E \bx , x \rangle = \E \langle \bx , x \rangle$, for all $x \in H$, and also, as the unique element in $\arg \min_{y \in  X}  \E   d^{2}(\bx, y)$. Hence $\E \bx$ can be seen as a  natural notion of average in X.
\end{rem}

\subsection{Principal convex components}\label{sec:pcs}

\begin{defin}\label{def:costh}
For ${C} \subset X$, let $\costh(C)= \E d^2(\bx,C) $.
\end{defin}
\begin{rem}
Note that $\costh(C)$ is the  expected value of the squared residual of $\bx$ projected onto ${C}$, necessarily finite since $\bx$ is assumed square-integrable. Observe also that $\costh$ is monotone, in the sense that $\costh(C) \geq \costh(B), \mbox{ if } C \subset B$.
\end{rem}

\begin{defin} \label{def:CLC}
Let
\ \\
(a) $\CL(X)$ be the metric space  of  nonempty, closed subsets of $X$, endowed with the Hausdorff distance $h$ (see Definitions \ref{def:Haussdorff}, \ref{def:CLX}),
\ \\
(b) $\mathrm{CC}_{k}(X)$ be the family of convex sets $C\in\CL(X)$, such that  $\dim(C) \leq k$, where $\dim(C)$ is the dimension of the smallest affine subspace of $H$ containing $C$, and
\ \\
(c) $\mathrm{CC}_{x_0,k}(X)$ be the family of sets $C\in\mathrm{CC}_{k}(X)$, such that $x_0 \in C$.
\end{defin}

\begin{prop}\label{prop:cost}
If $X$ is compact, then $\costh$ is continuous on $\CL(X)$.
\end{prop}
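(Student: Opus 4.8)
The plan is to prove the stronger statement that $\costh$ is Lipschitz on $\CL(X)$ for the Hausdorff distance $h$, which immediately gives continuity. The starting point is the elementary fact that, for each fixed $x\in H$, the map $C\mapsto d(x,C)$ is $1$-Lipschitz on $\CL(X)$. To see this I would take $C_1,C_2\in\CL(X)$ and any $c_1\in C_1$; then $d(x,C_2)\le d(x,c_1)+d(c_1,C_2)\le d(x,c_1)+h(C_1,C_2)$, since $d(c_1,C_2)\le h(C_1,C_2)$ by definition of the Hausdorff distance. Taking the infimum over $c_1\in C_1$ gives $d(x,C_2)\le d(x,C_1)+h(C_1,C_2)$, and exchanging the roles of $C_1,C_2$ yields $|d(x,C_1)-d(x,C_2)|\le h(C_1,C_2)$.

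Next I would bring in compactness of $X$, used only through the boundedness it provides. Put $R:=\sup_{y\in X}\|y\|<\infty$. For any $C\in\CL(X)$, picking $c\in C\subseteq X$ gives $d(x,C)\le\|x-c\|\le\|x\|+R$. Combining this with the factorization $a^2-b^2=(a-b)(a+b)$ and the previous step, for all $C_1,C_2\in\CL(X)$ and $x\in H$,
\[
\bigl|d^2(x,C_1)-d^2(x,C_2)\bigr|=\bigl|d(x,C_1)-d(x,C_2)\bigr|\bigl(d(x,C_1)+d(x,C_2)\bigr)\le 2(\|x\|+R)\,h(C_1,C_2).
\]

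Finally I would substitute $x=\bx$ and take expectations. Since $\bx$ is square-integrable, $\E\|\bx\|\le(\E\|\bx\|^2)^{1/2}<+\infty$, so
\[
\bigl|\costh(C_1)-\costh(C_2)\bigr|\le\E\bigl|d^2(\bx,C_1)-d^2(\bx,C_2)\bigr|\le 2\bigl(\E\|\bx\|+R\bigr)\,h(C_1,C_2),
\]
proving that $\costh$ is Lipschitz, hence continuous, on $\CL(X)$. I do not expect a genuine obstacle here: the only slightly delicate points are invoking the Hausdorff-Lipschitz estimate for $d(x,\cdot)$ correctly and handling the squared residual through the $(a-b)(a+b)$ identity, and compactness enters solely to make the Lipschitz constant $2(\E\|\bx\|+R)$ finite. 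Without boundedness of $X$ the same computation still gives continuity at each fixed $C$ by dominated convergence, but not the clean global Lipschitz bound.
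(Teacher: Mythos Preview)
Your proof is correct. Both you and the paper start from the $1$-Lipschitz property of $C\mapsto d(x,C)$ on $\CL(X)$ (this is exactly the content of Proposition~\ref{prop:dKcont} in the paper), but you then diverge in how you pass to the expectation. The paper argues qualitatively: since $d(\bx,C_n)\to d(\bx,C)$ pointwise and $d^2(\bx,C_n)$ is a.s.\ bounded by the squared diameter of the compact $X$, dominated convergence yields $\costh(C_n)\to\costh(C)$. You instead exploit the Lipschitz estimate quantitatively, combining it with the bound $d(x,C)\le\|x\|+R$ and the factorization $a^2-b^2=(a-b)(a+b)$ to obtain a global Lipschitz constant $2(\E\|\bx\|+R)$ for $\costh$. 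Your route is slightly more elementary (no DCT) and delivers a stronger conclusion; the paper's route is shorter once Proposition~\ref{prop:dKcont} is in hand. Note that since $\bx$ is $X$-valued you could even replace $\E\|\bx\|$ by $R$, yielding the cleaner constant $4R$.
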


\begin{proof}
Let  $C_n, C\in\CL(X), n\ge1$, such that $h(C_n,C)\to0$, and observe that $d^2(\bx,C_n)$ is a.s. bounded by the diameter of $X$.
Then,  by Proposition \ref{prop:dKcont} and the dominated convergence theorem, $\costh(C_n) \to \costh(C)$.
\end{proof}

%\begin{proof}
%Let  $C_n, C\in\CL(X), n\ge1$, such that $h(C_n,C)\to0$, and observe that
%\begin{equation*}
%d^2(\bx,C_n) \leq  \|\bx\|^2 + 4 \sup_{y \in C} \|y\|^2 + 4d^2(C,C_n).
%\end{equation*}
%Then, given that $\bx$ is square-integrable and $C$ bounded, it follows that $\E d^2(\bx,C_n) \leq R$, for some $R>0$. Hence, by Proposition \ref{prop:dKcont} and the dominated convergence theorem, $\costh(C_n) \to \costh(C)$.
%\end{proof}

\begin{prop}\label{prop:CLXcompact}
If X is compact, then $\CL(X), \mathrm{CC}_{k}(X)$ and $\mathrm{CC}_{x_0,k}(X)$ are compact.
\end{prop}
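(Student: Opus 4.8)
The plan is to establish compactness of the three spaces in a chain: first $\CL(X)$, then deduce compactness of $\mathrm{CC}_k(X)$ as a closed subset of $\CL(X)$, and finally $\mathrm{CC}_{x_0,k}(X)$ as a closed subset of $\mathrm{CC}_k(X)$. Since these are metric spaces (equipped with the Hausdorff distance $h$), it suffices throughout to prove sequential compactness.

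For $\CL(X)$: this is the classical fact that the hyperspace of nonempty closed subsets of a compact metric space, under the Hausdorff metric, is itself compact (Blaschke selection-type theorem). I would either cite this directly or sketch the standard argument: given a sequence $(C_n)$ in $\CL(X)$, pass to a subsequence converging in the Kuratowski sense (possible because $X$ is compact, so the Kuratowski upper and lower limits can be matched along a subsequence using a countable dense subset of $X$), and then invoke the Appendix material (Kuratowski convergence and its relation to $h$ on a compact space — presumably Proposition \ref{prop:dKcont} or an adjacent result) to conclude that Kuratowski convergence coincides with $h$-convergence here, with the limit set closed and nonempty. So $\CL(X)$ is compact.

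For $\mathrm{CC}_k(X)$: by the above it is enough to show $\mathrm{CC}_k(X)$ is closed in $\CL(X)$. So take $C_n \in \mathrm{CC}_k(X)$ with $h(C_n, C) \to 0$, $C \in \CL(X)$; I must show $C$ is convex with $\dim(C) \le k$. Convexity is a short direct argument: given $y, z \in C$ and $\lambda \in [0,1]$, approximate $y, z$ by $y_n, z_n \in C_n$ (possible since $h(C_n,C)\to 0$), note $\lambda y_n + (1-\lambda) z_n \in C_n$ by convexity of $C_n$, and this sequence converges to $\lambda y + (1-\lambda) z$, which therefore lies in the closed set $C$. For the dimension bound: each $C_n$ lies in an affine subspace $A_n$ of dimension $\le k$; after a translation write $A_n = p_n + W_n$ with $W_n$ a linear subspace of dimension $\le k$. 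One extracts convergence of these data along a subsequence — the $p_n$ can be taken in the bounded set $X$, and a $k$-dimensional (or lower) subspace is determined by an orthonormal frame living in the unit sphere, which is not compact in infinite dimensions, so here I would instead argue via the projections: or more cleanly, pick $k+1$ points whose affine hull contains $C$ is not automatic, so the honest route is to take any $k+2$ points of $C$ and show they are affinely dependent, by approximating each from $C_n$ and using that $k+2$ points of $C_n$ are affinely dependent (since $\dim C_n \le k$), then passing to the limit in the vanishing of the relevant determinant / affine-dependence relation, being careful that the coefficients of the dependence relation stay in a compact set (normalize them to lie on the unit sphere of $\R^{k+2}$). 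I expect this dimension-passing step to be the main obstacle, precisely because affine subspaces of a fixed finite dimension do not form a compact family in an infinite-dimensional Hilbert space, so one cannot naively "take a limit of the subspaces"; the trick is to work with affine-dependence relations among finitely many points, where the relevant parameter space is finite-dimensional.

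For $\mathrm{CC}_{x_0,k}(X)$: again it suffices to show it is closed in $\mathrm{CC}_k(X)$ (hence in $\CL(X)$). If $C_n \in \mathrm{CC}_{x_0,k}(X)$ and $h(C_n,C)\to 0$ with $C \in \mathrm{CC}_k(X)$, then since $x_0 \in C_n$ for all $n$ and $x_0$ is at Hausdorff distance $\le h(C_n,C) \to 0$ from $C$, and $C$ is closed, we get $x_0 \in C$; so $C \in \mathrm{CC}_{x_0,k}(X)$. This completes the proof, all three spaces being closed subsets of the compact space $\CL(X)$, or compact in their own right.
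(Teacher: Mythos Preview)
Your proposal is correct and follows the same architecture as the paper: cite the classical compactness of $\CL(X)$ (the paper cites Price and Hai), then show $\mathrm{CC}_k(X)$ and $\mathrm{CC}_{x_0,k}(X)$ are closed in $\CL(X)$.

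The one substantive difference is the dimension step. The paper argues by contradiction via the \emph{Gram determinant of $k+1$ points}: if $\dim(C)>k$, pick $k+1$ linearly independent points $x_1,\ldots,x_{k+1}\in C$, approximate by $x_{j,n}\in C_n$, and claim $\det(GM_n)=0$ because $\dim(C_n)\le k$. Your route---pick $k+2$ points in $C$, approximate from $C_n$, use that $k+2$ points in an affine set of dimension $\le k$ are \emph{affinely} dependent, normalize the dependence coefficients on the unit sphere of $\R^{k+2}$ and pass to the limit---is the same idea (pass a finite-dimensional algebraic identity to the limit) but handles the affine/linear distinction more carefully. In fact the paper's claim that $\dim(C_n)\le k$ forces any $k+1$ points of $C_n$ to be linearly dependent is not quite right (a $k$-dimensional affine subspace not through the origin has $(k{+}1)$-dimensional linear span), so your formulation with $k+2$ points and affine dependence is the cleaner one. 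Your remark that one cannot simply ``take a limit of the $k$-dimensional subspaces'' in infinite dimensions is exactly the reason both proofs retreat to a finite-dimensional determinant/coefficient argument.
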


\begin{proof}
The compactness of $\CL(X)$ is proved in \cite{Price40} and \cite{Hai} and so we proceed with $\mathrm{CC}_{k}(X)$ and $\mathrm{CC}_{x_0,k}(X)$.
Let $C_n\in\mathrm{CC}_{k}(X), n\ge1$, and $C \in \CL(X)$, such that $h(C_n,C)\to0$. Then, from Blaschke's selection theorem in Banach spaces (see \cite{Price40}, \cite{Hai}), $C$ is convex.

 Let us check by contradiction that $\dim(C) \leq k$.  Assume that   that $\dim(C)>k$, then there exists linearly independent elements $x_1,\ldots,x_{k+1} \in C$  or, equivalently, with Gram determinant $\det(GM)\not=0$ (the Gram matrix $GM$ has elements $GM_{i,j}=\langle x_i,x_j \rangle$, $i,j=1,\ldots,k+1$). Observe that $h(C_n,C) \to 0$ implies that $C_n \to C$ in the sense of Kuratowski (see Remark \ref{rem:kurato}). By Definition \ref{def:Klim}(i), there exist $x_{1,n},\ldots,x_{k+1,n} \in C_n$, for every $n \geq 1$, such that $x_{j,n}\to x_j$, for $j=1,\ldots,k+1$. But as $\dim(C_n) \leq k$, the Gram determinant $\det(GM_n)$ of $x_{1,n},\ldots,x_{k+1,n}$ is zero. Also, it is easy to see that $\det(GM_n) \to \det(GM)$, which implies that $\det(GM)=0$, a contradiction. We conclude that $\mathrm{CC}_{k}(X)$ is closed, hence compact, as it is a subset of the compact space $\CL(X)$. Finally, observe that if $x_0 \in C_n$, for all $n \geq 1$, then $x_0 \in C$, by Definition \ref{def:Klim}(ii).  So $\mathrm{CC}_{x_0,k}(X)$ is also closed, thus compact.
\end{proof}

We define two notions of principal convex component (PCC), nested and global, and prove their existence. In the nested case, the definition is inductive and is motivated by the usual characterization of PCA, in terms of a nested sequence of optimal linear subspaces.

\begin{defin}\label{def:GGX}
(a) A  $(k, x_0)$-global principal convex component (GPCC) of $\bx$ is a set $C_k\in \GG_{x_0,k}(X) := \argmin_{C \in \mathrm{CC}_{x_0,k}(X)} \costh(C)$.
\ \\
(b) A $(k, x_0)$-nested principal convex component (NPCC) of $\bx$ is a set $C_k\in\NN_{x_0,k}(X):=\mkern-18mu\argmin_{C \in \mathrm{CC}_{x_0,k}(X), C \supset C_{k-1}}\mkern-18mu \costh(C), k\ge2$, with $C_1\in \GG_{x_0,1}(X)$.
\end{defin}

\begin{theo}\label{theo:GGnonemptyX}
If X is compact, then $\GG_{x_0,k}(X)$ and $\NN_{x_0,k}(X)$ are nonempty.
\end{theo}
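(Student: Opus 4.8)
The plan is to deduce existence of minimizers from compactness of the feasible sets together with continuity (or at least lower semicontinuity) of the objective $\costh$, i.e.\ a direct-method argument. By Proposition~\ref{prop:CLXcompact}, since $X$ is compact the spaces $\mathrm{CC}_{x_0,k}(X)$ and $\mathrm{CC}_{k}(X)$ are compact subsets of $\CL(X)$. By Proposition~\ref{prop:cost}, $\costh$ is continuous on $\CL(X)$. Hence $\costh$ is a continuous real-valued function on the nonempty compact set $\mathrm{CC}_{x_0,k}(X)$ (it is nonempty because, e.g., $\{x_0\}\in\mathrm{CC}_{x_0,k}(X)$), so it attains its infimum there; this gives $\GG_{x_0,k}(X)\neq\emptyset$.

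For the nested case I would argue by induction on $k$. The base case $k=1$ is exactly $C_1\in\GG_{x_0,1}(X)$, already handled. For the inductive step, suppose an NPCC $C_{k-1}$ has been constructed. The feasible set for stage $k$ is $\mathcal{F}_k:=\{C\in\mathrm{CC}_{x_0,k}(X): C\supset C_{k-1}\}$. I first need $\mathcal{F}_k\neq\emptyset$: since $\dim(C_{k-1})\le k-1$, one can enlarge $C_{k-1}$ to a convex set of dimension $\le k$ still inside $X$ and still containing $x_0$ — for instance, if $X$ is not contained in the affine hull of $C_{k-1}$, pick a point $y\in X$ outside that hull and take the convex hull of $C_{k-1}\cup\{y\}$; if $X\subset\operatorname{aff}(C_{k-1})$ then $C_{k-1}$ itself already lies in $\mathcal{F}_k$. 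Next I need $\mathcal{F}_k$ closed in $\CL(X)$: it is the intersection of the compact set $\mathrm{CC}_{x_0,k}(X)$ with $\{C\in\CL(X): C\supset C_{k-1}\}$, and the latter is closed for the Hausdorff distance — if $h(C_n,C)\to0$ and each $C_n\supset C_{k-1}$, then every point of $C_{k-1}$ is a limit of points of $C_n$, hence lies in $C$ (using that $C$ is closed and the Kuratowski characterization of Hausdorff convergence, cf.\ Remark~\ref{rem:kurato} and Definition~\ref{def:Klim}(i)). Thus $\mathcal{F}_k$ is a nonempty compact set, and again $\costh$, being continuous on $\CL(X)$, attains its minimum on $\mathcal{F}_k$, producing $C_k\in\NN_{x_0,k}(X)$.

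The main obstacle I anticipate is the nonemptiness of the stage-$k$ nested feasible set $\mathcal{F}_k$ — i.e.\ checking that one can always strictly (or at least weakly) enlarge $C_{k-1}$ to dimension $\le k$ while staying inside $X$ and containing $x_0$. This requires the elementary case split above (whether $X$ sticks out of $\operatorname{aff}(C_{k-1})$ or not), and one should note that $\dim(C_k)\le k$ is all that is required, so the degenerate case $X\subset\operatorname{aff}(C_{k-1})$ causes no problem. The rest is routine: compactness from Proposition~\ref{prop:CLXcompact}, continuity from Proposition~\ref{prop:cost}, and closedness of the superset constraint under Hausdorff convergence. I would also remark that the $\argmin$ sets are themselves closed subsets of compact spaces, hence compact, though that is not needed for the statement.
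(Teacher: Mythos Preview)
Your proposal is correct and follows essentially the same route as the paper: continuity of $\costh$ (Proposition~\ref{prop:cost}) plus compactness of the relevant feasible sets (Proposition~\ref{prop:CLXcompact}), with induction and closedness of the superset constraint $\{C\supset C_{k-1}\}$ for the nested case. One simplification: the nonemptiness of $\mathcal{F}_k$ is immediate without any case split, since $C_{k-1}$ itself lies in $\mathrm{CC}_{x_0,k}(X)$ (as $\dim(C_{k-1})\le k-1\le k$) and trivially satisfies $C_{k-1}\supset C_{k-1}$.
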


\begin{proof}
The result for $\GG_{x_0,k}(X)$  is a direct consequence of Propositions \ref{prop:cost} and \ref{prop:CLXcompact}. We show that $\NN_{x_0,k}(X)\ne\emptyset$  by induction on $k$: first observe that $\NN_{x_0,1}(X) = \GG_{x_0,1}(X)\neq\emptyset$ and suppose that $C_{k-1}\in\NN_{x_0,k-1}(X) \neq \emptyset, k\ge2$. Furthermore, let $B_n \in \CL(X)$, such that $C_{k-1} \subset B_n, n \geq 1$, and $K$-$\lim B_n= B \in \CL(X)$ (the notation $K$-$\lim$ denotes convergence in the sense of Kuratowski, see Appendix \ref{sec:Klim} for a precise definition, where it is also recalled, that since $X$ is compact, the convergence with respect to the Hausdorff distance is equivalent to convergence in the sense of Kuratowski). It is clear that $C_{k-1} \subset B$, hence  ${\cal C}_{k-1}:=\{C \in \CL(X)\; | \;C \supset C_{k-1} \}$ is closed and so, by Proposition \ref{prop:CLXcompact}, $\{C \in \mathrm{CC}_{x_0,k}(X)\; | \;C \supset C_{k-1} \}={\cal C}_{k-1}\cap \mathrm{CC}_{x_0,k}(X)$ is closed, thus compact. Finally, Proposition \ref{prop:cost} implies $\NN_{x_0,k}(X)\neq \emptyset$.
\end{proof}

\begin{rem}\label{rem:compact}
%(a)
For $k=1$ the notions of GPCC and NPCC coincide. However, this might not be the case for $k \geq 2$.
%(b) Observe that in Theorem \ref{theo:GGnonemptyX} $X$ is assumed compact. However, this is not a necessary condition, as seen in Section \ref{sec:numerical}, where a PCC is computed over the noncompact set $X=V_{\mu}(\R)$.
\end{rem}

\begin{defin}\label{def:bxn}
 Given $x_1, \ldots, x_n \in X$, we denote by $\bx^{(n)}$ the (square-integrable) $X$-valued random element such that $\P(\bx^{(n)} \in A)= \frac{1}{n} \sum_{i=1}^n \mathbbm{1}_{A}(x_i)$, for any $A\in\cal B(X)$,
 where $\mathbbm{1}_{A}$ is the indicator function of $A$.
\end{defin}

\begin{defin}\label{def:empPCC}
The empirical GPCC and NPCC are defined as in Definition \ref{def:GGX}, with $\bx$ replaced by $\bx^{(n)}$. The empirical version of $\costh$ is $\costhn(C) := \E  d^2(\bx^{(n)},C)  = \frac{1}{n} \sum_{i=1}^n  d^2(x_i,C)$.
\end{defin}
\subsection{Formulation of CPCA as an optimization problem in $H$}

%\begin{rem} \label{rmq:Xbounded}
%In this section, we assume that $X$ is a bounded set. This hypothesis could be relaxed, by letting $\CL(X)$ to be the metric space  of  nonempty and closed subsets of $X$ (that is without the boundedness constraint in Definition  \ref{def:CLC}), endowed with the Hausdorff distance $h$. However, assuming that some sets $A, B \in \CL(X)$ can be unbounded, may imply that they Hausdorff distance $h(A,B)$ is infinite. Therefore, to simplify the proofs and the presentation of the results, we prefer to assume, in Definition  \ref{def:CLC},  that the sets in $\CL(X)$ are bounded.
%\end{rem}

\begin{defin}\label{def:costast}
For $\U=\{u_1,\ldots,u_k\} \subset H$, let\ \\
(a) $\spann(\U)$ be the subspace spanned by $u_1,\ldots,u_k$,
\ \\
(b) $C_{\U} = (x_0 + \spann (\U)) \cap X\in \mathrm{CC}_{x_0,k}(X)$ and
\ \\
(c) $\costH(\U) := \costh(C_{\U} )$.
\end{defin}

To simplify notations in Definition \ref{def:costast}, we write $\spann(u)$, $\costH(u)$ or $C_u$ whenever $\U=\{u\}$.
We show below that finding a GPCC can be formulated as an optimization problem in $H^k$.

\begin{prop}\label{prop:costastG}
Let $\U^{\ast}=\{ u^{\ast}_1,\ldots,u^{\ast}_k \}$ be a minimizer of $\costH$ over orthonormal sets $\U=\{ u_1,\ldots,u_k\} \subset H$, then $C_{\U^{\ast}} \in \GG_{x_0,k}(X)$.
\end{prop}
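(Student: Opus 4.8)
The goal is to show that minimizing $\costH$ over orthonormal $k$-sets $\U \subset H$ produces a GPCC, i.e.\ a minimizer of $\costh$ over $\mathrm{CC}_{x_0,k}(X)$. The key observation is that every $C \in \mathrm{CC}_{x_0,k}(X)$ is contained in a set of the form $C_\U$ for some (not necessarily orthonormal) $\U = \{u_1,\ldots,u_k\}$: indeed $C$ is convex of dimension $\le k$ containing $x_0$, so it sits inside the affine subspace $x_0 + \spann(\U)$ for a suitable spanning set $\U$, whence $C \subseteq (x_0 + \spann(\U)) \cap X = C_\U$. By monotonicity of $\costh$ (Remark after Definition~\ref{def:costh}), $\costh(C_\U) \le \costh(C)$. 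Therefore
\begin{equation*}
\inf_{\U} \costH(\U) \;=\; \inf_{\U} \costh(C_\U) \;\le\; \inf_{C \in \mathrm{CC}_{x_0,k}(X)} \costh(C),
\end{equation*}
where the first infimum is over all $k$-element subsets of $H$. Conversely, each $C_\U$ itself belongs to $\mathrm{CC}_{x_0,k}(X)$ (as noted in Definition~\ref{def:costast}(b)), so the reverse inequality is trivial and the two infima coincide.

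\textbf{Reducing to orthonormal sets.} Next I would argue that restricting to orthonormal $\U$ does not change the infimum. Given any $\U = \{u_1,\ldots,u_k\}$, let $\spann(\U)$ have dimension $m \le k$; applying Gram--Schmidt yields an orthonormal set $\U'$ with $\spann(\U') = \spann(\U)$ and $|\U'| = m \le k$. Padding $\U'$ with extra orthonormal vectors to size exactly $k$ only enlarges $\spann(\U')$, hence (by monotonicity again) can only decrease $\costH$; in any case $C_{\U'} \supseteq C_\U$ when we keep $\spann(\U') \supseteq \spann(\U)$, so $\costH(\U') \le \costH(\U)$. Thus the infimum of $\costH$ over orthonormal $k$-sets equals the infimum over all $k$-sets, which by the previous paragraph equals $\inf_{C \in \mathrm{CC}_{x_0,k}(X)} \costh(C)$.

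\textbf{Attaining the minimum.} It remains to check that $\U^\ast$, a minimizer over orthonormal sets (whose existence is presumably guaranteed elsewhere, or follows from compactness of $X$ via Theorem~\ref{theo:GGnonemptyX} since $\GG_{x_0,k}(X) \ne \emptyset$), actually delivers $C_{\U^\ast} \in \GG_{x_0,k}(X)$. By the chain of equalities, $\costh(C_{\U^\ast}) = \costH(\U^\ast) = \inf_\U \costH(\U) = \min_{C \in \mathrm{CC}_{x_0,k}(X)} \costh(C)$, and since $C_{\U^\ast} \in \mathrm{CC}_{x_0,k}(X)$ by Definition~\ref{def:costast}(b), it is indeed a global minimizer, i.e.\ $C_{\U^\ast} \in \GG_{x_0,k}(X)$.

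\textbf{Main obstacle.} The one genuinely delicate point is the first step: showing that an \emph{arbitrary} convex set $C$ of dimension $\le k$ containing $x_0$ is contained in $(x_0 + \spann(\U)) \cap X$ for some $\U$ — this uses that the affine hull of $C$ is $x_0 + L$ for a linear subspace $L$ of dimension $\le k$ (because $x_0 \in C$), and that $C \subseteq X$ together with $C \subseteq x_0 + L$ gives $C \subseteq (x_0+L) \cap X$; one then picks $\U$ to be a basis of $L$ (extended to size $k$ if $\dim L < k$). Everything else is bookkeeping with the monotonicity of $\costh$ and Gram--Schmidt. I do not expect compactness or continuity issues here, since existence of GPCCs is already established.
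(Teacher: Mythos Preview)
Your proof is correct and follows essentially the same approach as the paper's. The paper's argument is just a streamlined version of yours: for any $C \in \mathrm{CC}_{x_0,k}(X)$ it picks an \emph{orthonormal} set $\U$ directly (since the affine hull of $C$ is $x_0 + L$ with $\dim L \le k$, one can take an orthonormal basis of $L$ extended to size $k$), and then writes the single chain $\costh(C) \ge \costh(C_\U) = \costH(\U) \ge \costH(\U^\ast) = \costh(C_{\U^\ast})$. Your separate Gram--Schmidt step is therefore an unnecessary detour, and existence of $\U^\ast$ is simply part of the hypothesis, but neither point is a gap.
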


\begin{proof}
 For any $C \in \mathrm{CC}_{x_0,k}(X)$, there exists an orthonormal set $\U=\{u_1,\ldots,u_k \} \subset H$, such that $C \subset C_{\U}$. Thus, as $\costh$ is monotone, $\costh(C) \geq \costh(C_{\U}) = \costH(\U) \geq \costH(\U^{\ast}) = \costh(C_{\U^{\ast}})$, and the conclusion follows.
\end{proof}

The analogous result for NPCC is stated below. The proof, similar to that of Proposition \ref{prop:costastG}, is omitted.

\begin{prop}\label{prop:costastN}
Let $u^{\ast}_1,\ldots,u^{\ast}_k \in H$ such that $u^{\ast}_1 \in \argmin_{ u \in H, \|u\|=1} \costH(u)$ and, for $j=2,\ldots,k$,
let $u^{\ast}_j \in\mkern-18mu \argmin_{u \in \spann(u^{\ast}_1,\ldots,u^{\ast}_{j-1})^{\perp}, \|u\|=1}\mkern-18mu  \costH(u)$, where $\perp$ denotes orthogonal. Then $C_{\{u^{\ast}_1,\ldots,u^{\ast}_k\}} \in \NN_{x_0,k}(X)$.

\end{prop}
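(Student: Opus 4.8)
The plan is to adapt the proof of Proposition~\ref{prop:costastG}, but to run it as an induction on $k$, so that the nested constraint in Definition~\ref{def:GGX}(b) is respected. For $k=1$ nothing new is needed: by Remark~\ref{rem:compact} the classes of GPCC and NPCC coincide, and $u_1^{\ast}$ being a minimizer of $\costH$ over unit vectors is exactly the hypothesis of Proposition~\ref{prop:costastG} with $k=1$, so $C_{\{u_1^{\ast}\}}\in\GG_{x_0,1}(X)=\NN_{x_0,1}(X)$.

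For the inductive step, fix $k\ge 2$ and assume $C_{k-1}:=C_{\{u_1^{\ast},\ldots,u_{k-1}^{\ast}\}}\in\NN_{x_0,k-1}(X)$. Write $W_{k-1}:=\spann(u_1^{\ast},\ldots,u_{k-1}^{\ast})$ and $C_k:=C_{\{u_1^{\ast},\ldots,u_k^{\ast}\}}$, so that $C_{k-1}=(x_0+W_{k-1})\cap X$, while $C_k\in\mathrm{CC}_{x_0,k}(X)$ and $C_k\supseteq C_{k-1}$ (since $\spann(u_1^{\ast},\ldots,u_k^{\ast})\supseteq W_{k-1}$); thus $C_k$ is admissible for the minimization defining $\NN_{x_0,k}(X)$. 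I would then take an arbitrary competitor $C\in\mathrm{CC}_{x_0,k}(X)$ with $C\supseteq C_{k-1}$ and bound $\costh(C)$ below by $\costh(C_k)$, as follows. If $C\subseteq(x_0+W_{k-1})\cap X$ then $C=C_{k-1}$ and, by monotonicity of $\costh$ together with $C_{k-1}\subseteq C_k$, $\costh(C)=\costh(C_{k-1})\ge\costh(C_k)$. Otherwise, since $x_0\in C$ the affine hull of $C$ is $x_0+V$ with $W_{k-1}\subsetneq V$ and $\dim V\le k=\dim W_{k-1}+1$, hence $V=W_{k-1}\oplus\spann(u)$ for some unit vector $u\perp W_{k-1}$, and $C\subseteq(x_0+V)\cap X=C_{\{u_1^{\ast},\ldots,u_{k-1}^{\ast},u\}}$. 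Monotonicity of $\costh$ gives $\costh(C)\ge\costH(\{u_1^{\ast},\ldots,u_{k-1}^{\ast},u\})$, and the minimality defining $u_k^{\ast}$ among unit vectors of $W_{k-1}^{\perp}$ gives $\costH(\{u_1^{\ast},\ldots,u_{k-1}^{\ast},u\})\ge\costH(\{u_1^{\ast},\ldots,u_{k-1}^{\ast},u_k^{\ast}\})=\costh(C_k)$. In either case $\costh(C)\ge\costh(C_k)$, so $C_k\in\NN_{x_0,k}(X)$ and the induction closes.

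The step I expect to be the main obstacle is the passage to ``$C\subseteq C_{\{u_1^{\ast},\ldots,u_{k-1}^{\ast},u\}}$ for some unit $u\perp W_{k-1}$''. This requires the affine hull of $C$ to contain $x_0+W_{k-1}$, which follows from $C\supseteq C_{k-1}$ only when $C_{k-1}$ affinely spans $x_0+W_{k-1}$, i.e.\ $\dim(C_{k-1})=k-1$; when $C_{k-1}=(x_0+W_{k-1})\cap X$ is affinely lower-dimensional (for instance $\{x_0\}$, if $X$ is ``pointed'' at $x_0$ along $W_{k-1}$), a competitor $C$ need not carry $W_{k-1}$ in its direction space, and the reduction — indeed the whole correspondence — breaks down. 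The clean way around this is to restrict to the regime in which every intermediate NPCC is full-dimensional, $\dim(C_{j-1})=j-1$; this should be made an explicit hypothesis (it holds in the Wasserstein application), or else handled by a separate, more delicate argument with re-chosen orthonormal frames that still contain both $C_{k-1}$ and $C$ while keeping dimension at most $k$. A second point to watch is that the objective minimized to select $u_k^{\ast}$ must be read as $\costH(\{u_1^{\ast},\ldots,u_{k-1}^{\ast},u\})$ — the residual to the \emph{accumulated} convex set, consistently with Definition~\ref{def:costast} — so that the last displayed inequality is precisely the defining minimality of $u_k^{\ast}$; all remaining details are a verbatim transcription of the monotonicity argument in Proposition~\ref{prop:costastG}.
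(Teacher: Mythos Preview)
Your approach is exactly what the paper intends: it omits the proof entirely, saying only that it is ``similar to that of Proposition~\ref{prop:costastG}'', and your induction on $k$ with the monotonicity argument is the natural adaptation of that proof to the nested setting.

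Both of the subtleties you flag are genuine and are glossed over by the paper. First, the objective $\costH(u)$ in the defining minimization for $u_j^{\ast}$ must indeed be read as $\costH(\{u_1^{\ast},\ldots,u_{j-1}^{\ast},u\})$; with the literal one-dimensional reading the residuals do not decompose in the constrained setting and the conclusion need not hold. Second, your full-dimensionality concern is real: if $C_{k-1}=(x_0+W_{k-1})\cap X$ has $\dim(C_{k-1})<k-1$, a competitor $C\supseteq C_{k-1}$ with $\dim(C)\le k$ need not satisfy $\dim(V+W_{k-1})\le k$ for $V$ the direction space of $C$, so one cannot enclose $C$ in some $C_{\{u_1^{\ast},\ldots,u_{k-1}^{\ast},u\}}$ with $u\perp W_{k-1}$. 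The paper does not address this; your suggested fix (assume each intermediate NPCC is full-dimensional, which is the situation of interest in the Wasserstein application where $x_0$ lies in the relative interior of the convex set) is the right hypothesis to add for a clean statement.
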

\begin{rem}\label{rem:costhastn}
The empirical version of $\costH$ is $\costHn(\U) := \costhn(C_{\U}) = \frac{1}{n} \sum_{i=1}^n d^2(x_i,C_{\U})$. A minimizer $\U^{\ast}=\{u_1^{\ast},\ldots,u_k^{\ast}\}$ of $\costHn(\U)$ leads to the construction of the empirical GPCC.
\end{rem}
In the following proposition we give a sufficient condition for the standard PCA on $H$ to be a solution of the CPCA problem. For the sake of simplicity, we  state the result only for GPCC. Given $x \in H$ and $C$ a closed convex subset of $H$, we denote by  $\Pi_{_C}x$ the projection of $x$ onto $C$.

\begin{prop}\label{prop:PCA2PGA}
Let  $\tilde{\U}=\{\tilde{u}_1,\ldots,\tilde{u}_k\} \subset  H$ be a set of orthonormal eigenvectors, associated to the $k$ largest eigenvalues of the covariance operator $K y = \E \langle \bx - x_0, y \rangle (\bx-x_0), \; y \in H.$ If $\Pi_{x_0+\spann(\tilde{\U})} \bx \in X \text{ a.s.}$, then $C_{\tilde{\U}} \in \GG_{x_0,k}(X)$.
\end{prop}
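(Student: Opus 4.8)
The plan is to reduce the CPCA optimization problem on $H$ to the classical PCA problem, using the monotonicity of $\costh$ (Proposition \ref{prop:costastG}) and the hypothesis $\Pi_{x_0+\spann(\tilde{\U})}\bx\in X$ a.s. By Proposition \ref{prop:costastG}, it suffices to show that $\tilde{\U}=\{\tilde u_1,\ldots,\tilde u_k\}$ minimizes $\costH$ over orthonormal sets $\U=\{u_1,\ldots,u_k\}\subset H$, i.e.\ that $\costH(\U)\ge\costH(\tilde\U)$ for every such $\U$. So I would fix an arbitrary orthonormal $\U$ and compare $\costh(C_\U)$ with $\costh(C_{\tilde\U})$.

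The first step is the upper bound for the candidate. For any orthonormal $\U$, the affine subspace $x_0+\spann(\U)$ contains $C_\U=(x_0+\spann(\U))\cap X$, so $\costh(C_\U)\ge \E\,d^2(\bx,x_0+\spann(\U))$. Conversely, for $\tilde\U$ the hypothesis $\Pi_{x_0+\spann(\tilde\U)}\bx\in X$ a.s.\ means the $H$-projection of $\bx$ onto the affine subspace $x_0+\spann(\tilde\U)$ actually lands inside $X$, hence inside $C_{\tilde\U}$ (it lies in $X$ and in the affine subspace); therefore $d(\bx,C_{\tilde\U})=d(\bx,x_0+\spann(\tilde\U))$ a.s.\ and so $\costh(C_{\tilde\U})=\E\,d^2(\bx,x_0+\spann(\tilde\U))$. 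Combining,
\begin{equation*}
\costH(\U)=\costh(C_\U)\ge \E\,d^2(\bx,x_0+\spann(\U))\ \ \text{and}\ \ \costH(\tilde\U)=\E\,d^2(\bx,x_0+\spann(\tilde\U)).
\end{equation*}

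The second step is the classical variational characterization of PCA in $H$. Writing $\Pi_{\spann(\U)}$ for the orthogonal projection onto the linear span, $d^2(\bx,x_0+\spann(\U))=\|(\bx-x_0)-\Pi_{\spann(\U)}(\bx-x_0)\|^2=\|\bx-x_0\|^2-\sum_{j=1}^k\langle \bx-x_0,u_j\rangle^2$ for orthonormal $\U$, so minimizing $\E\,d^2(\bx,x_0+\spann(\U))$ over orthonormal $\U$ is equivalent to maximizing $\sum_{j=1}^k \E\langle\bx-x_0,u_j\rangle^2=\sum_{j=1}^k\langle Ku_j,u_j\rangle$, the trace of $K$ restricted to $\spann(\U)$. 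The standard Ky Fan / Courant--Fischer argument for the compact self-adjoint positive operator $K$ shows this trace is maximized by taking $\U=\tilde\U$, the orthonormal eigenvectors for the $k$ largest eigenvalues, giving $\E\,d^2(\bx,x_0+\spann(\tilde\U))\le \E\,d^2(\bx,x_0+\spann(\U))$ for every orthonormal $\U$. Chaining this with the two displays above yields $\costH(\tilde\U)\le\costH(\U)$, so $\tilde\U$ is a minimizer of $\costH$ over orthonormal sets, and Proposition \ref{prop:costastG} gives $C_{\tilde\U}\in\GG_{x_0,k}(X)$.

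The only genuinely delicate point — the main obstacle — is verifying that $d(\bx,C_{\tilde\U})=d(\bx,x_0+\spann(\tilde\U))$ a.s., which is exactly where the hypothesis $\Pi_{x_0+\spann(\tilde\U)}\bx\in X$ is used: in general $\costh$ projects onto the \emph{convex} set $C_{\tilde\U}$, not onto the affine subspace, and these differ unless the affine projection already lies in $X$. Everything else (the trace reformulation, Ky Fan's maximum principle for $K$ compact self-adjoint — $K$ is compact since $\bx$ is square-integrable, as in the standard covariance-operator argument) is routine. I should also note in passing that $C_{\tilde\U}\in\mathrm{CC}_{x_0,k}(X)$ by Definition \ref{def:costast}(b), so the comparison takes place within the correct class.
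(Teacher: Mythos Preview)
Your proposal is correct and follows essentially the same approach as the paper's proof: use the hypothesis to identify $\costH(\tilde\U)$ with the unconstrained $\E\,d^2(\bx,x_0+\spann(\tilde\U))$, use monotonicity of $\costh$ to bound $\costH(\U)$ below by $\E\,d^2(\bx,x_0+\spann(\U))$, invoke the classical PCA optimality of $\tilde\U$ for the latter, and conclude via Proposition~\ref{prop:costastG}. The paper simply states the classical PCA step as ``well known'' whereas you spell out the trace/Ky~Fan argument, but the logical skeleton is identical.
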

\begin{proof}
It is well known that $\tilde{\U}$ is minimizer of $\costh(x_0+\spann(\U))= \E \left( \| \bx - \Pi_{x_0+\spann (\U)}\bx \|^2 \right)$, over orthonormal sets $\U=\{ u_1,\ldots,u_k \} \subset H$. Further, $\costH(\U) =  \E \left( \| \bx - \Pi_{(x_0+\spann (\U)) \cap X}\bx \|^2 \right)$ and, since by hypothesis $\Pi_{x_0+\spann(\tilde{\U})} \bx  \in  X$, we have $\costH(\tilde{\U}) =\costh(x_0+\spann(\tilde{\U}))$.

Also, the monotonicity of $\costh$ implies
$\costh(x_0+\spann(\U))\le\costh((x_0+\spann(\U))\cap X)=\costH (\U)$. Finally, from the relations above, we get
\begin{equation*}\costH(\tilde{\U})=\costh(x_0+\spann(\tilde{\U}))\le \costh(x_0+\spann(\U))\le\costH (\U),\end{equation*}
which means that $\tilde{\U}$ is a minimizer of $\costH(\U)$ over orthonormal sets $\U=\{ u_1,\ldots,u_k \} \subset H$. Finally, from Proposition \ref{prop:costastG} we obtain the result.
\end{proof}

\begin{rem}\label{rem:concentrated}
(a) We can informally say that, if the data is sufficiently concentrated around the reference element $x_0$, then the CPCA in $X$ is simply obtained from the standard PCA in $H$. In particular, if there exists a ball $B(x_0,r)$, with center $x_0$ and radius $r>0$, such that $\bx\in B(x_0,r) \subset X$, a.s., then the hypothesis of Proposition \ref{prop:PCA2PGA} is satisfied. Indeed, $\| \Pi_{x_0+\spann(\tilde{\U})} \bx - x_0 \|=\| \Pi_{x_0+\spann(\tilde{\U})} (\bx-x_0) \| \leq \| \bx-x_0 \| \leq r$ and so $\Pi_{x_0+\spann(\tilde{\U})} \bx\in X$.

(b) The previous condition of data concentration is quite strong. However, obtaining weaker conditions ensuring $\Pi_{x_0+\spann(\tilde{\U})} \bx \in X \text{ a.s.}$, seems to be a difficult problem.

(c) If we replace $\bx$ by $\bx^{(n)}$, we obtain the empirical version of Proposition \ref{prop:PCA2PGA}. In this case, if  $\tilde{\U}=\{\tilde{u}_1,\ldots,\tilde{u}_k\} \subset  H$ are orthonormal eigenvectors associated to the $k$ largest eigenvalues of the empirical covariance operator $K y = \frac{1}{n}\sum_{i=1}^n \langle x_i - x_0, y \rangle (x_i-x_0), \; y \in H,$
and if $\Pi_{x_0+\spann(\tilde{\U})}x_i \in X$, for $i=1,\ldots,n$,  then $G_{\tilde{\U}}$ is an empirical GPCC.

(d) In this section we have used an arbitrary reference element $x_0\in X$. However, a natural choice for $x_0$ would be $\E \bx$ or $\bar{x}_n:= \E \bx^{(n)}$, in the empirical case.
\end{rem}

\section{Geodesic PCA}\label{sec:GPCA}

We consider $\WS$ equipped with the Borel $\sigma$-algebra ${\cal B}(W)$, relative to the Wasserstein metric. Also, $\bnu$ denotes a $\WS$-valued random element, assumed square-integrable in the sense that $\E d_W^2(\bnu,\lambda)< + \infty$, for some (thus for all) $\lambda\in \WS$. As in Section \ref{sec:preliminaries}, we assume that $\mu\in\WS$ is atomless, thus $F_\mu$ is continuous.

\subsection{Fr\'echet Mean}

A natural notion of average in $\WS$ is the Fr\'echet mean, studied in \cite{BK12} in a more general setting. % of random measures with  support in $\R^{d}$, $d \geq 1$.
In what follows we define and give some  properties of the population Fr\'echet mean $\nu^*$ of $\bnu$. Our results are stated in dimension one, that is, in $W_2(\R)$. The higher dimensional case is more involved and we refer to \cite{MR2801182,BK12} for further details.

Observe that if $\bu$ is a $\LL$-valued random element, such that $\E\|\bu\|_{\mu}< + \infty$, then its expectation $\E\bu$ is given by $(\E\bu)(x) = \E\bu(x)$, for all $x \in \R$. Clearly $\|\E\bu\|_{\mu} \leq \E\| \bu \|_{\mu}<\infty$, hence $\E\bu \in \LL$. Also, if $\P(\bu \in \VV)=1$, then $\E\bu \in \VV$.
\begin{prop}\label{prop:frechetmean}\ \\
(i) There exists a unique $\nu^* \in {\cal W}:= \arg \min_{\nu \in  \WS}  \E d_W^{2}(\bnu, \nu)$, called the Fr\'echet mean of $\bnu$.
\ \\
(ii) $\nu^* = \exp_{\mu}(\E\bv)$, where  $\bv=\log_{\mu}(\bnu)$.
\ \\
(iii) $F_{\nu^*}^-=\E (F_{_{\bnu}}^-)$, where $F_{_{\bnu}}$ is the (random) cdf of $\bnu$.
\ \\
(iv) If $F_{\bnu}$ is continuous a.s., then $F_{\nu^*}$ is continuous.
\end{prop}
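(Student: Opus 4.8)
The plan is to establish the four parts in order, relying on the isometry of Theorem \ref{theo:exp} and Proposition \ref{prop:clcv} to transfer the Fr\'echet-mean problem in $\WS$ into a convex-optimization problem in the Hilbert space $\LL$. Set $\bv=\log_\mu(\bnu)$; since $\exp_\mu$ restricted to $\VV$ is an isometry with inverse $\log_\mu$, we have $\|\bv\|_\mu^2=d_W^2(\mu,\bnu)$, so square-integrability of $\bnu$ gives $\E\|\bv\|_\mu^2<+\infty$, hence $\E\|\bv\|_\mu<+\infty$ and $\E\bv\in\VV$ by the remark preceding the proposition (using that $\P(\bv\in\VV)=1$ and $\VV$ is closed and convex). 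For part (i)--(ii), the key identity is that for any $\nu\in\WS$, writing $v=\log_\mu(\nu)\in\VV$,
\begin{equation*}
\E d_W^2(\bnu,\nu)=\E\|\bv-v\|_\mu^2=\E\|\bv-\E\bv\|_\mu^2+\|\E\bv-v\|_\mu^2,
\end{equation*}
the usual bias--variance decomposition valid in any Hilbert space once $\E\bv$ is known to exist. This is uniquely minimized over $v\in\VV$ at $v=\E\bv$, and therefore $\E d_W^2(\bnu,\nu)$ is uniquely minimized over $\nu\in\WS$ at $\nu^*=\exp_\mu(\E\bv)$, since $\log_\mu$ is a bijection $\WS\to\VV$. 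That proves (i) and (ii) simultaneously.

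For part (iii), I would use the explicit form of the logarithmic map: $\bv=F_{\bnu}^-\circ F_\mu-{\rm id}$, so $\E\bv=\E(F_{\bnu}^-)\circ F_\mu-{\rm id}$ (the expectation passing through the fixed composition with $F_\mu$, argument by argument, as in the remark on $\LL$-valued expectations). By definition $\exp_\mu(\E\bv)=(\E\bv+{\rm id})\#\mu=(\E(F_{\bnu}^-)\circ F_\mu)\#\mu$. On the other hand, $\E(F_{\bnu}^-)$ is itself a nondecreasing left-continuous function on $(0,1)$ — being a pointwise (a.e.) average of quantile functions, which by Proposition \ref{prop:closedconvexquantile} is again the quantile function of some measure in $\WS$ — so it equals $F_{\nu^*}^-$ provided we check that $(F_{\nu^*}^-\circ F_\mu)\#\mu=\nu^*$; but that is exactly the Brenier characterization (Theorem \ref{theo:brenier}, dimension one) applied to $\mu$ and $\nu^*$, since $F_\mu$ is continuous. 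Matching the two expressions for $\nu^*$ and using uniqueness of the monotone transport map forces $F_{\nu^*}^-=\E(F_{\bnu}^-)$ a.e.

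For part (iv), assume $F_{\bnu}$ is a.s.\ continuous, equivalently $F_{\bnu}^-$ is a.s.\ strictly increasing on $(0,1)$. I want to conclude $F_{\nu^*}^-=\E(F_{\bnu}^-)$ is strictly increasing, hence $F_{\nu^*}$ continuous. For $0<y_1<y_2<1$ we have $F_{\bnu}^-(y_1)\le F_{\bnu}^-(y_2)$ always and $F_{\bnu}^-(y_1)<F_{\bnu}^-(y_2)$ a.s., so taking expectations gives $\E(F_{\bnu}^-)(y_1)<\E(F_{\bnu}^-)(y_2)$, i.e.\ strict monotonicity, and a quantile function that is strictly increasing on all of $(0,1)$ corresponds to a measure with continuous cdf. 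The one subtlety to handle carefully — and the step I expect to be the main obstacle — is the measurability and interchange-of-expectation issues: one must justify that $y\mapsto F_{\bnu}^-(y)$ is jointly measurable so that $\E(F_{\bnu}^-)$ is well defined pointwise and as an element of $L^2(0,1)$, that composition with $F_\mu$ and push-forward by $\mu$ commute with taking expectations, and that "a.s.\ strictly increasing" survives averaging (a null-set-per-pair argument, made uniform by restricting to rational $y_1,y_2$). None of this is deep, but it is where the argument needs care; the Hilbert-space bias--variance step and the Brenier identification are then routine.
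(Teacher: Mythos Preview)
Your proposal is correct and follows essentially the same strategy as the paper: transfer the Fr\'echet-mean problem to $\LL$ via the isometry of Theorem~\ref{theo:exp}, identify the minimizer as the Hilbert-space mean $\E\bv\in\VV$ (the paper does not spell out the bias--variance identity but uses the same fact), and handle (iii)--(iv) through quantile functions and strict monotonicity. The only cosmetic difference is in (iii): the paper uses (ii) directly to write $F_{\nu^*}^{-}\circ F_{\mu}=\log_\mu(\nu^*)+{\rm id}=\E\bv+{\rm id}=\E(F_{\bnu}^{-})\circ F_{\mu}$ and then cancels the composition with $F_\mu$ (using continuity, hence surjectivity onto $(0,1)$), which is slightly slicker than your second invocation of Brenier uniqueness.
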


\begin{proof} ({\it i, ii}) Let ${\cal L}=\argmin_{u \in  \LL} \E \| \bv-u \|^2_{\mu}$, ${\cal V}=\argmin_{u \in  \VV} \E \| \bv-u \|^2_{\mu}$. From Theorem \ref{theo:exp}, $\E\| \bv - u \|^2_{\mu}= \E d_W^2(\bnu,\exp_{\mu}(u))$, for all $u \in \LL$. Therefore
$\inf_{\nu \in  \WS} \E  d_W^{2}(\bnu,\nu) = \inf_{u \in  \VV} \E \| \bv-u \|^2_{\mu}$, and $u^*\in{\cal V}$ if and only if
$\exp_\nu(u^*)\in{\cal W}$.

On the other hand, $\E\bv\in\VV$ is the unique element of  ${\cal L}$, hence the unique element in ${\cal V}$. Thus, by Theorem \ref{theo:exp}, $\nu^* = \exp_{\mu}(\E\bv)$ is the unique element of ${\cal W}$.
\ \\
({\it iii}) From ({\it ii}) and  \eqref{eq:exp}, we have the chain of equalities
$F_{\nu^*}^{-}\circ F_{\mu}=\log_\mu(\nu^*)+{\rm id}=\E \bv+{\rm id}=\E( \bv+{\rm id})=\E(\log_\mu(\bnu)+{\rm id})=\E(F_{_{\bnu}}^-\circ F_{\mu})=\E(F_{_{\bnu}}^-)\circ F_{\mu}$, which implies $F_{\nu^*}^{-}=\E(F_{_{\bnu}}^-)$ because $F_\mu$ is continuous.
\ \\
({\it iv}) Observe that $F_\nu$ is continuous if and only if $F_\nu^-$ is strictly increasing. So, if
$F_{\bnu}^-(y)-F_{\bnu}^-(x)>0$ a.s., for $x<y$, then $F_{\nu^*}^-(y)-F_{\nu^*}^-(x)=\E(F_{\bnu}^-(y)-F_{\bnu}^-(x))>0$, that is, $F_{\nu^*}$ is continuous.
\end{proof}
\begin{rem}
It is interesting to see, from Proposition \ref{prop:frechetmean}(ii), that $\exp_{\mu}(\E(\log_{\mu}(\bnu)))$ does not depend on $\mu$.
\end{rem}

\subsection{Principal geodesics}\label{sec:pgs}

In this section we present definitions and results similar to those of Section \ref{sec:cpca}; $k$ denotes a positive integer and $\nu_0\in\WS$ is a reference measure.

\begin{defin}\label{def:cost}
For $\nu \in \WS$, $G \subset \WS$, let $d_W(\nu,{G}) = \inf_{\lambda \in {G}}d_W(\nu,\lambda)$ and $\cost(G) := \E d_W^2(\bnu,G)$.
\end{defin}

\begin{defin}\label{def:CL}
Let
\ \\
(a) $\CL(W)$ be the metric space  of  nonempty, closed subsets of $\WS$, endowed with the Hausdorff distance $h_{W_2}$, and
\ \\
(b) $\mathrm{CG}_{\nu_0,k}(W) = \left\{  G \in \CL(W) \; | \; \nu_0 \in G, \; \mbox{$G$ is a geodesic set and $\dim(G) \leq k$}   \right\},\; k\ge1$.
\end{defin}

The notions of  global and nested principal  geodesics of $\bnu$ with respect to $\nu_0$, are presented below, followed by the main existence result. In the case of the nested geodesics, the definition is inductive. The proof depends on the relation between GPCA and CPCA in $\VV$.

\begin{defin}\label{def:GG}
(a) A $(k,\nu_0)$-global principal geodesic (GPG) of $\bnu$ is a set $G_k\in\GG_{\nu_0,k}(W) := \argmin_{G \in \mathrm{CG}_{\nu_0,k}(W)} \cost(G)$.
\ \\
(b) A $(k,\nu_0)$-nested principal geodesic  (NPG) of $\bnu$ is a set $G_k\in\mkern-18mu\argmin_{G \in \mathrm{CG}_{\nu_0,k}(W), G \supset G_{k-1}}\mkern-18mu \cost(G),k\ge2$, with $G_1\in\GG_{\nu_0,k}(W)$.
\end{defin}

\begin{theo}\label{theo:GGnonempty}
If $\Omega$ is compact, then $\GG_{\nu_0,k}(W)$ and $\NN_{\nu_0,k}(W)$ are nonempty.
\end{theo}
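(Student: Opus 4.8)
The plan is to transfer the problem via the isometry $\exp_\mu$ to the setting of Convex PCA, where Theorem \ref{theo:GGnonemptyX} already provides the analogous existence statement. The first step is to observe that, since $\Omega$ is a compact interval, the image $\VV = \log_\mu(\WS)$ is a compact subset of $\LL$: indeed $\VV$ is closed and convex by Proposition \ref{prop:clcv}, and by Remark \ref{rem:VV} every $v \in \VV$ satisfies ${\rm id} + v \in \Omega$ pointwise, so $\|v\|_\mu^2 = \int_\Omega ({\rm id}(x)+v(x)-x)^2\,d\mu(x) \le 2\,\mathrm{diam}(\Omega)^2$, giving boundedness; compactness of the closed bounded convex set $\VV$ then follows from the standard argument for quantile functions (one may invoke Proposition \ref{prop:closedconvexquantile} together with Helly's selection theorem, exactly as in the proof of Proposition \ref{prop:clcv}, to extract convergent subsequences). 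Thus $X := \VV$ plays the role of the compact convex set in Section \ref{sec:cpca}.

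Next I would set up the dictionary between the two problems. Put $\bx := \log_\mu(\bnu)$, an $\VV$-valued random element which is square-integrable because $\E\|\bx\|_\mu^2 = \E d_W^2(\bnu,\mu) < +\infty$ by Theorem \ref{theo:exp} and the square-integrability assumption on $\bnu$; take $x_0 := \log_\mu(\nu_0) \in \VV$ as reference element. By Corollary \ref{coro:geoconvex}, a set $G \subset \WS$ is geodesic if and only if $\log_\mu(G)$ is convex, and by Definition \ref{def:dimG}, $\dim(G) = \dim(\log_\mu(G))$; moreover $\log_\mu$ is a homeomorphism of $(\WS,d_W)$ onto $(\VV,\|\cdot\|_\mu)$ (Theorem \ref{theo:exp}), so it maps $\CL(W)$ isometrically onto $\CL(\VV)$ for the respective Hausdorff distances. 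Consequently $\log_\mu$ carries $\mathrm{CG}_{\nu_0,k}(W)$ bijectively onto $\mathrm{CC}_{x_0,k}(\VV)$, and for any $G \in \mathrm{CG}_{\nu_0,k}(W)$ with $C = \log_\mu(G)$ one has $d_W(\bnu,G) = \|\bx - C\|_\mu$ pointwise (again by the isometry), hence $\cost(G) = \E d_W^2(\bnu,G) = \E d^2(\bx,C) = \costh(C)$. Therefore $G \in \GG_{\nu_0,k}(W)$ if and only if $\log_\mu(G) \in \GG_{x_0,k}(\VV)$, and likewise for the nested version, the inductive condition $G \supset G_{k-1}$ translating into $\log_\mu(G) \supset \log_\mu(G_{k-1})$ since $\log_\mu$ is a bijection.

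Finally, since $X = \VV$ is compact, Theorem \ref{theo:GGnonemptyX} gives $\GG_{x_0,k}(\VV) \ne \emptyset$ and $\NN_{x_0,k}(\VV) \ne \emptyset$; pulling back through $\exp_\mu = \log_\mu^{-1}$ yields $\GG_{\nu_0,k}(W) \ne \emptyset$ and $\NN_{\nu_0,k}(W) \ne \emptyset$, as claimed. The one point requiring genuine care — the main obstacle — is the compactness of $\VV$ in $\LL$ when $\Omega$ is compact: one must check that the uniform bound on ${\rm id}+v$ together with monotonicity really does force $\LL$-norm precompactness. This is handled by passing to quantile functions on $(0,1)$, where a uniformly bounded family of nondecreasing functions is precompact in $L^2(0,1)$ by Helly's theorem and dominated convergence, and then transporting back through the continuous $F_\mu$; all the remaining verifications are bookkeeping on the isometry.
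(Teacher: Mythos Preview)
Your proposal is correct and follows essentially the same route as the paper: transfer to $\VV$ via the isometry $\log_\mu$, invoke Theorem \ref{theo:GGnonemptyX} on the compact convex set $\VV$, and pull back. Your in-line dictionary is exactly the content of Propositions \ref{prop:CPCA2GPCA_G} and \ref{prop:CPCA2GPCA_N}, which the paper simply cites.

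The one place where you work harder than necessary is the compactness of $\VV$, which you flag as the ``main obstacle'' and handle via Helly's selection theorem on bounded monotone functions. The paper bypasses this entirely: when $\Omega$ is compact, it is a standard fact (cited from \cite{villani-topics}) that $\WS$ itself is compact in the Wasserstein metric, and then compactness of $\VV$ is immediate from the isometry of Theorem \ref{theo:exp}. Your Helly argument is valid and self-contained, but the paper's route is shorter and avoids re-proving a well-known result about Wasserstein spaces. (Also, your bound $\|v\|_\mu^2 \le 2\,\mathrm{diam}(\Omega)^2$ can be sharpened to $\mathrm{diam}(\Omega)^2$, since $|v(x)| = |({\rm id}+v)(x) - x| \le \mathrm{diam}(\Omega)$ directly; but this is cosmetic.)
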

\begin{proof}
As $\Omega$ is compact, $\WS$ is also compact (see \cite{villani-topics}) and so is $\VV$, by Theorem  \ref{theo:exp}. Then, Theorem \ref{theo:GGnonemptyX} and Propositions \ref{prop:CPCA2GPCA_G}, \ref{prop:CPCA2GPCA_N} ensure the existence of GPG and NPG.
\end{proof}

\begin{rem}
As commented in Remark \ref{rem:compact} for CPCA, the notions of GPG and NPG are not equivalent, except obviously for $k=1$. %Also, compactness of $\Omega$ is not a necessary assumption, that could possibly be relaxed at the expense of a more involved proof of existence.
\end{rem}

\subsection{Empirical Fr\'echet mean and principal geodesics}\label{sec:empPGA}

\begin{defin}\label{def:bnun}
Given $\nu_1, \ldots, \nu_n \in \WS$, we denote by $\bnu^{(n)}$ the $\WS$-valued random element, such that $\P(\bnu^{(n)} \in A)= \frac{1}{n} \sum_{i=1}^n \mathbbm{1}_{A}(\nu_i)$, for any $A \in{\cal B}(W)$.
\end{defin}

\begin{defin}
The empirical Fr\'echet mean of $\nu_1, \ldots, \nu_n \in \WS$, denoted by $\nu_n^*$, is defined, following Proposition \ref{prop:frechetmean}, as the Fr\'echet mean of $\bnu^{(n)}$ defined above. Equivalently, $\nu_n^*$ is the unique element of $\argmin_{\nu \in  \WS} \frac{1}{n}\sum_{i=1}^n d_W^{2}(\nu_i,\nu)$.
\end{defin}

\begin{prop}\label{prop:empfrechetmean}
Let $\nu_1, \ldots, \nu_n \in \WS$. Then, the following formula holds
\begin{equation}\label{eq:qavg}
F_{\nu_n^*}^{-} = \frac{1}{n} \sum_{i=1}^n F_{\nu_i}^{-}.
\end{equation}
\end{prop}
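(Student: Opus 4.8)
The plan is to deduce the empirical formula directly from the population result Proposition~\ref{prop:frechetmean}(iii), specialized to the empirical random element $\bnu^{(n)}$. First I would recall that, by definition, the empirical Fréchet mean $\nu_n^*$ is precisely the Fréchet mean of the $\WS$-valued random element $\bnu^{(n)}$ from Definition~\ref{def:bnun}, which assigns mass $1/n$ to each $\nu_i$. Since $\bnu^{(n)}$ is square-integrable (being supported on the finite set $\{\nu_1,\ldots,\nu_n\}$), Proposition~\ref{prop:frechetmean} applies verbatim with $\bnu$ replaced by $\bnu^{(n)}$.

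Next I would invoke part (iii) of that proposition, which gives $F_{\nu_n^*}^{-} = \E\bigl(F_{_{\bnu^{(n)}}}^{-}\bigr)$, where $F_{_{\bnu^{(n)}}}$ is the random cdf of $\bnu^{(n)}$. The only remaining task is to compute this expectation. By the definition of $\bnu^{(n)}$, the random cdf $F_{_{\bnu^{(n)}}}$ takes the value $F_{\nu_i}$ with probability $1/n$ for each $i$, hence the random quantile function $F_{_{\bnu^{(n)}}}^{-}$ takes the value $F_{\nu_i}^{-}$ with probability $1/n$. Therefore
\begin{equation*}
\E\bigl(F_{_{\bnu^{(n)}}}^{-}\bigr) = \frac{1}{n}\sum_{i=1}^n F_{\nu_i}^{-},
\end{equation*}
where the expectation of the $L^2(0,1)$-valued random element is taken pointwise in $y\in(0,1)$, as justified in the discussion preceding Proposition~\ref{prop:frechetmean} (with $\mu$ the Lebesgue measure on $(0,1)$ playing the role there assigned to the reference measure). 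Combining the two displays yields \eqref{eq:qavg}.

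There is essentially no obstacle here: the statement is a corollary of the already-established population theorem together with an elementary evaluation of an expectation over a finitely-supported distribution. The only point worth a sentence of care is the measurability/integrability of the map $\omega\mapsto F_{_{\bnu^{(n)}}}^{-}$ as an $L^2(0,1)$-valued random element, but this is immediate since it takes finitely many values, so the pointwise and Bochner notions of expectation coincide and the interchange of $\E$ with pointwise evaluation is trivial. Thus the proof is just: apply Proposition~\ref{prop:frechetmean}(iii) to $\bnu^{(n)}$, then read off the average.
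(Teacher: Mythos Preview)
Your proposal is correct and follows exactly the paper's approach: the paper's proof is the single line ``The result is a direct consequence of Proposition~\ref{prop:frechetmean}(iii)'', and you have simply spelled out that consequence by computing the expectation over the finitely-supported law of $\bnu^{(n)}$. The extra care about measurability/integrability is harmless but unnecessary here, since everything is finitely supported.
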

\begin{proof}
The result is a direct consequence of Proposition \ref{prop:frechetmean}(iii).
\end{proof}
\begin{rem}
Formula \eqref{eq:qavg}  is known in statistics as quantile averaging; see \cite{Zhang2011,Gallon2011}. A detailed characterization of $\nu_n^*$  can be found in \cite{MR2801182}, for measures supported on $\R^d$, $d \geq 1$.
\end{rem}
\begin{defin}\label{def:GGn}
The empirical GPG and NPG are defined as in Definition \ref{def:GG}, with $\bnu$ replaced by $\bnu^{(n)}$.
\end{defin}
\begin{rem} (a) A natural choice for the reference measure $\nu_0$ is the Fr\'echet mean $\nu^*$, which is atomless thanks to Proposition \ref{prop:frechetmean}({\it iv}).
\ \\
(b) In the empirical case $\cost$ is given by $\costn(G) := \E d_W^2(\bnu^{(n)},G) = \frac{1}{n} \sum_{i=1}^n  d_W^2(\nu_i,G)$.
\end{rem}

\subsection{Formulation of GPCA as CPCA in $\VV$}\label{sec:GPCAequivCPCA}

Recall that geodesic sets in $\WS$ are the image under the exponential map $\exp_{\mu}$, of convex sets in $\VV$ (see Corollary \ref{coro:geoconvex}). Thus, the GPCA in $\WS$ can be formulated as a CPCA in $\VV$, as shown in  this section. CPCA is applied to $H=\LL$, $X=\VV$, $x_0=\log_{\mu}(\nu_0)$ and $\bx = \log_{\mu}(\bnu)$. In this setting $\costh(C) = \E  d_{\mu}^2(\bx,C), \; C \subset \VV$.

The following proposition shows that the search of GPG in $\WS$ is equivalent to the search of GPCC in $\VV$. The same principle applies to NPG.

\begin{prop}\label{prop:CPCA2GPCA_G}
Let $\GG_{\nu_0,k}(W)$ be the set of GPG of $\bnu$ and $\GG_{x_0,k}(\VV)$ be the set of GPCC of $\bx=\log_{\mu}(\bnu)$. Then $\GG_{\nu_0,k}(W)=\exp_{\mu}\left(\GG_{x_0,k}(\VV)\right)$.
\end{prop}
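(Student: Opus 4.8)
The plan is to transfer the optimization problem defining GPG in $\WS$ into the optimization problem defining GPCC in $\VV$, using the isometry $\exp_\mu$ (Theorem \ref{theo:exp}) and the correspondence between geodesic sets in $\WS$ and convex sets in $\VV$ (Corollary \ref{coro:geoconvex}). The key observation is that all the ingredients in the two minimization problems are preserved under $\exp_\mu$: the feasible families correspond to each other, the objective functionals agree, and hence so do the argmin sets.

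First I would establish the bijection between feasible sets. I claim $\exp_\mu$ maps $\mathrm{CG}_{\nu_0,k}(W)$ bijectively onto $\mathrm{CC}_{x_0,k}(\VV)$, with inverse $\log_\mu$. Indeed, for $G\in\CL(W)$, the set $\log_\mu(G)$ is closed in $\VV$ because $\log_\mu$ is a homeomorphism (Theorem \ref{theo:exp}); it is convex iff $G$ is geodesic (Corollary \ref{coro:geoconvex}); the condition $\dim(G)\le k$ translates to $\dim(\log_\mu(G))\le k$ by Definition \ref{def:dimG}; and $\nu_0\in G$ iff $x_0=\log_\mu(\nu_0)\in\log_\mu(G)$. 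Conversely any $C\in\mathrm{CC}_{x_0,k}(\VV)$ is of the form $\log_\mu(G)$ with $G:=\exp_\mu(C)$, and $G$ has the required properties by the same equivalences read backwards. This gives the claimed bijection.

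Next I would check that the objective functionals match along this bijection: for $G\in\mathrm{CG}_{\nu_0,k}(W)$ and $C:=\log_\mu(G)$,
\begin{equation*}
\cost(G)=\E\, d_W^2(\bnu,G)=\E\inf_{\lambda\in G}d_W^2(\bnu,\lambda)=\E\inf_{u\in C}\|\log_\mu(\bnu)-u\|_\mu^2=\E\, d_\mu^2(\bx,C)=\costh(C),
\end{equation*}
where the middle equality uses the isometry $d_W(\bnu,\lambda)=\|\log_\mu(\bnu)-\log_\mu(\lambda)\|_\mu$ from Theorem \ref{theo:exp} together with the bijection $\lambda\mapsto\log_\mu(\lambda)$ between $G$ and $C$ (so the infima range over corresponding sets). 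Combining the two displays: minimizing $\cost$ over $\mathrm{CG}_{\nu_0,k}(W)$ is, via the bijection $G\leftrightarrow\log_\mu(G)$, the same as minimizing $\costh$ over $\mathrm{CC}_{x_0,k}(\VV)$; hence $G_k\in\GG_{\nu_0,k}(W)$ iff $\log_\mu(G_k)\in\GG_{x_0,k}(\VV)$, i.e.\ $\GG_{\nu_0,k}(W)=\exp_\mu\bigl(\GG_{x_0,k}(\VV)\bigr)$.

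I do not expect a serious obstacle here; the only point requiring a little care is the measurability/well-definedness of $d_\mu^2(\bx,C)$ as a random variable and the interchange of $\inf$ with the pushforward under $\log_\mu$ — but $\log_\mu$ being a bijective isometry between $G$ and $C$ makes $\inf_{\lambda\in G}d_W(\bnu,\lambda)=\inf_{u\in C}\|\bx-u\|_\mu$ an identity of functions of $\bnu$, so nothing subtle is involved. One should also note in passing that $\log_\mu(\nu_0)=x_0$ is atomless-compatible with the standing assumptions, and that $\VV$ being closed convex (Proposition \ref{prop:clcv}) is what lets us invoke the CPCA framework of Section \ref{sec:cpca} with $X=\VV$; these are already in place.
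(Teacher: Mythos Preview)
Your proposal is correct and follows essentially the same approach as the paper: establish the bijection $\mathrm{CG}_{\nu_0,k}(W)\leftrightarrow\mathrm{CC}_{x_0,k}(\VV)$ via Corollary \ref{coro:geoconvex} (and Definition \ref{def:dimG}), then use the isometry of Theorem \ref{theo:exp} to get $\cost(G)=\costh(\log_\mu(G))$, whence the argmin sets correspond. The paper's proof is simply a more compressed version of the same argument.
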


\begin{proof}
From Corollary \ref{coro:geoconvex} we have $\mathrm{CG}_{\nu_0,k}(W)=\exp_{\mu}\left(\mathrm{CC}_{x_0,k}(\VV)\right)$. On the other hand, from Theorem  \ref{theo:exp} and Definition \ref{def:cost},  $\cost(G) = \E  d_W^2(\bnu,G)  = \E  d_{\mu}^2(\bx,\log_{\mu}(G)) $. Therefore,
$\cost(G) = \costh (\log_{\mu}(G))$, for $G \subset \WS$. The result follows from Theorem \ref{theo:exp}.
\end{proof}

 \begin{prop}\label{prop:CPCA2GPCA_N}
Let $\NN_{\nu_0,k}(W)$ be the set of NPG of $\bnu$ and $\NN_{x_0,k}(\VV)$ the set of NPCC of $\bx=\log_{\mu}(\bnu)$. Then $\NN_{\nu_0,k}(W)=\exp_{\mu}\left(\NN_{x_0,k}(\VV)\right)$.
\end{prop}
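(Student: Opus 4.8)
The plan is to mirror the proof of Proposition \ref{prop:CPCA2GPCA_G}, exploiting once more the isometric correspondence between $\WS$ and $\VV$ furnished by Theorem \ref{theo:exp} and Corollary \ref{coro:geoconvex}, but now tracking the additional nesting constraint $G \supset G_{k-1}$ through the exponential map. First I would recall the two facts already established: for every $G \subset \WS$ one has $\cost(G) = \costh(\log_\mu(G))$, and $\mathrm{CG}_{\nu_0,k}(W) = \exp_\mu\big(\mathrm{CC}_{x_0,k}(\VV)\big)$, with $\exp_\mu$ and $\log_\mu$ mutually inverse bijections between $\WS$ and $\VV$. Since $\log_\mu$ is a bijection, it is in particular inclusion-preserving and inclusion-reflecting: for geodesic sets $G, G' \subset \WS$ we have $G \supset G'$ if and only if $\log_\mu(G) \supset \log_\mu(G')$.

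Then I would proceed by induction on $k$. The base case $k=1$ is exactly Proposition \ref{prop:CPCA2GPCA_G} applied with $k=1$, since $\NN_{\nu_0,1}(W) = \GG_{\nu_0,1}(W)$ and $\NN_{x_0,1}(\VV) = \GG_{x_0,1}(\VV)$ by Definitions \ref{def:GG} and \ref{def:GGX}. For the inductive step, fix $k \ge 2$ and suppose $\NN_{\nu_0,k-1}(W) = \exp_\mu\big(\NN_{x_0,k-1}(\VV)\big)$. Let $G_{k-1} \in \NN_{\nu_0,k-1}(W)$ and set $C_{k-1} := \log_\mu(G_{k-1}) \in \NN_{x_0,k-1}(\VV)$. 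Using $\mathrm{CG}_{\nu_0,k}(W) = \exp_\mu(\mathrm{CC}_{x_0,k}(\VV))$, the identity $\cost(\exp_\mu(C)) = \costh(C)$, and the fact that $G \supset G_{k-1} \iff \log_\mu(G) \supset C_{k-1}$, one obtains
\begin{equation*}
\argmin_{G \in \mathrm{CG}_{\nu_0,k}(W),\, G \supset G_{k-1}} \cost(G) \;=\; \exp_\mu\Big( \argmin_{C \in \mathrm{CC}_{x_0,k}(\VV),\, C \supset C_{k-1}} \costh(C) \Big),
\end{equation*}
i.e.\ the NPG of $\bnu$ extending $G_{k-1}$ are precisely the images under $\exp_\mu$ of the NPCC of $\bx$ extending $C_{k-1}$. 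Taking the union over all $G_{k-1} \in \NN_{\nu_0,k-1}(W)$ on the left and, by the inductive hypothesis, over all $C_{k-1} \in \NN_{x_0,k-1}(\VV)$ on the right, yields $\NN_{\nu_0,k}(W) = \exp_\mu\big(\NN_{x_0,k}(\VV)\big)$.

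I do not anticipate a genuinely hard step: the argument is essentially bookkeeping, transporting the inductive definition of NPCC through the bijection $\exp_\mu$. The one point requiring mild care is the interaction between the inductive definitions and the set-valued nature of $\NN$: one must be careful that the $\argmin$ defining NPG at level $k$ ranges over $G$ extending \emph{some} $G_{k-1} \in \NN_{\nu_0,k-1}(W)$, and to check that this "union over valid predecessors" commutes with $\exp_\mu$ — which it does, since $\exp_\mu$ is a bijection and hence commutes with arbitrary unions and preserves and reflects the constraints $\nu_0 \in G$, $\dim(G) \le k$, $G$ geodesic, and $G \supset G_{k-1}$. As noted in the excerpt, since the proof parallels that of Proposition \ref{prop:CPCA2GPCA_G} line by line, it could reasonably be abbreviated or omitted.
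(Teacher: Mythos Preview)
Your proposal is correct and follows exactly the approach the paper intends: the paper omits the proof of Proposition \ref{prop:CPCA2GPCA_N}, remarking just before Proposition \ref{prop:CPCA2GPCA_G} that ``the same principle applies to NPG'', and your induction simply makes explicit the transport of the nesting constraint through the isometry of Theorem \ref{theo:exp} and Corollary \ref{coro:geoconvex}. Your caveat about the set-valued nature of $\NN$ and the union over valid predecessors is well taken and correctly handled.
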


\section{Numerical examples of GPCA in $W_2(\R)$}\label{sec:numerical}

In Section \ref{sec:conc} we show an example of concentrated data, such that  Proposition \ref{prop:PCA2PGA} can be applied and the problem of finding GPG is reduced to standard PCA on the logarithms; see Remark \ref{rem:concentrated}(a). In Section \ref{sec:spread} we exhibit ``spread-out data'', where the GPG cannot be obtained from standard PCA.

\subsection{Concentrated data} \label{sec:conc}

We consider the set of  probabilities $\nu_1,\ldots,\nu_4$, with  densities $f_1,\ldots,f_4$, displayed in Figure \ref{fig:exintro}. These measures satisfy the location-scale model \eqref{eq:locsclmodel}, with $\mu_0$ being the standard Gaussian measure and the values of $a_i$ and $b_i$ given in Table \ref{table:concvals}.

\begin{table}[htbp]
	\centering
		\begin{tabular}{lrrrr}
		\hline \\
		[-2ex]
		$i$ & $1$ & $2$ & $3$ & $4$  \\
		\hline
		$a_i$   & $ 0.4$   & $0.8$   & $1.2$ & $1.6$    \\
		$b_i$   & $-1.8$   & $-0.1$  & $0.7$ & $1.2$    \\
		\hline
		\end{tabular}
		\caption{Values of parameters for concentrated data.}
	\label{table:concvals}
\end{table}
The Fr\'echet mean $\nu^*_4$ of $\nu_1,\ldots,\nu_4$ is computed using the quantile average formula \eqref{eq:qavg}, from which we obtain the density $g^*_{4}$ of  $\nu^*_4$ (Figure \ref{fig:exintro}(f)), given by
\begin{equation*}
g^*_{4}(x) = f^{(\bar{a}_{4},\bar{b}_{4})}(x) = f_{\mu_0}\left( (x-\bar{b}_{4})/\bar{a}_{4}\right)/\bar{a}_{4}=f_{\mu_0}(x), \; x \in \R,
\end{equation*}
where $\bar{a}_{4} = 1$ and $\bar{b}_{4} = 0$ are the arithmetic means of the parameters $a_i, b_i$, and so, $ \nu^*_4 = \mu_0$. Observe that  the measures $\nu_1,\ldots,\nu_4$ are concentrated around their Fr\'echet mean, in the sense that their expectations and variances are not too far from those of $\nu_{4}^*$ (see Figure \ref{fig:exintro}).

We apply Propositions \ref{prop:PCA2PGA} and \ref{prop:CPCA2GPCA_G} to compute an empirical first GPG, with both $\mu$ and $\nu_0$ equal to $\mu_0$. Let $w_1$ be the eigenvector associated to the largest eigenvalue of the empirical covariance operator
$K v = \sum_{i=1}^4 \langle v_i, v\rangle v_i/4, \; v \in L^2_{\mu_0}(\R)$, where
\begin{equation*}
v_ i (x) =  \log_{\mu_0 } (\nu_i )(x) = \left(  a_i  - 1 \right) x + b_i  , \;i=1,\ldots,4; x \in \R.
\end{equation*}
 Given that the $v_{i}\in A\subseteq L^2_{\mu_0}(\R)$, the subspace of affine functions (generated by the identity and the constant function $1$, which are orthonormal in $L^2_{\mu_0}(\R)$), the operator $K$ can be identified with the $2 \times 2$ matrix $M =  \sum_{i=1}^4 V_i'V_i/4$
with $V_{i} = (  a_i  - 1 , b_i )' \in \R^2$, $1 \leq i \leq 4$. Therefore, $w_1\in A$ and $w_1(x) = \alpha_1 x + \beta_1$, where  $W_1:=(\alpha_1,\beta_1)'=(0.36,0.93)' \in \R^2$ is the eigenvector associated to the largest eigenvalue of  $M$. In other words, computing $w_1$ simply amounts to calculating the first eigenvector associated to the standard PCA of the $V_{i} \in \R^{2}$, which represent the slope and intercept  parameters of the functions $v_{i}$. In Figure \ref{fig:exlinGPCA}  we display the vectors $V_{i}$ (circles), together with the linear space spanned by $W_1$ (dash-dot line), which corresponds to the first principal direction of variation of this dataset.

Affine functions $u(x)= \alpha x + \beta$ in $V_{\mu_0}(\R)$ are represented by points $(\alpha,\beta)'\in\R^{2}$, with $\alpha \geq -1$, which is the region to the right of the vertical dashed line in Figure \ref{fig:exlinGPCA}. Hence, it can be seen from the projections of the $V_{i}$ onto the space spanned by $W_1$, that $\Pi_{\spann(w_1)} v_i  \in  V_{\mu_0}(\R) $, for $1 \leq i \leq 4$. Therefore, from Propositions \ref{prop:PCA2PGA} and \ref{prop:CPCA2GPCA_G}, the set of probability measures
\begin{equation*}
G_{1} = \left\{ \nu_{1,t}:=\exp_{\mu_0}(tw_1)\;|\; t \in \R, 1 + t \alpha_1 \geq 0 \right\},
\end{equation*}
is a first empirical GPG. From \eqref{eq:locsclmodel} and \eqref{eq:expnuab}, each  $\nu_{1,t}\in G_{1}$ admits the density
\begin{equation}
g_{1,t}(x) = f_{\mu_0}\left( (x-b_{1,t})/a_{1,t} \right)/a_{1,t}, \; x \in \R, \label{eq:gtilde}
\end{equation}
with $a_{1,t} = 1 + t \alpha_1$ and $b_{1,t} = t \beta_1$. In Figure \ref{fig:geoPC}, we display  the first principal mode of geodesic variation $g_{1,t}$, for $-2 \leq t \leq 2$, of the densities displayed in Figure \ref{fig:exintro}. As already mentioned, the GPCA  in $\WS$  gives a better  interpretation of the data variability, when compared to results from the  first principal  mode of linear variation of the densities in $L^{2}_{\mu_0}(\R)$, displayed  in Figure \ref{fig:linPC}.

\begin{figure}[h!]
\centering
{ \includegraphics[width=7cm,height=7cm]{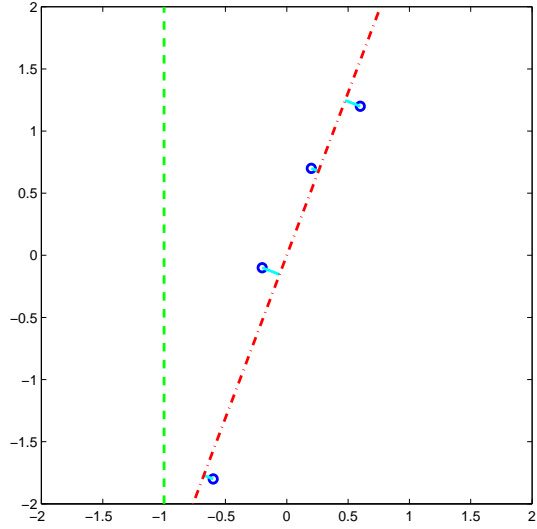} }

\caption{ Two-dimensional representation of the affine functions  $u(x)= \alpha x + \beta$ in $L^2_{\mu_0}(\R)$. The horizontal and vertical axes represent the slope  and the intercept parameters $\alpha, \beta$, respectively. Points to the right of the vertical dashed line at $\alpha = -1$, correspond to affine functions in $V_{\mu_0}(\R) $.  Circles represent the vectors $V_{i} = (  a_i  - 1 , b_i )'$, associated to the functions $v_ i (x) =  \left(  a_i  - 1 \right) x + b_i$, for $1 \leq i \leq 4$, corresponding to the measures with densities displayed in Figure \ref{fig:exintro}. The dash-dot line is the linear space spanned by the  first eigenvector $W_1$ from the standard PCA of $V_1,\ldots,V_4$.} \label{fig:exlinGPCA}
\end{figure}

\subsection{The case of spread-out data}\label{sec:spread}

We exhibit a case where standard PCA of logs in $L^2_{\mu}(\R)$ does not lead to a solution of GPCA in $\WS$. Measures $\nu_1,\ldots,\nu_4$ are as in Section \ref{sec:conc}, with parameters $a_i, b_i$ given in Table \ref{table:spreadvals}.
We have again $\bar{a}_{4}= 1$, $\bar{b}_{4} = 0$ and $\nu_4^*=\mu_0$. From Figure \ref{fig:spread}, we see that $\nu_1,\ldots,\nu_4$ are less concentrated around $\nu_4^*$, compared to the foregoing example (see Figure \ref{fig:exintro}).

\begin{table}[htbp]
	\centering
		\begin{tabular}{lrrrr}
		\hline \\
		[-2ex]
		$i$ & $1$ & $2$ & $3$ & $4$  \\
		\hline
		$a_i$   & $0.2$   & $0.2$   & $0.2$ & $3.4$    \\
		$b_i$   & $-3$   & $-1$  & $1$ & $3$    \\
		\hline
		\end{tabular}
		\caption{Values of parameters for spread-out data.}
	\label{table:spreadvals}
\end{table}

  \begin{figure}[h!]
\centering
\subfigure[]{ \includegraphics[width=3cm]{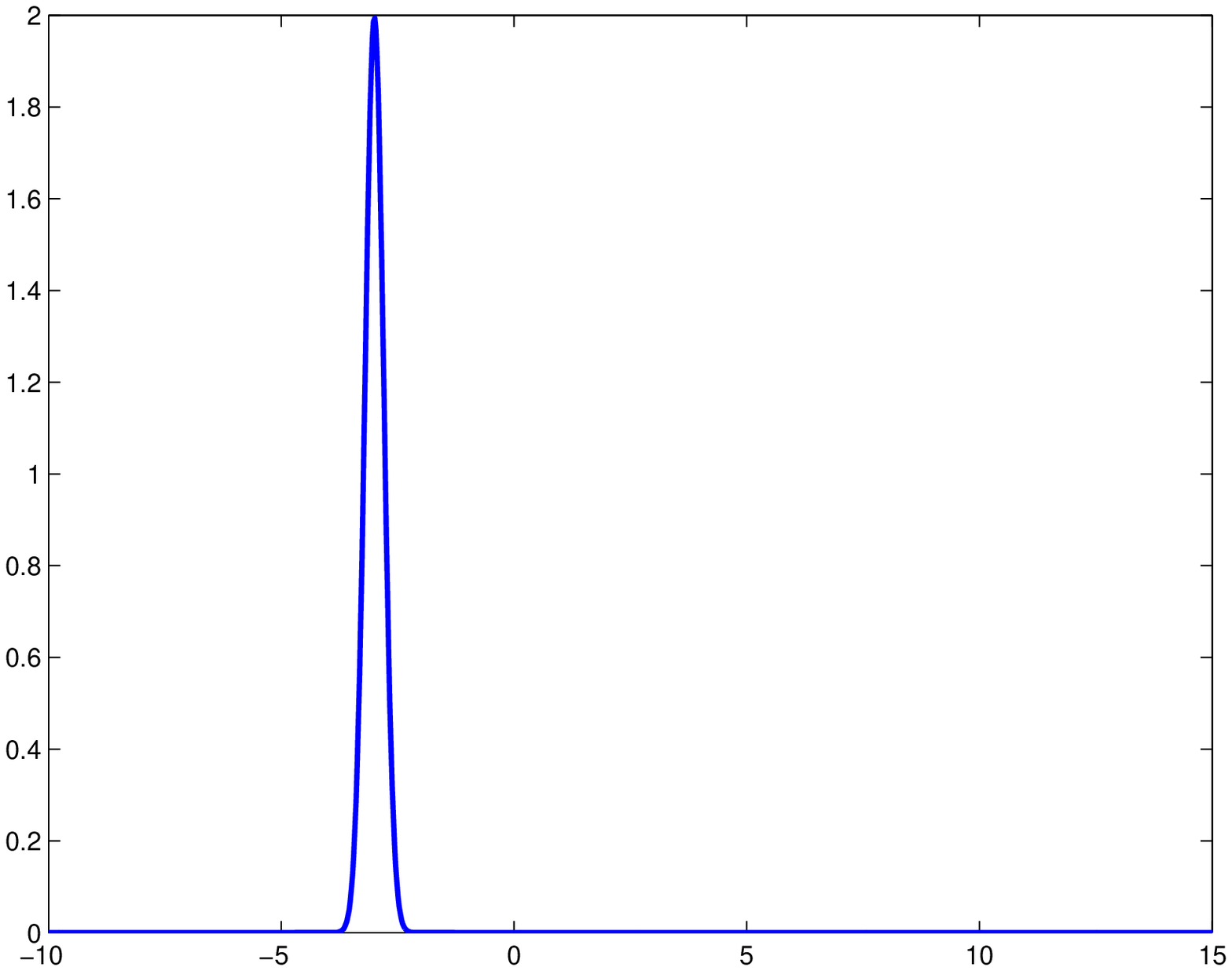} }
\subfigure[]{ \includegraphics[width=3cm]{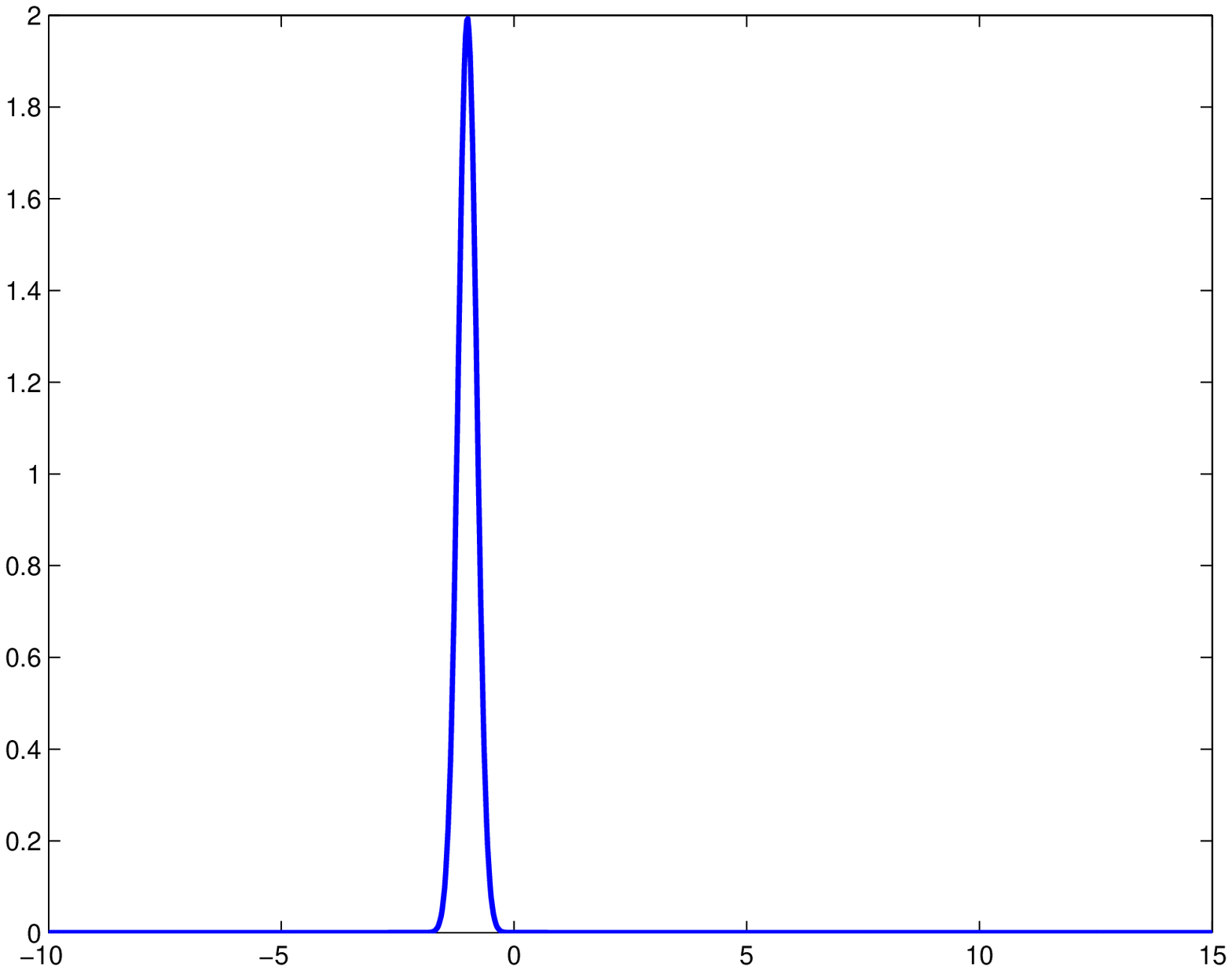} }
\subfigure[]{ \includegraphics[width=3cm]{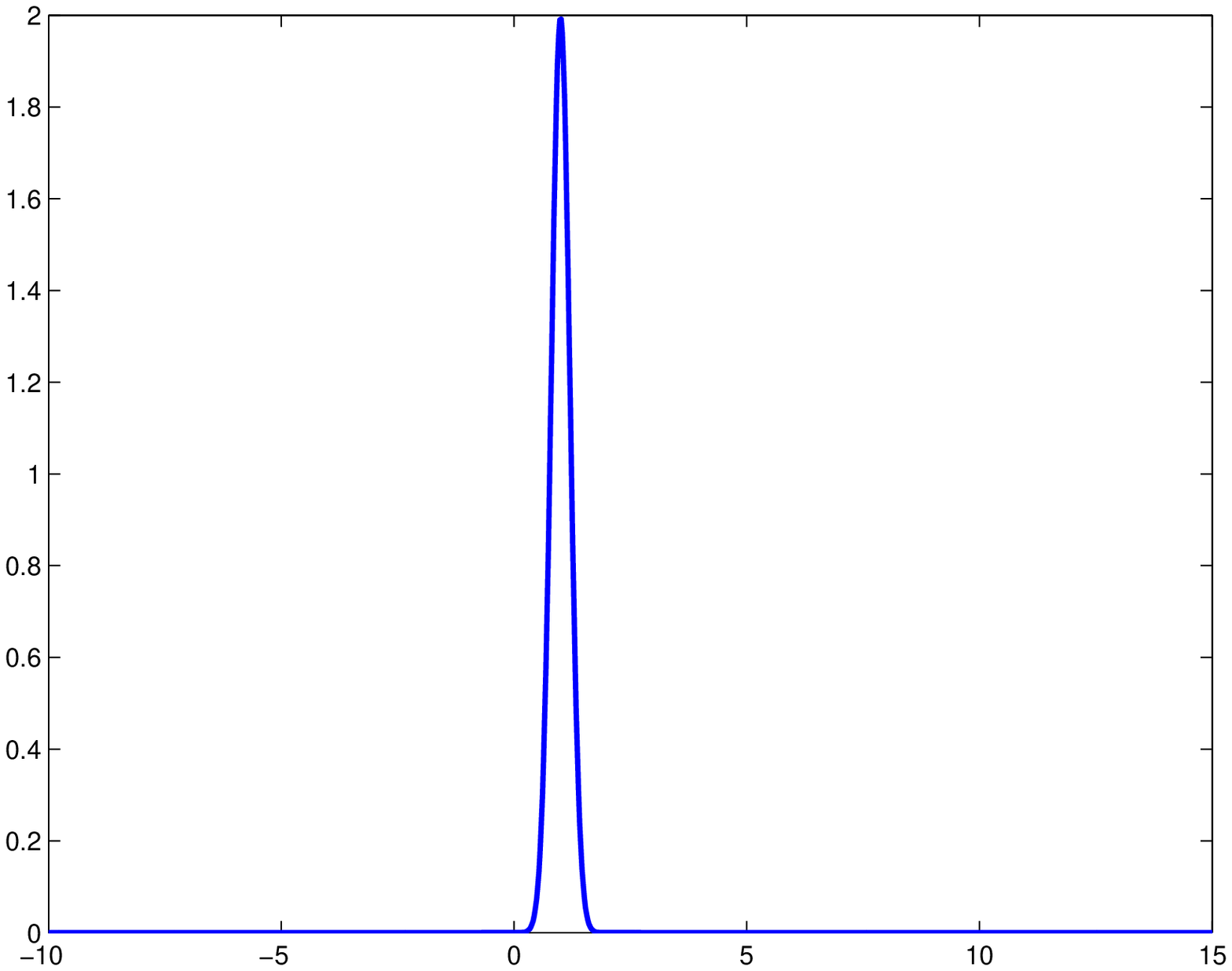} }
\subfigure[]{ \includegraphics[width=3cm]{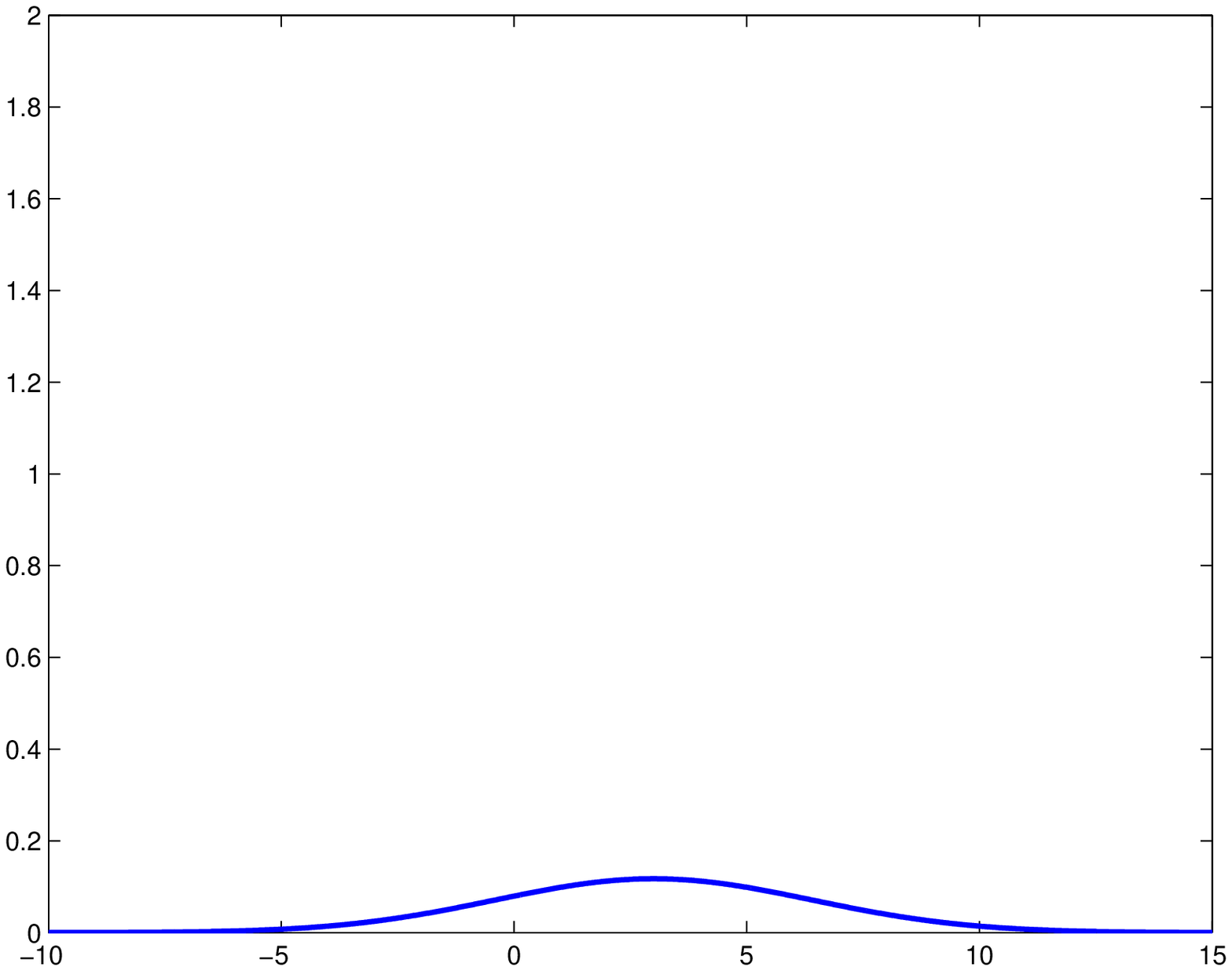} }

\subfigure[]{ \includegraphics[width=3cm]{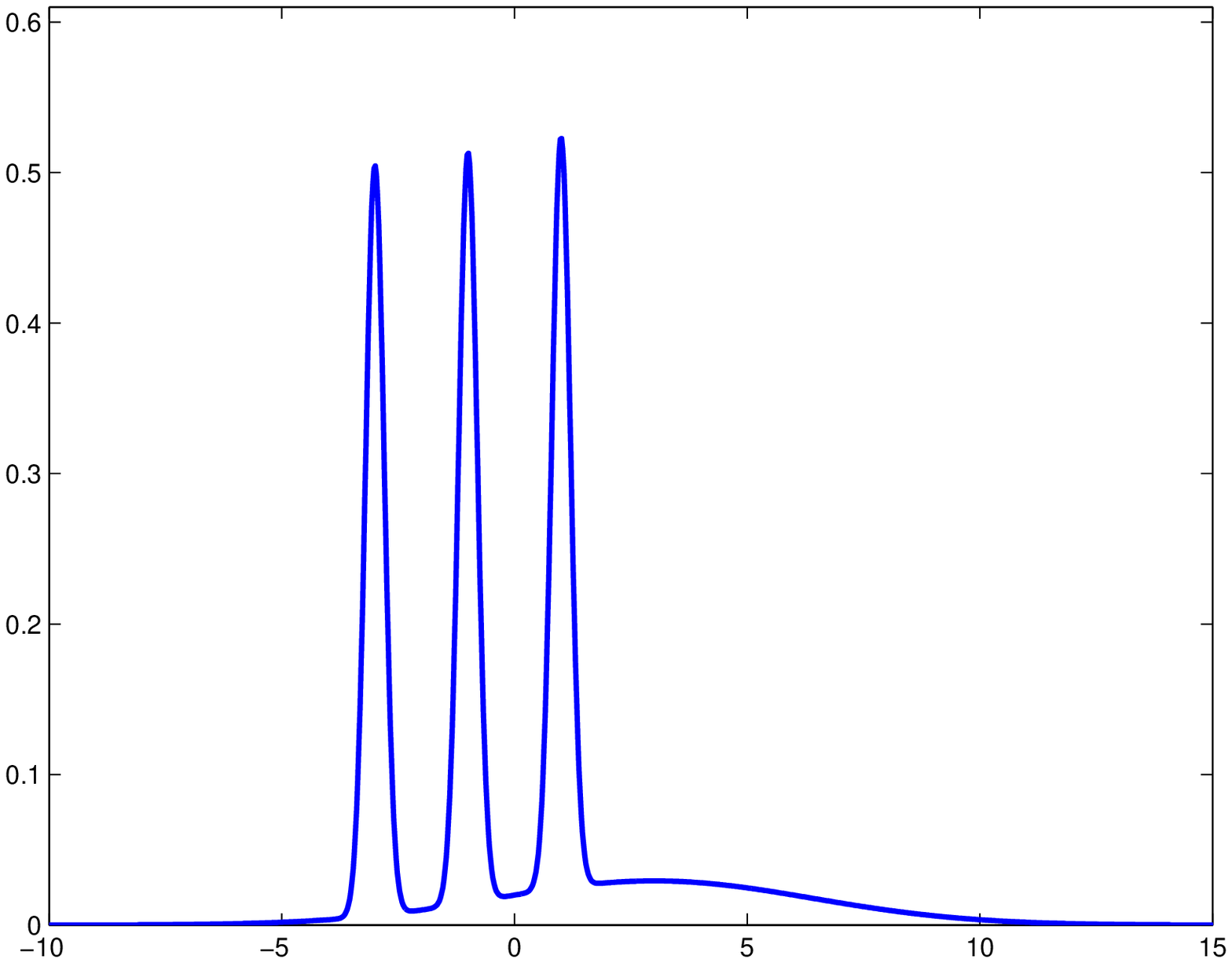} }
\subfigure[]{ \includegraphics[width=3cm]{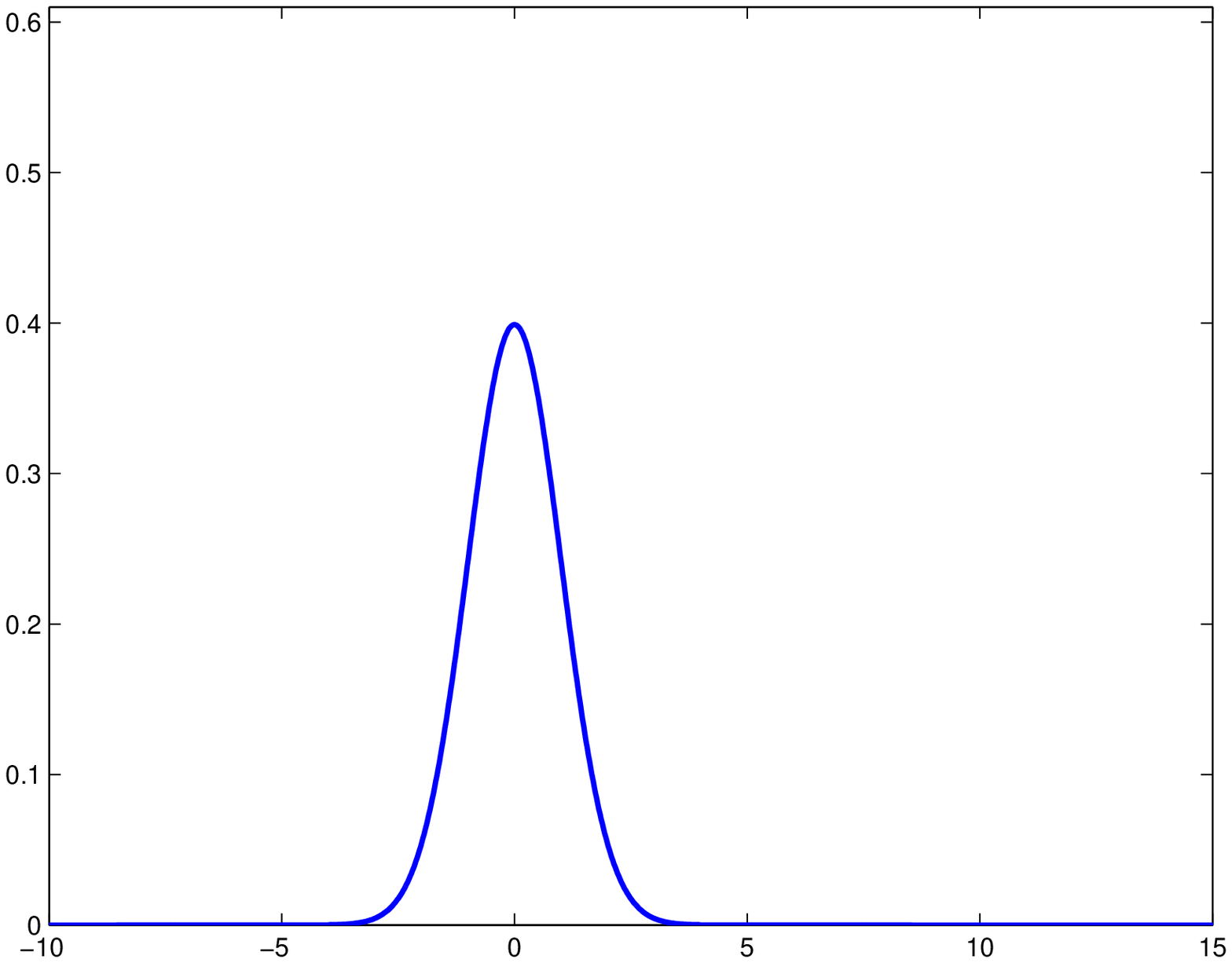} }

\caption{Gaussian densities $f_1,\ldots,f_4$ from the location-scale model \eqref{eq:locsclmodel}, with means and variances given in Table \ref{table:spreadvals}. (e) Euclidean mean of the densities in $L^{2}(\R)$. (f) Density of the barycenter $\nu^*_{4}\in\WS$ of $\nu_1,\ldots,\nu_4$, with densities  $f_1,\ldots,f_4$.} \label{fig:spread}
\end{figure}

As for concentrated data, we first perform a standard PCA on the logarithms in $V_{\mu_0}(\R)$. In what follows, we keep the same notation as in Section \ref{sec:conc}.  In Figure \ref{fig:exnonlinGPCA} we display the vectors $V_i$ and the linear space $\spann(W_1)$.  From the projections of the vectors $V_{i}$ onto the space spanned by $W_1$, it can be seen that  $ \Pi_{\spann(w_1)} v_1  \notin  V_{\mu_0}(\R) $. Therefore, the condition $\Pi_{x_0+\spann(\tilde{\U})} \bx \in X \text{ a.s.}$ in Proposition \ref{prop:PCA2PGA} is not satisfied. Thus, one cannot conclude that $G_1$ is a first empirical GPG.

Now, in order to show that $G_1$ is not a GPG, it suffices to find $G_1^*\in{\rm CG}_{\mu_0,1}(W)$, such that $\costn(G^*)<\costn(G_1^*)$. To that end we perform a CPCA of $v_1,\ldots,v_4$, with $X=A \cap V_{\mu_0}(\R)$ and reference element $x_0=0$. By Proposition \ref{prop:costastG}, this amounts to solving 
\begin{equation}\label{eq:costA}
\min_{u \in A, \|u\|_{\mu_0}=1} \costHn(u)=\frac{1}{n} \sum_{i=1}^n d^2_{\mu_0}(v_i,\spann(u)\cap X).
\end{equation}
On the other hand, letting $Y = \{(\alpha,\beta)' \in \R^2 \; | \; \alpha \geq -1 \}$,  \eqref{eq:costA} is equivalent to

\begin{equation}\label{eq:costU}
\min_{U \in \R^2, \|U\|=1} \operatorname{H}^{(n)}_{Y}(U):=\frac{1}{n} \sum_{i=1}^n d^2(V_i,\spann(U)\cap Y),
\end{equation}
where $d, \|\cdot\|$ are the Euclidean distance and norm in $\R^2$ and $V_{i} = (  a_i  - 1 , b_i )' \in \R^2$, $1 \leq i \leq 4$. We have numerically found a unique minimizer $W_1^{\ast} = (  \alpha_1^{\ast} , \beta_1^{\ast} )$ of \eqref{eq:costU} and so,  $w_1^{\ast}(x) = \alpha_1^{\ast} x + \beta_1^{\ast}$ is the unique minimizer of \eqref{eq:costA}. 

Letting $G_1^*:= \left\{ \nu^{\ast}_{1,t}:=\exp_{\mu_0}(tw_1^{\ast})\;|\; t \in \R,  1 + t \alpha_1^{\ast} \geq 0 \right\}\in{\rm CG}_{\mu_0,1}(W)$, we find that $G_1^*\ne G_1$ and $\costn(G_{1}^{\ast})  < \costn(G_{1})$. Indeed, from Figure \ref{fig:exnonlinGPCA} it can be seen that $W_1^{\ast} \neq W_1$ and also that $\operatorname{H}^{(n)}_{Y}(W^*_1)  < \operatorname{H}^{(n)}_{Y}(W_1)$.

\begin{rem}
For this example of spread-out data, it should be noted that $G_{1}^*$ is not necessarily the first empirical GPG. Indeed,  $G_{1}^*$ is a minimizer of $\costn(G)$ over the sets $G \in \mathrm{CG}_{\mu_0,1}(W)$ such that $G \subset \{\nu^{(a,b)} \;|\; (a,b)  \in (0,\infty) \times \R \}$. Whether or not $G_1^*$ is a first empirical GPC remains as an open issue.
Problem  \eqref{eq:costU} is locally convex in a neighborhood of the (unique) optimum, after parametrization of the unitary sphere of $\R^2$ using polar coordinates.  In the general case of the optimization problems in Propositions \ref{prop:costastG} and \ref{prop:costastN}, issues such as convexity, uniqueness of solution and optimality conditions, need to be addressed.
\end{rem}

\begin{figure}[h!]
\centering
{ \includegraphics[width=7cm,height=7cm]{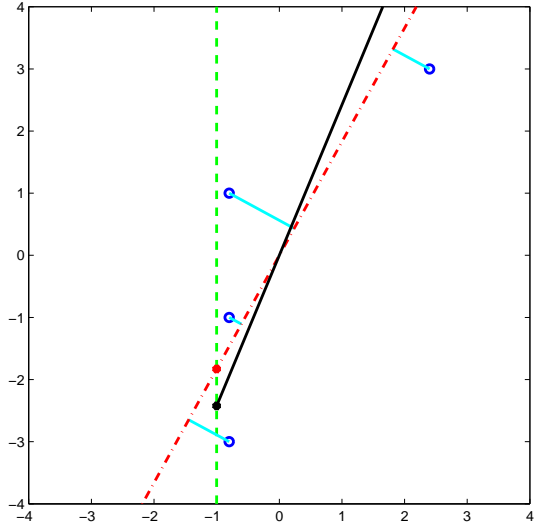} }

\caption{Same interpretation as Figure \ref{fig:exlinGPCA}. The dash-dot line is the linear space spanned by the  first eigenvector $W_1$ from the standard PCA of $V_1,\ldots,V_4$. The solid line is the convex set  $\spann(W_1^{\ast}) \cap Y$ where $W_1^{\ast} = (\alpha_1^*,\beta_1^{\ast})$ is the minimizer of \eqref{eq:costU}.  The dot on the solid line is the projection of $V_1 = (0.2,-3)$ onto  $\spann(W_1^{\ast}) \cap Y$, while  the dot on the dash-dot line is the projection of $V_1$ onto  $\spann(W_1) \cap Y$.} \label{fig:exnonlinGPCA}
\end{figure}

\subsection{Real data example: statistical analysis of population pyramids}\label{sec:realdataex}

We analyze a real dataset consisting of  histograms that represent the population pyramids of 223 countries for the year 2000. This dataset has been studied in \cite{MR2736564} using FPCA of densities. The data  are available from the International Data Base (IDB), produced by the International Programs Center, US Census Bureau (IPC, 2000), and they can be downloaded from the URL  {\tt http://www.census.gov/ipc/www/idb/region.php}. Each histogram in the database represents the relative frequency by age, of people living in a given country. Each bin in a histogram is an interval of one year,  and the last interval corresponds to people older than 85 years. The histograms are normalized so that their area is equal to one, and thus they represent a set of probability density functions.
In Figure \ref{fig:5hist}, we display the population pyramids of five countries.

\begin{figure}[h!]
\centering
\subfigure[Afghanistan]
{\includegraphics[width=0.22 \textwidth,height=0.18\textwidth]{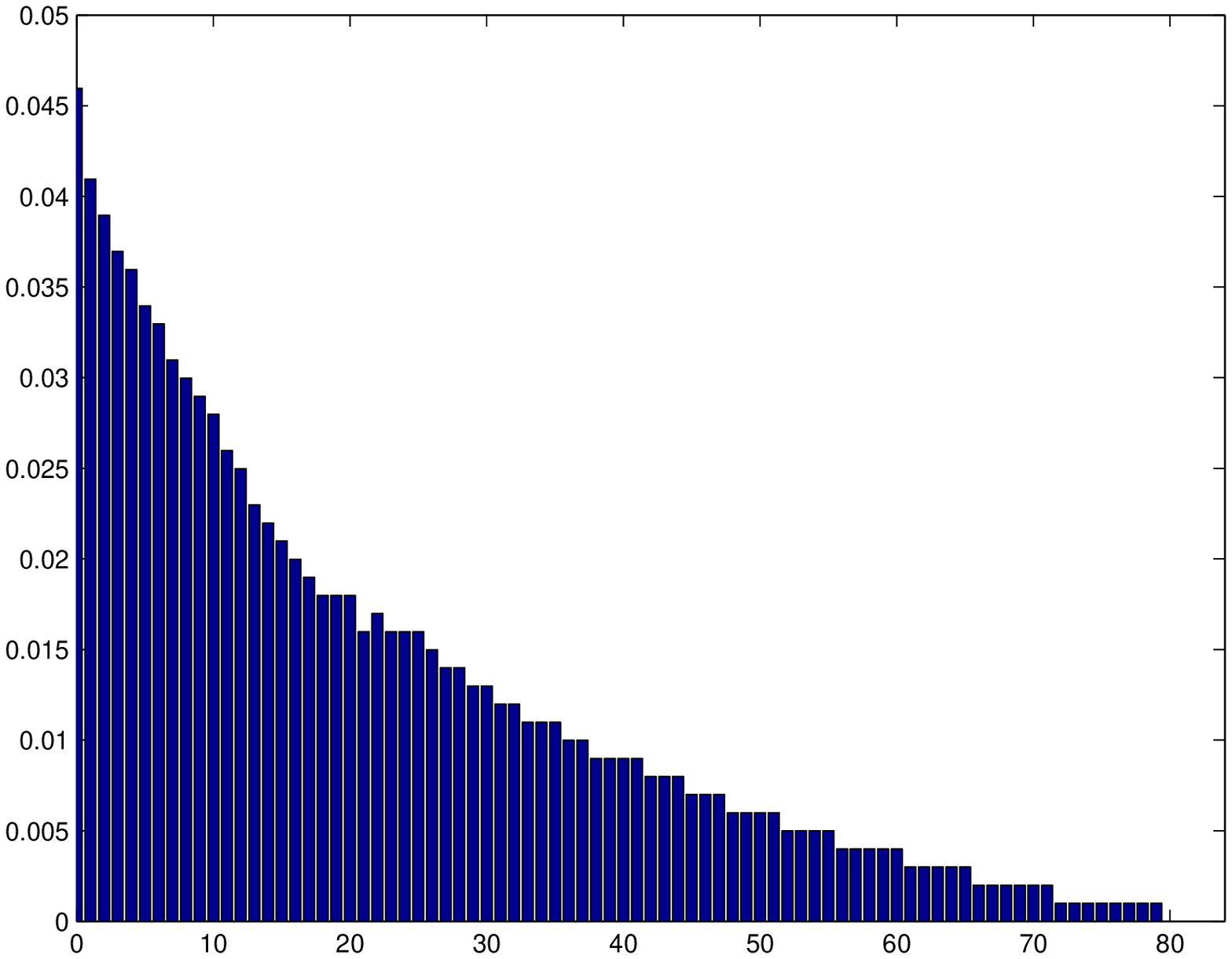}}
\subfigure[Angola]
{ \includegraphics[width=0.22 \textwidth,height=0.18\textwidth]{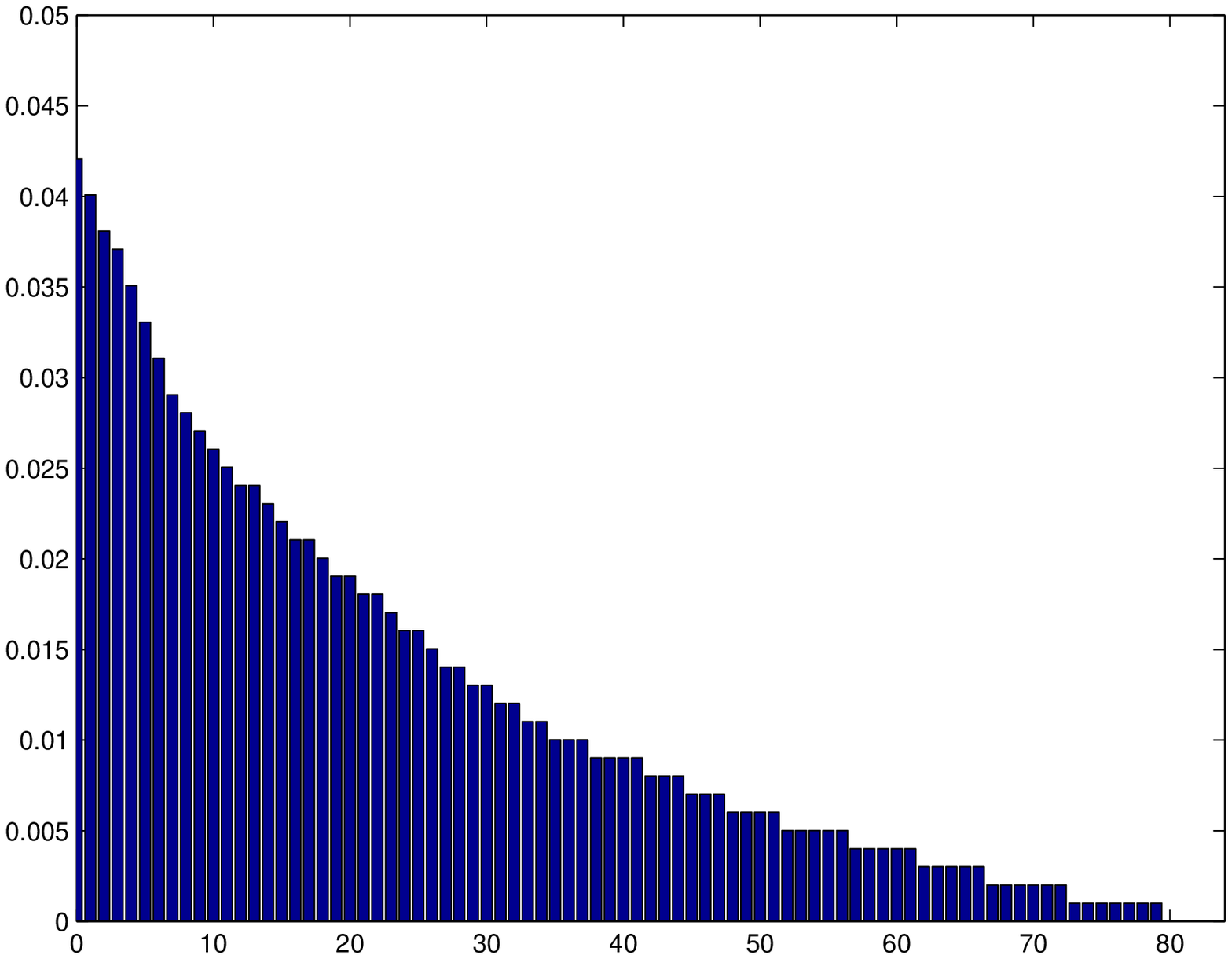}} \\

\subfigure[Australia]
{ \includegraphics[width=0.22 \textwidth,height=0.18\textwidth]{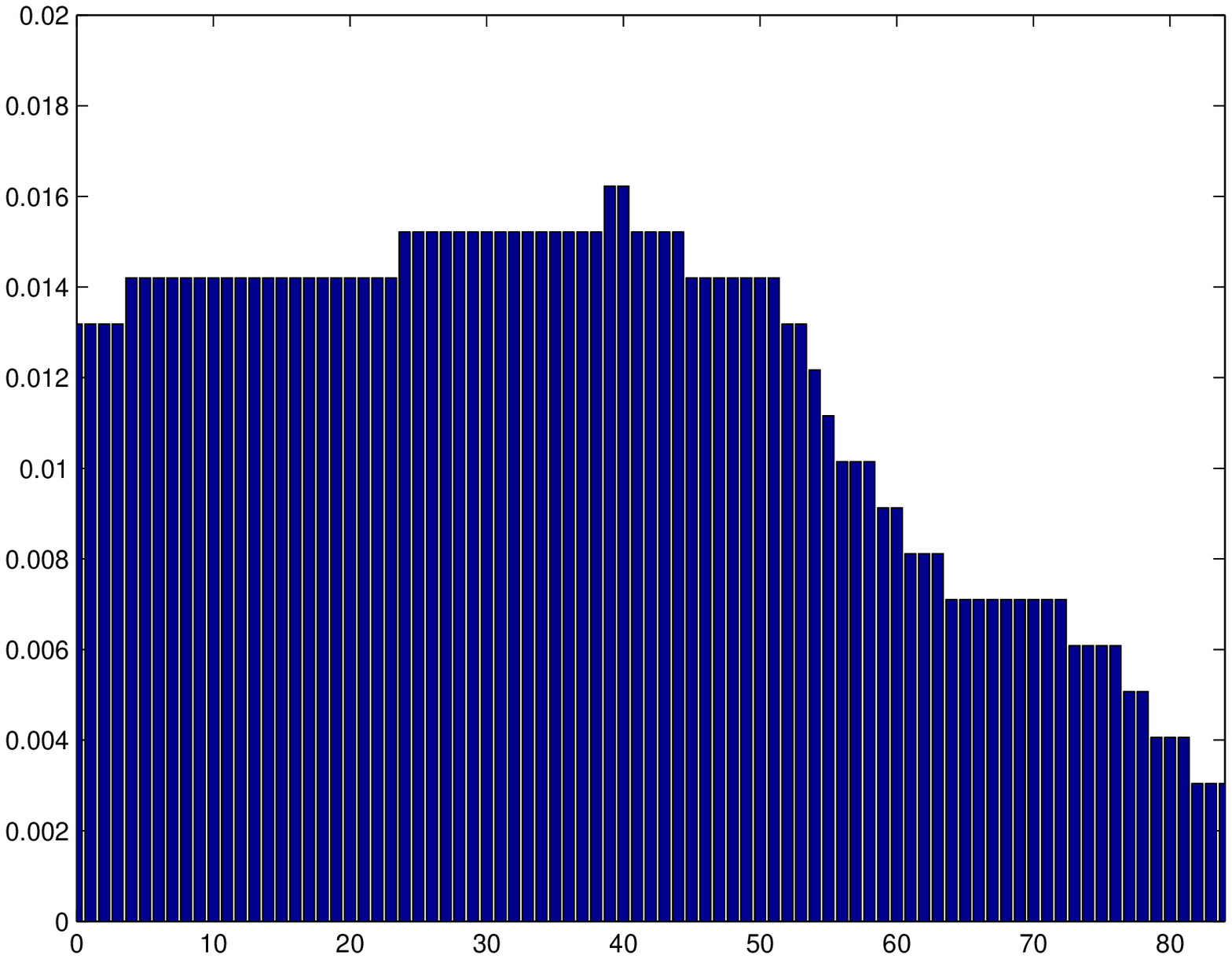}}
 \subfigure[Chile]
 {\includegraphics[width=0.22 \textwidth,height=0.18\textwidth]{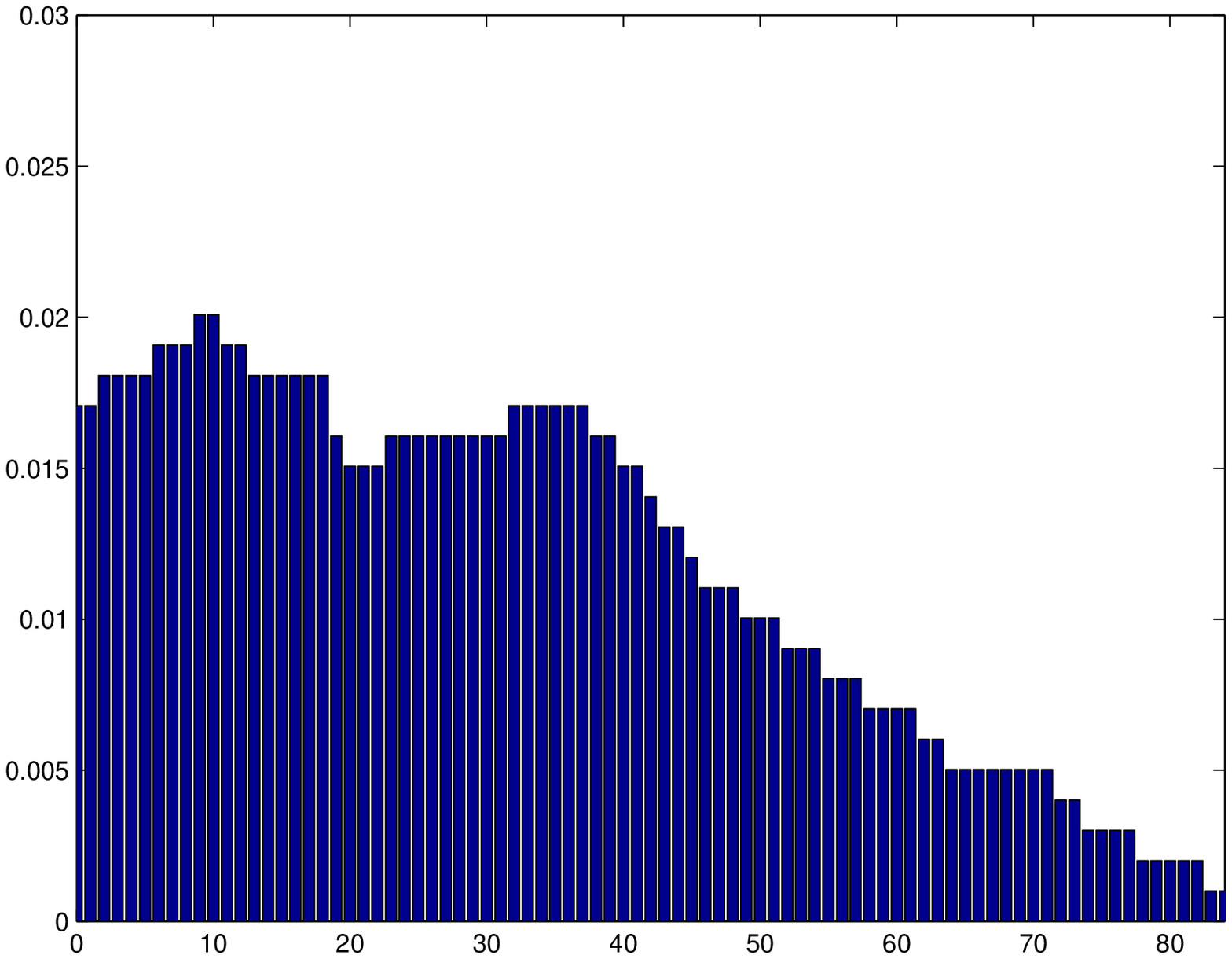}}
 \subfigure[France]
{ \includegraphics[width=0.22 \textwidth,height=0.18\textwidth]{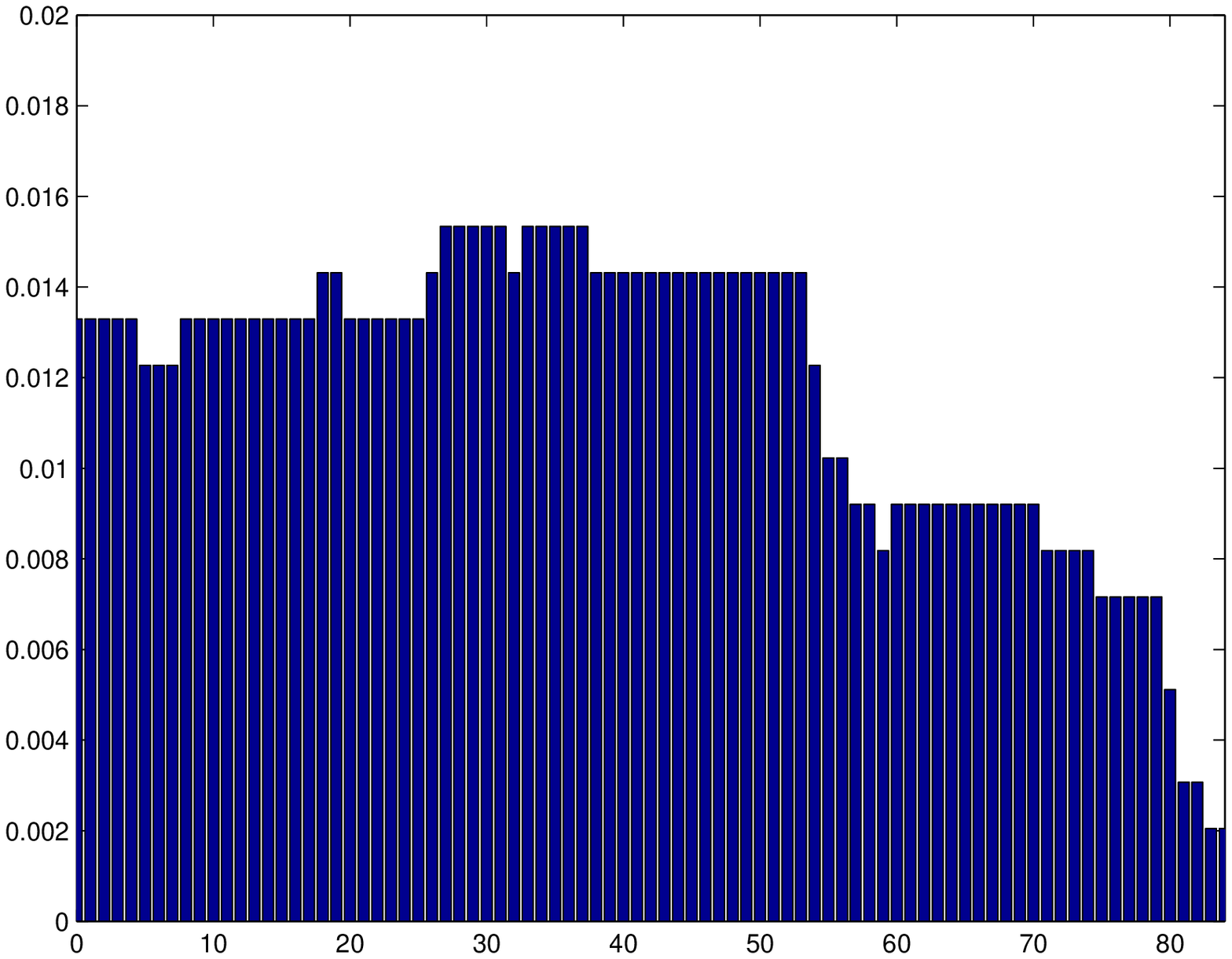}} \\
\caption{Population pyramids of 5 countries for the year 2000.} \label{fig:5hist}
\end{figure}

For the purpose of  summarizing this dataset in an efficient way, we propose to compare the results obtained using either FPCA or GPCA. Note that FPCA of histograms amounts to a standard multivariate PCA in the Euclidean space $\R^{p}$ with $p=85$. In Figure \ref{fig:FPCA_GPCA}(a), we display the projection of the data onto the first two principal components (PC) when performing FPCA. Note that 81 \% (resp.\ 8 \%) of variability is explained by the first  PC (resp.\ the second PC).

To perform GPCA we proceed as follows. First we compute the cdf of each histogram, which allows, from \eqref{eq:exp}, the computation of $v_ i =  \log_{\nu_n^* } (\nu_i ), i=1,\ldots,n=223$, where $\nu_i $ is the probability associated to the $i$-th histogram and $\nu_n^*$ is the Fr\'echet mean of these probability measures in $\WS$. Then, we perform the FPCA of the $v_ i$ in $L^2_{\nu_n^*}(\R)$ to compute the first two PC that we denote by $w_1$ and $w_2$. For this dataset, we notice that the conditions  $\Pi_{\spann(\{w_1,w_2\} )}v_i \in V_{\nu_{n}^*}$, for all $i=1,\ldots,n=223$, are satisfied. Therefore, Propositions \ref{prop:PCA2PGA} and \ref{prop:CPCA2GPCA_G}, the FCPA of data-logarithms in $L^2_{\nu_n^*}(\R)$   leads to a solution of GPCA in $\WS$. In Figure \ref{fig:FPCA_GPCA}(b), we display the projection in  $L^2_{\nu_n^*}(\R)$ of  $v_ i, i=1,\ldots,n$,  onto the first two PC  $w_1$ and $w_2$. Note that, when using GPCA,  96 \% (resp.\ 2 \%) of variability is explained by the first  PC (resp.\ the second PC). Hence, we achieve a better reduction of dimensionality by the use of GPCA. Using the representation in the Wasserstein space, one may conclude that this dataset is essentially one dimensional, in terms of variability around its Fr\'echet mean in $\WS$. In particular, this fact can be observed  in Figure \ref{fig:FPCA_GPCA_pays}, where we plot the projections of the five histograms displayed  in Figure \ref{fig:5hist}.
\begin{rem}\label{rem:randobs}
In this paper we assume that the input data consist of probabilities $\nu_1,\ldots,\nu_n$ belonging to $\WS$. However, in many applications we may have access, only to random observations from each of these probabilities. A natural strategy to address this issue is to estimate the associated densities by means of kernel estimators, for instance, and then a GPCA could be applied to the estimations. Possibly, more efficient estimators of principal geodesics could be obtained by adapting ideas from \cite{MR1946423} which would, however, require a simple representation of principal geodesics in terms of densities.
\end{rem}

\begin{figure}[h!]
\centering
 \subfigure[FPCA]
{ \includegraphics[width=5cm,height=4cm]{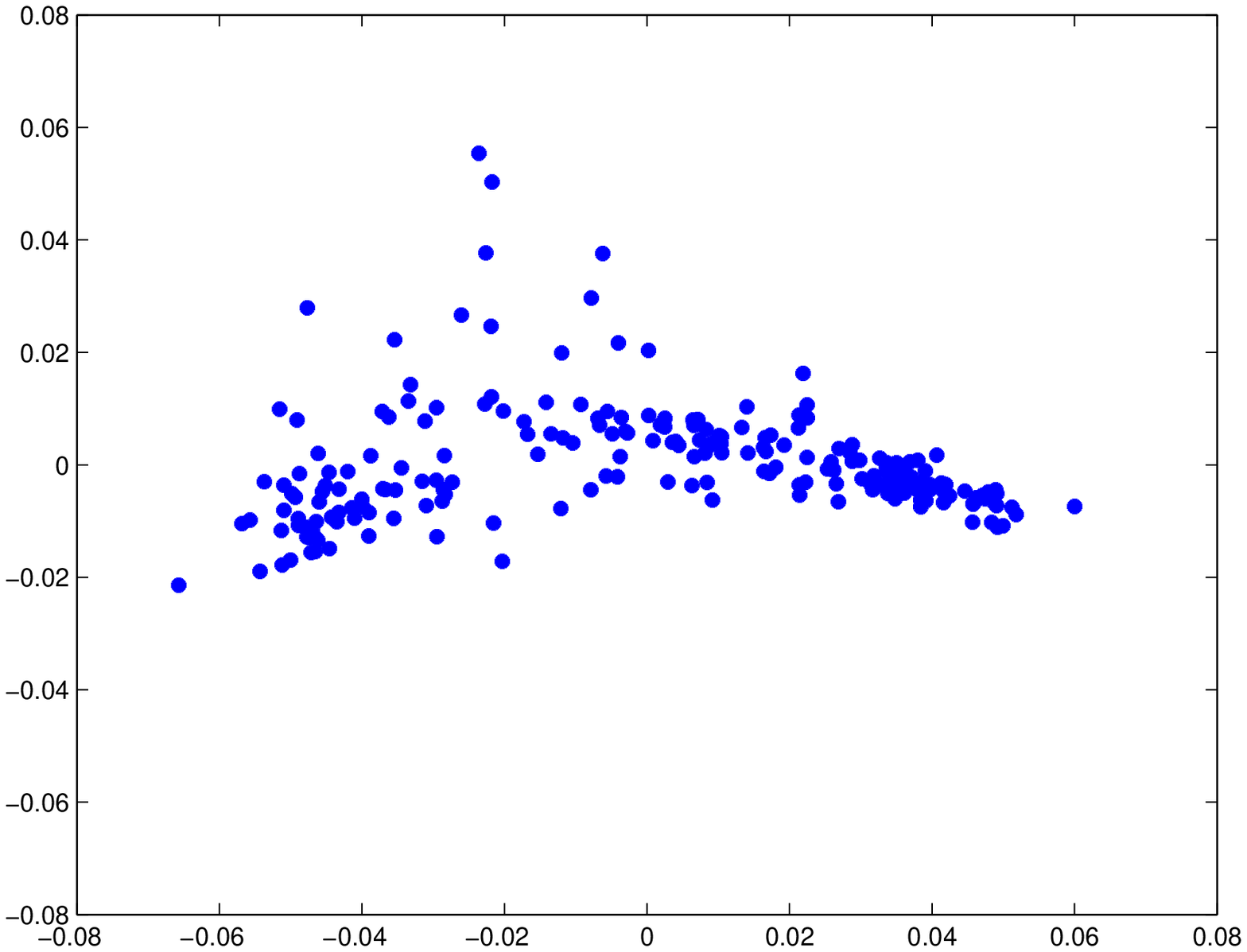}}
  \subfigure[GPCA]
{ \includegraphics[width=5cm,height=4cm]{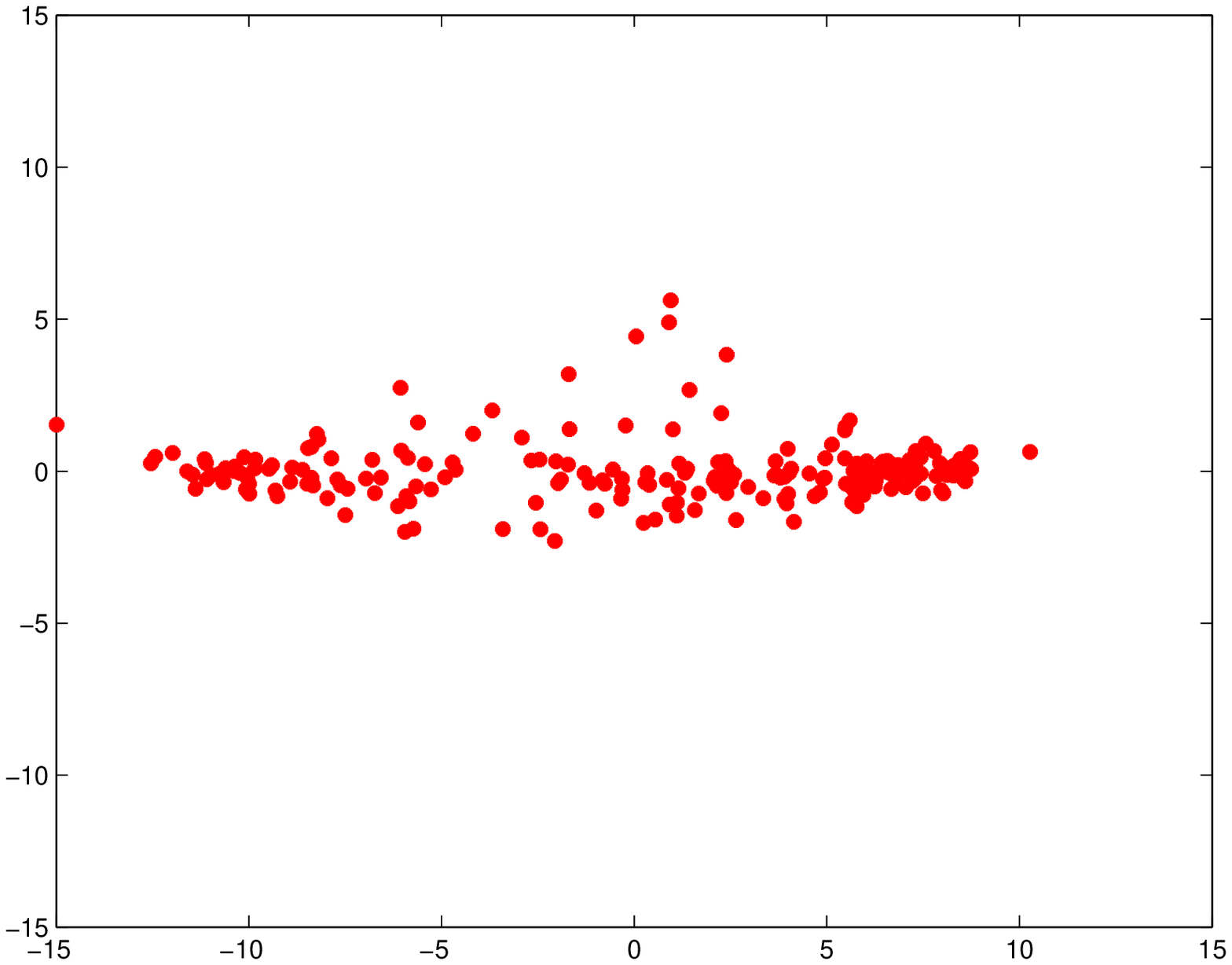} }
 \caption{Population pyramids of 223 countries. (a) Standard PCA: projection  of the data on  the first PC (81 \%) and second PC (8 \%). (b) GPCA: projection  of the data on  the first PC (96 \%) and second PC (2 \%)} \label{fig:FPCA_GPCA}
\end{figure}

\begin{figure}[h!]
\centering
 \subfigure[FPCA]
{ \includegraphics[width=5cm,height=4cm]{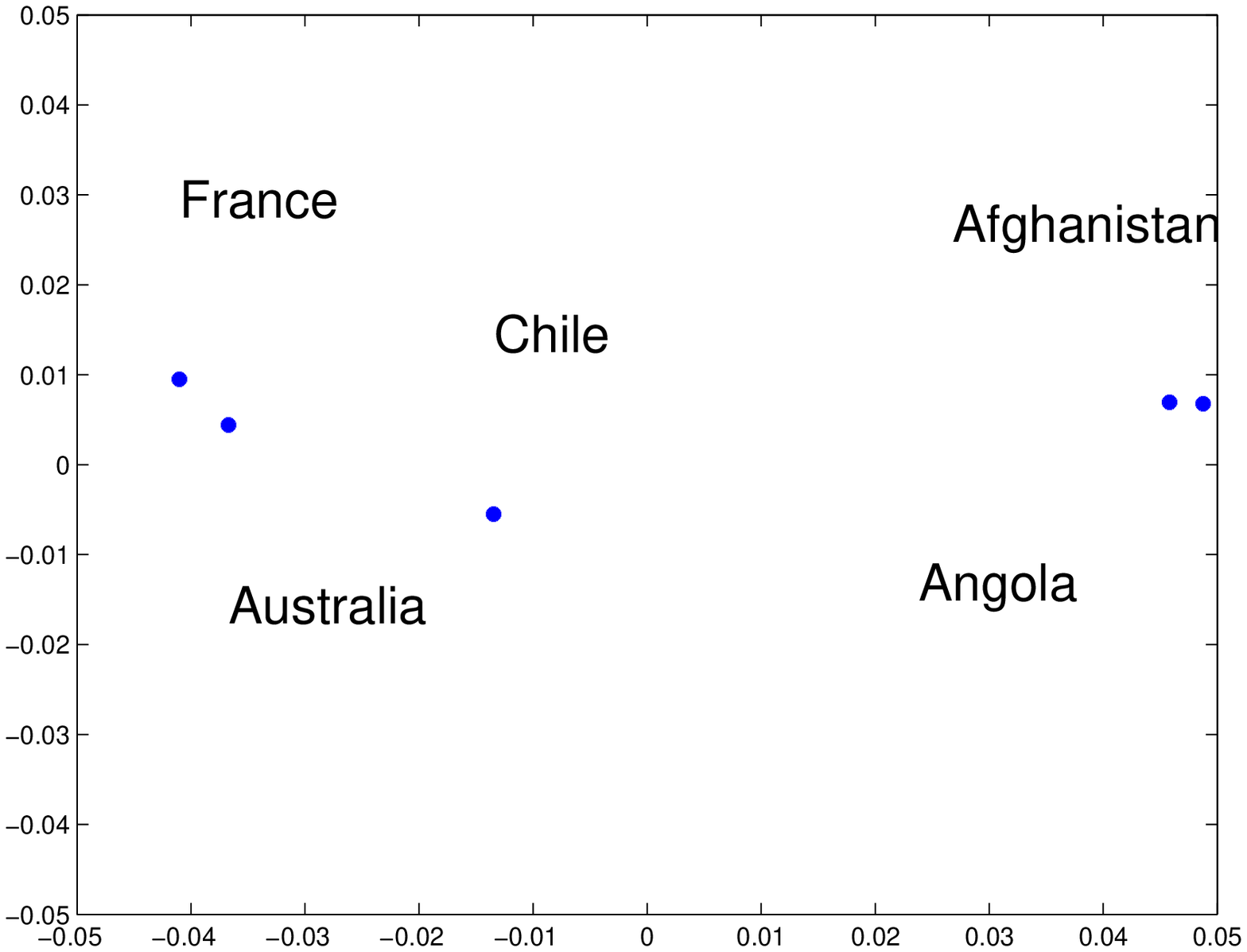}}
  \subfigure[GPCA]
{ \includegraphics[width=5cm,height=4cm]{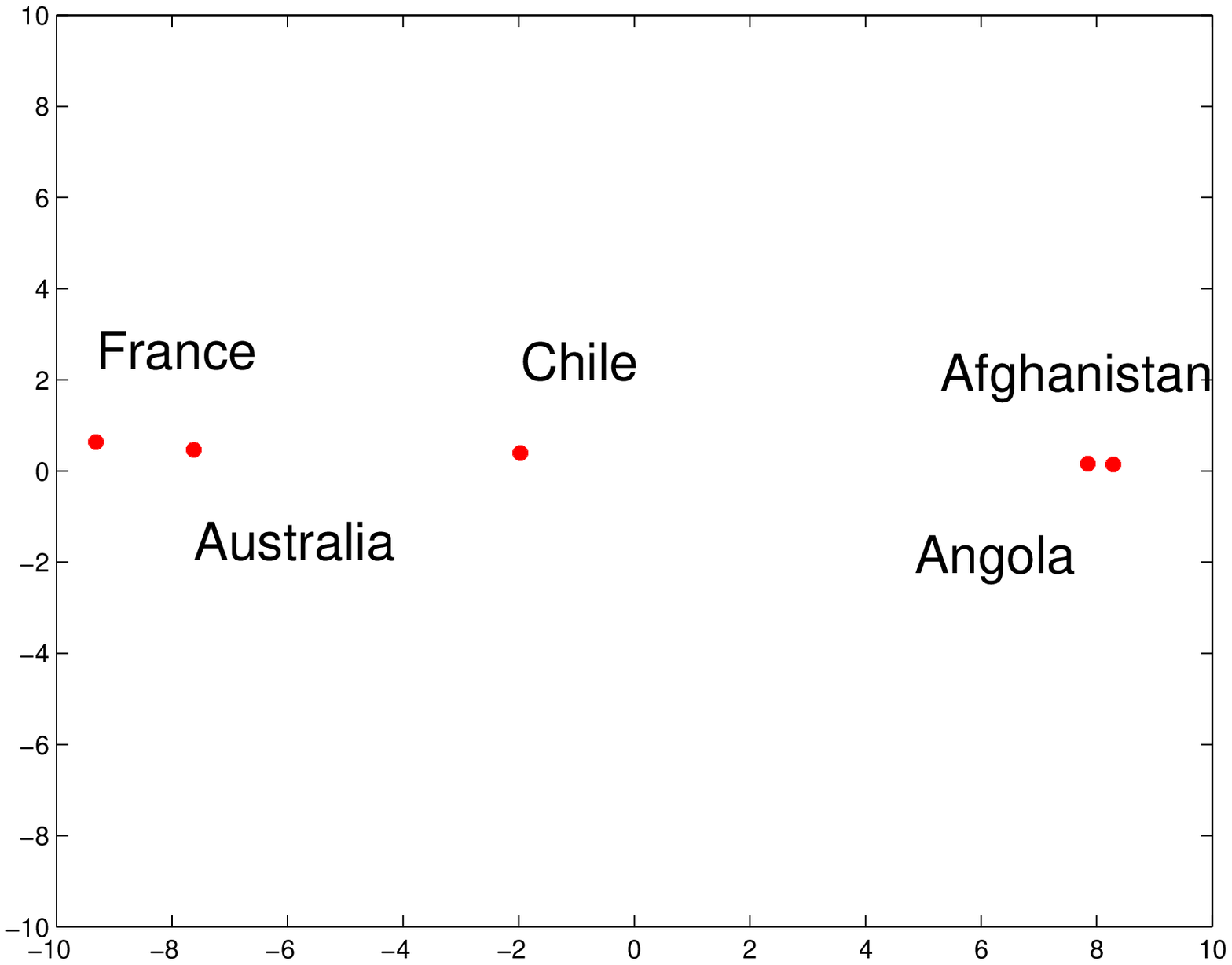} }
 \caption{Projection of the five histograms displayed in Figure \ref{fig:5hist}, using the whole dataset, on the first two PC, from (a) standard PCA and  (b) GPCA.} \label{fig:FPCA_GPCA_pays}
\end{figure}

\section{Analysis of consistency}\label{sec:consitency}

\subsection{Consistency of the empirical CPCA}\label{sec:consistencyCPCA}

Throughout this section we use the notation of Section \ref{sec:cpca}; limits are understood as $n\to\infty$. Let $x_0 = \E\bx$ and let $\bx_1, \ldots,\bx_n$ be independent, identically distributed (iid) copies of $\bx$. Denote by $\bar{\bx}_n := \sum_{i=1}^n \bx_i/n$ their arithmetic mean and observe that $\bar{\bx}_n\to x_0$ a.s., by the strong law of large numbers (SLLN) in a Hilbert space (see \cite{MR2814399}). Let also $\costhnb(C)=\frac{1}{n} \sum_{i=1}^n  d^2(\bx_i,C)$ be the random version of $\costhn$.

We prove in Theorem \ref{theo:GGXcon} that empirical GPCC based on $\bx_1, \ldots,\bx_n$ converge, in a sense defined below, to GPCC of $\bx$. The analogous result for NPCC is omitted.

Following Definition \ref{def:empPCC}, let $\GG_{x_0,k}(X)$ be the set of GPCC of $\bx$, with reference point $x_0=\E\bx$, and
${\cal G}_{n,k}(X):=\argmin_{C\in\mathrm{CC}_{\bar{\bx}_n,k}(X)}\costhnb(C)$ the (random) set
of empirical GPCC of $\bx_1, \ldots,\bx_n$, with $\bar{\bx}_n$ as reference point.

\begin{defin} The empirical GPCC are consistent, denoted
${\cal G}_{n,k}(X)\to\GG_{x_0,k}(X)$ a.s., if for every $C_{n}\in{\cal G}_{n,k}(X),n\ge1$, and $C\in\GG_{x_0,k}(X)$,\ \\
(a) $\costhnb(C_n) \to\costh(C)\;\text{a.s.}$, and\ \\
(b) the accumulation points of $(C_{n})$ belong to $\GG_{x_0,k}(X)$ a.s.

\end{defin}

In the following lemma we show that the indicators of $\mathrm{CC}_{x_n,k}(X)$ (denoted $\chi_{_{n,k}}$) $\Gamma$-converge to the indicator of  $\mathrm{CC}_{x,k}(X)$ (denoted $\chi_{_k}$) when $x_n\to x\in X$. We refer to Section \ref{sec:Gamma} for the definitions of $\Gamma$-convergence and indicator.

\begin{lem}\label{lem:CGkmuW}
Let $x_n\in X, n\ge1$, with $x_n\to x\in X$. If $X$ is compact then
$\Gamma \mbox{-} \lim_{n \to \infty} \chi_{_{n,k}} = \chi_{_k}$.
\end{lem}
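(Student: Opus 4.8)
The goal is to establish the two defining inequalities of $\Gamma$-convergence for the indicator functionals $\chi_{_{n,k}}$ of $\mathrm{CC}_{x_n,k}(X)$ toward the indicator $\chi_{_k}$ of $\mathrm{CC}_{x,k}(X)$, where all these sets live in the metric space $\CL(X)$ equipped with the Hausdorff distance. Recall that, since $X$ is compact, Hausdorff convergence in $\CL(X)$ is equivalent to Kuratowski convergence (as used in the proof of Proposition \ref{prop:CLXcompact}), and that by that same proposition $\mathrm{CC}_{x_n,k}(X)$ and $\mathrm{CC}_{x,k}(X)$ are compact, hence closed, subsets of $\CL(X)$. Writing out what $\Gamma$-convergence of indicators means concretely: (i) for every $C\in\CL(X)$ and every sequence $C_n\to C$ in $\CL(X)$ one needs $\chi_{_k}(C)\le\liminf_n\chi_{_{n,k}}(C_n)$, i.e.\ if $C_n\in\mathrm{CC}_{x_n,k}(X)$ for all (or infinitely many) $n$ then $C\in\mathrm{CC}_{x,k}(X)$; and (ii) for every $C\in\CL(X)$ there is a recovery sequence $C_n\to C$ with $\limsup_n\chi_{_{n,k}}(C_n)\le\chi_{_k}(C)$, which is nontrivial only when $C\in\mathrm{CC}_{x,k}(X)$, in which case one must produce $C_n\in\mathrm{CC}_{x_n,k}(X)$ with $C_n\to C$.

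For the $\liminf$ inequality (i), I would argue exactly as in Proposition \ref{prop:CLXcompact}: if $C_n\in\mathrm{CC}_{x_n,k}(X)$ and $C_n\to C$, then Blaschke's selection theorem in Banach spaces gives convexity of $C$; the Gram-determinant argument (approximating $k+1$ putative linearly independent points of $C$ by points of $C_n$ and using $\det(GM_n)=0$, $\det(GM_n)\to\det(GM)$) shows $\dim(C)\le k$; and $x_n\in C_n$ with $x_n\to x$ together with Kuratowski convergence (Definition \ref{def:Klim}(ii)) gives $x\in C$. Hence $C\in\mathrm{CC}_{x,k}(X)$, which is the required inequality since indicators only take values $0$ and $+\infty$.

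For the recovery sequence (ii), let $C\in\mathrm{CC}_{x,k}(X)$, so $C$ is a compact convex subset of $X$ with $\dim(C)\le k$ and $x\in C$. The natural candidate is to translate $C$ by the vector $x_n-x$ and intersect with $X$, or more robustly to take the convex hull $C_n:=\overline{\mathrm{conv}}\big((C+(x_n-x))\cap X\,\cup\,\{x_n\}\big)$; a cleaner choice, since $X$ is convex, is $C_n:=\{\,\lambda z+(1-\lambda)x_n : z\in C,\ \lambda\in[0,1]\,\}\cap\big(\text{small adjustment}\big)$ — but the truly clean construction is $C_n:=\big(C+(x_n-x)\big)\cap X$ will generally fail to contain $x_n$ unless $x_n\in C+(x_n-x)$, which holds iff $x\in C$, which is true; so in fact $x_n = x+(x_n-x)\in C+(x_n-x)$, and $x_n\in X$, whence $x_n\in C_n$. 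This $C_n$ is convex (intersection of a convex translate with the convex set $X$), has dimension $\le k$, contains $x_n$, and is closed (intersection of closed sets in the compact $X$); thus $C_n\in\mathrm{CC}_{x_n,k}(X)$. It remains to check $C_n\to C$ in Hausdorff distance: since $x_n-x\to0$, the translates $C+(x_n-x)$ converge to $C\subset X$, and because $C$ lies in $X$ one shows $h\big((C+(x_n-x))\cap X,\ C\big)\to 0$ using that $C$ is a compact subset of the closed convex set $X$ (points of $C+(x_n-x)$ are within $\|x_n-x\|$ of $C\subset X$, and conversely every point of $C$ is within $\|x_n-x\|$ of its translate, which lies in $X$ hence in the intersection, provided it is in $X$; a short argument handles the at-worst $\|x_n-x\|$ discrepancy for points of $C$ near $\partial X$ by first perturbing slightly toward $x$). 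This gives $\chi_{_{n,k}}(C_n)=0$ for all large $n$, so $\limsup_n\chi_{_{n,k}}(C_n)=0=\chi_{_k}(C)$.

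\textbf{Main obstacle.} The genuinely delicate point is the recovery sequence: verifying $h\big((C+(x_n-x))\cap X,\,C\big)\to0$ is not completely automatic, because intersecting the translated body with $X$ can, in principle, shrink it near the boundary of $X$ by more than $O(\|x_n-x\|)$ if $C$ touches $\partial X$ tangentially. The safe fix is to not translate rigidly but to use a contraction toward $x$: replace $C$ by $C_n^{0}:=(1-\varepsilon_n)(C-x)+x$ for a suitable $\varepsilon_n\downarrow0$ chosen so that $C_n^{0}+(x_n-x)\subset X$ (possible because $C_n^{0}$ is compact, contained in the relative interior neighborhood of $C$ inside $X$ when $\varepsilon_n>0$, and $X$ is convex closed), then set $C_n:=C_n^{0}+(x_n-x)$, which automatically lies in $X$, is convex of dimension $\le k$, contains $x_n$, and converges to $C$ in Hausdorff distance since $\varepsilon_n\to0$ and $x_n-x\to0$. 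Everything else is a routine transcription of the compactness and Gram-determinant arguments already carried out in Proposition \ref{prop:CLXcompact}.
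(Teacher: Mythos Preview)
Your overall strategy coincides with the paper's: invoke Lemma~\ref{lem:chi} to reduce the $\Gamma$-convergence of indicators to Kuratowski convergence of $\mathrm{CC}_{x_n,k}(X)$ toward $\mathrm{CC}_{x,k}(X)$, and then verify the two halves. For the accumulation-point half your argument is correct and essentially identical to the paper's (and to Proposition~\ref{prop:CLXcompact}). For the recovery half, the paper simply sets $C_n:=C+x_n-x$ and asserts $C_n\in\mathrm{CC}_{x_n,k}(X)$ without ever checking that $C_n\subset X$; you rightly flag this as the delicate point and try to repair it, first by intersecting with $X$, then by contracting $C$ toward $x$ before translating.

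Neither repair works, and the obstruction is genuine. Take $H=\ell^2$, the compact convex ellipsoid $X=\{y:\sum_{k\ge1}k^2y_k^2\le1\}$, $x=0$, $C=\{te_1:|t|\le1\}\in\mathrm{CC}_{0,1}(X)$, and $x_n:=e_n/n$, so that $x_n\in\partial X$ and $\|x_n\|=1/n\to0$. For any $t\ne0$ one has $\sum_k k^2(te_1+e_n/n)_k^2=t^2+1>1$, hence $(C+x_n-x)\cap X=\{x_n\}$; the same computation shows that your contracted-and-translated set $(1-\varepsilon_n)(C-x)+x_n$ also meets $X$ only at $x_n$, for every $\varepsilon_n\in(0,1)$. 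More is true: since $X$ is strictly convex and $x_n\in\partial X$, every element of $\mathrm{CC}_{x_n,1}(X)$ is a segment having $x_n$ as an \emph{endpoint}; projecting onto the $e_1$-coordinate one checks that no such segment can be simultaneously Hausdorff-close to $e_1$ and to $-e_1$. Consequently \emph{no} sequence $C_n\in\mathrm{CC}_{x_n,1}(X)$ converges to $C$, the Kuratowski condition (i) fails for this $C$, and the lemma is false at the stated level of generality. The paper's own proof carries the same gap, in a cruder form since it does not even intersect with $X$; your instinct that the recovery sequence is the real difficulty was correct, but the contraction idea cannot overcome it.
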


\begin{proof}
Recall that under compactness of $X$, $h(C_n,C) \to 0$ is equivalent to $K$-$\lim C_n= C$ (see Section \ref{sec:Klim}). By Lemma \ref{lem:chi}, it is sufficient  to show that $\mathrm{CC}_{x_n,k}(X)$ converges to $\mathrm{CC}_{x,k}(X)$ in the sense of Kuratowski. That is, we have to show that:\ \\
(a) for every $C \in \mathrm{CC}_{x,k}(X)$ there exist $C_n\in \mathrm{CC}_{x_n,k}(X), n\ge1$, with $h(C_n,C)\to0$, and\ \\
(b) if $C$ is an accumulation point of  $C_n \in \mathrm{CC}_{x_n,k}(X), n \geq 1$, then $C \in \mathrm{CC}_{x,k}(X)$.

For (a) take $C \in \mathrm{CC}_{x,k}(X)$ and let $C_n := C+x_n-x\in\mathrm{CC}_{x_n,k}(X)$, $n \geq 1$. After some calculation we find that the deviations $d(C,C_n)$ and $d(C_n,C)$ (see Definition \ref{def:Haussdorff}) are bounded above by $\|x-x_n\|$. Therefore,  $h(C,C_n) \leq \|x-x_n\| \to 0$.

For  (b)  let  $C$ be an accumulation point of $(C_{n})$. Then, since $x_{n} \in C_{n}$ and $x_{n} \to x$,  it follows that $x \in C$, by (ii) in Definition \ref{def:Klim}. On the other hand, recall that $\mathrm{CC}_{k}(X)$ is compact, thanks to Proposition \ref{prop:CLXcompact}. Then, as $C_{n} \in \mathrm{CC}_{k}(X), n \geq 1$, we have $C \in \mathrm{CC}_{k}(X)$ and, since $x \in C$, we conclude that $C \in \mathrm{CC}_{x,k}(X)$.
\end{proof}

\begin{theo}\label{theo:GGXcon}
If $X$ is compact then ${\cal G}_{n,k}(X)\to\GG_{x_0,k}(X)$ a.s.
\end{theo}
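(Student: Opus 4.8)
The plan is to deduce Theorem~\ref{theo:GGXcon} from a $\Gamma$-convergence argument, combining Lemma~\ref{lem:CGkmuW} with the uniform (over the compact space $\CL(X)$) convergence of the empirical cost functions $\costhnb$ to $\costh$. The key point is that minimizing $\costhnb$ over $\mathrm{CC}_{\bar{\bx}_n,k}(X)$ is the same as minimizing the sum $\costhnb + \chi_{_{n,k}}$ over all of $\CL(X)$, and minimizing $\costh$ over $\mathrm{CC}_{x_0,k}(X)$ is the same as minimizing $\costh + \chi_{_k}$. So if I can show that $\costhnb + \chi_{_{n,k}}$ $\Gamma$-converges a.s.\ to $\costh + \chi_{_k}$ on $\CL(X)$, then the standard consequences of $\Gamma$-convergence for a sequence of minimizers on a compact metric space (convergence of the minimal values and the fact that accumulation points of minimizers are minimizers of the limit functional; see Section~\ref{sec:Gamma}) give exactly statements (a) and (b) in the definition of consistency.

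First I would establish that $\costhnb \to \costh$ uniformly on $\CL(X)$ a.s. Since $X$ is compact, write $D = \operatorname{diam}(X) < \infty$; for any $C, C' \in \CL(X)$ and any $x$, $|d^2(x,C) - d^2(x,C')| \le 2D\, h(C,C')$, so each map $C \mapsto d^2(\bx_i, C)$ is $2D$-Lipschitz, hence so is $\costhnb$ and so is $\costh$. Moreover, for each fixed $C$, $\costhnb(C) = \frac1n\sum_i d^2(\bx_i, C) \to \E d^2(\bx, C) = \costh(C)$ a.s.\ by the SLLN (the summands are bounded by $D^2$). A standard argument — pointwise a.s.\ convergence on a countable dense subset of the compact, hence separable, space $\CL(X)$, upgraded to uniform convergence via the common Lipschitz constant — yields $\sup_{C \in \CL(X)} |\costhnb(C) - \costh(C)| \to 0$ a.s. Next, Lemma~\ref{lem:CGkmuW} applied with $x_n = \bar{\bx}_n \to x_0 = \E\bx$ (which holds a.s.\ by the SLLN in Hilbert space, as recalled in the text) gives $\Gamma\text{-}\lim \chi_{_{n,k}} = \chi_{_k}$ a.s. Since uniform convergence implies $\Gamma$-convergence and the sum of a $\Gamma$-convergent sequence and a uniformly convergent sequence $\Gamma$-converges to the sum of the limits, we obtain $\Gamma\text{-}\lim (\costhnb + \chi_{_{n,k}}) = \costh + \chi_{_k}$ a.s.

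Then I would invoke the fundamental theorem of $\Gamma$-convergence: on the compact metric space $\CL(X)$, if $F_n \xrightarrow{\Gamma} F$ and $C_n$ minimizes $F_n$, then $\min F_n \to \min F$ and every accumulation point of $(C_n)$ minimizes $F$. Taking $F_n = \costhnb + \chi_{_{n,k}}$, whose minimizers over $\CL(X)$ are precisely the elements of ${\cal G}_{n,k}(X)$ (finiteness of $F_n$ forces membership in $\mathrm{CC}_{\bar{\bx}_n,k}(X)$, and compactness of that set together with continuity of $\costhnb$ — Proposition~\ref{prop:cost} applied to the empirical measure — gives existence), and $F = \costh + \chi_{_k}$, whose minimizers are $\GG_{x_0,k}(X)$: for any $C_n \in {\cal G}_{n,k}(X)$ and $C \in \GG_{x_0,k}(X)$ we get $\costhnb(C_n) = \min F_n \to \min F = \costh(C)$, which is (a), and any accumulation point $C^\star$ of $(C_n)$ (one exists since $\CL(X)$ is compact) satisfies $C^\star \in \GG_{x_0,k}(X)$, which is (b). All of this holds on the probability-one event where both SLLN statements hold.

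The main obstacle is bookkeeping rather than depth: one must be careful that the $\Gamma$-convergence, the uniform convergence of $\costhnb$, and the convergence $\bar{\bx}_n \to x_0$ all occur simultaneously on a single event of probability one (a countable intersection of a.s.\ events, hence a.s.), and that the reference point in the empirical problem is the \emph{random} $\bar{\bx}_n$, not a deterministic sequence — which is exactly why Lemma~\ref{lem:CGkmuW} was stated for an arbitrary sequence $x_n \to x$. A secondary point requiring care is the passage from pointwise SLLN to uniform convergence on $\CL(X)$; this is where compactness of $X$ (hence of $\CL(X)$, by Proposition~\ref{prop:CLXcompact}) and the uniform Lipschitz bound are essential, and it should be spelled out that the exceptional null set can be taken independent of $C$ by restricting to a countable dense subset first.
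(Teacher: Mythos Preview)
Your proposal is correct and follows essentially the same route as the paper: reformulate both problems as unconstrained minimizations of $\costhnb+\chi_{_{n,k}}$ and $\costh+\chi_{_{0,k}}$ over the compact space $\CL(X)$, obtain $\Gamma$-convergence of the indicators from Lemma~\ref{lem:CGkmuW}, establish uniform a.s.\ convergence of $\costhnb$ to $\costh$, combine these into $\Gamma$-convergence of the sums, and conclude via the fundamental theorem of $\Gamma$-convergence (Theorem~\ref{theo:GammaM}). The only cosmetic difference is that you prove the uniform SLLN directly via the common Lipschitz bound and a countable dense subset of $\CL(X)$, whereas the paper cites an external uniform SLLN and an epi-convergence result; your argument is slightly more self-contained but otherwise identical in structure.
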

\begin{proof}
Let $\chi_{_{0,k}}, \chi_{_{n,k}}$ be the indicators of $\mathrm{CC}_{x_0,k}(X), \mathrm{CC}_{\bar{\bx}_n,k}(X)$ respectively.
Note that
\begin{equation}\label{eq:GGk}
\GG_{x_0,k}(X) = \argmin_{C \in \CL(X)} \costh(C) + \chi_{_{0,k}}(C)\quad \text{and} \quad{\cal G}_{n,k}(X)=\argmin_{C\in\CL(X)}\costhnb(C)+\chi_{_{n,k}}(C).
\end{equation}
From Lemma \ref{lem:CGkmuW}, we have
\begin{equation}\label{eq:gamma1}
\Gamma \mbox{-} \lim_{n \to \infty} \chi_{_{n,k}} = \chi_{_{0,k}},\;\text{a.s.},
\end{equation}
where the $\Gamma$-convergence takes place in the space $\CL(X)$. From Proposition \ref{prop:dKcont} and recalling that $X$ is compact, we have that $d^2(x,C)$ is separately continuous in $x \in X$ and $C \in  \CL(X)$. Hence, $d^2(x,C)$ is measurable on the product space $X \times \CL(X)$; see \cite{Johnson69} or \cite{Rudin81}. Thus, from Theorem 2.3 in \cite{ArtsteinWets1995}, we have the following $\Gamma$-convergence in $\CL(X)$,
\begin{equation}\label{eq:gamma2}
\Gamma \mbox{-} \lim_{n \to \infty}\costhnb(\cdot) = \costh(\cdot) \mbox{ a.s.}
\end{equation}
On the other hand, as  $X$ is compact, there exists a constant $R>0$ such that $d^2(x,C) \leq R$, for all $x \in X$ and $C \in\CL(X)$. Also, by Proposition \ref{prop:CLXcompact} , $\CL(X)$ is a compact set. Therefore, by the uniform strong law of large number (see Lemma 2.4 in \cite{MR1315971}), $\costhnb(C) \to \costh(C)$ uniformly in $\CL(X)$ a.s., that is,
\begin{equation}\label{eq:unif}
\lim_{n \to \infty}\sup_{C \in \CL(X)} |\costh(C)-\costhnb(C)| = 0 \text{ a.s.}
\end{equation}
From \eqref{eq:gamma1} to \eqref{eq:unif} and by Proposition 6.24 in \cite{DalMaso93}, we obtain
\begin{equation}\label{eq:gamma3}
\Gamma \mbox{-} \lim_{n \to \infty} \costhnb + \chi_{_{n,k}} = \costh + \chi_{_{0,k}} \; \text{a.s.}
\end{equation}
Therefore, from \eqref{eq:GGk}, \eqref{eq:gamma3}, the compactness of $\CL(X)$  and Theorem \ref{theo:GammaM}, the conclusion follows.
\end{proof}

\subsection{Consistency of the empirical GPCA}\label{sec:consistencyGPCA}
In this section we use the notation of Section \ref{sec:GPCA}, with $\nu_0 =\nu^* $, the Fr\'echet mean of $\bnu$. Let $\bnu_1, \ldots,\bnu_n$ be iid copies of $\bnu$ and let $\bnu^*_n$ be their empirical Fr\'echet mean. Let $\bx=\log_{\mu}(\bnu), \bx_i=\log_{\mu}(\bnu_i), i=1,\ldots,n$ and $x_0=\log_\mu(\nu_0)$. Let also $\costnb(G)=\frac{1}{n} \sum_{i=1}^n  d^2_{W_2}(\bnu_i,G)$ be the random version of $\costn$. We show the convergence of $\bnu^*_n$ and of the empirical GPG to their population counterparts.

\begin{prop}\label{prop:constfm}
$d_W(\bnu^*_n,\nu_0)\to0$ a.s.
\end{prop}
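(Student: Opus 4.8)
The plan is to transfer the statement, via the isometry of Theorem~\ref{theo:exp}, into an almost-sure convergence statement for the expectations of $\LL$-valued random elements, and then invoke the strong law of large numbers in a Hilbert space. Recall that $\nu_0 = \nu^*$ is the Fr\'echet mean of $\bnu$ and that $\bnu_n^*$ is the empirical Fr\'echet mean of the iid copies $\bnu_1,\ldots,\bnu_n$. By Proposition~\ref{prop:frechetmean}(ii), $\nu^* = \exp_\mu(\E\bv)$ with $\bv = \log_\mu(\bnu)$; and by the same reasoning applied to the empirical measure $\bnu^{(n)}$ (together with the fact that $\E\bv^{(n)} = \frac1n\sum_{i=1}^n \bv_i$ where $\bv_i = \log_\mu(\bnu_i)$), we get $\bnu_n^* = \exp_\mu\bigl(\frac1n\sum_{i=1}^n \bv_i\bigr)$. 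Since $\exp_\mu$ restricted to $\VV$ is an isometry with inverse $\log_\mu$ (Theorem~\ref{theo:exp}), and both $\E\bv$ and $\frac1n\sum_{i=1}^n\bv_i$ lie in $\VV$ (the latter because $\VV$ is convex, by Proposition~\ref{prop:clcv}, and each $\bv_i \in \VV$), we have
\begin{equation*}
d_W\bigl(\bnu_n^*,\nu_0\bigr) = d_W\bigl(\exp_\mu(\tfrac1n\textstyle\sum_{i=1}^n\bv_i),\exp_\mu(\E\bv)\bigr) = \bigl\| \tfrac1n\textstyle\sum_{i=1}^n\bv_i - \E\bv \bigr\|_\mu .
\end{equation*}

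\textbf{Second step.} It remains to show that $\frac1n\sum_{i=1}^n \bv_i \to \E\bv$ a.s.\ in $\LL$. The $\bv_i = \log_\mu(\bnu_i)$ are iid copies of $\bv = \log_\mu(\bnu)$, which is an $\LL$-valued random element. Square-integrability of $\bnu$ means $\E d_W^2(\bnu,\lambda) < \infty$; by Remark after Definition~\ref{def:explog}(b), $\|\bv\|_\mu^2 = \|\log_\mu(\bnu)\|_\mu^2 = d_W^2(\mu,\bnu) < \infty$ with $\E\|\bv\|_\mu^2 < \infty$, so in particular $\E\|\bv\|_\mu < \infty$. Hence the strong law of large numbers in a separable Hilbert space (see \cite{MR2814399}, exactly as invoked at the start of Section~\ref{sec:consitency} for $\bar{\bx}_n$) applies and yields $\frac1n\sum_{i=1}^n \bv_i \to \E\bv$ a.s.\ in $\LL$. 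Combining with the displayed identity gives $d_W(\bnu_n^*,\nu_0)\to 0$ a.s.

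\textbf{Main obstacle.} There is no deep difficulty here; the statement is essentially a corollary of the isometry and the Hilbert-space SLLN. The one point requiring a little care is the identification $\bnu_n^* = \exp_\mu(\frac1n\sum_{i=1}^n\bv_i)$: one must note that the empirical Fr\'echet mean is exactly the Fr\'echet mean of $\bnu^{(n)}$ (by definition), apply Proposition~\ref{prop:frechetmean}(ii) to $\bnu^{(n)}$, and observe that the expectation of $\bv^{(n)} = \log_\mu(\bnu^{(n)})$ under the empirical law is the arithmetic mean $\frac1n\sum_{i=1}^n\bv_i$. One should also confirm that $\frac1n\sum_{i=1}^n\bv_i \in \VV$ so that $\exp_\mu$ acts isometrically on it; this follows from convexity of $\VV$ (Proposition~\ref{prop:clcv}). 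Everything else is routine.
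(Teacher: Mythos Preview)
Your proof is correct and follows essentially the same approach as the paper: apply Proposition~\ref{prop:frechetmean}(ii) to identify $\log_\mu(\nu_0)=\E\bv$ and $\log_\mu(\bnu_n^*)=\frac1n\sum_{i=1}^n\bv_i$, use the isometry of Theorem~\ref{theo:exp} to rewrite $d_W(\bnu_n^*,\nu_0)$ as $\|\frac1n\sum_i\bv_i-\E\bv\|_\mu$, and conclude by the Hilbert-space SLLN. Your version is simply more detailed (e.g.\ you make explicit why $\frac1n\sum_i\bv_i\in\VV$), but the argument is the same.
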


\begin{proof}
From Proposition \ref{prop:frechetmean}({\it ii}), $x_0=\log_{\mu}(\nu_0)=\E\bx$ and $\log_{\mu}({\bnu}_n^*)={\bar\bx}_n:=\frac{1}{n} \sum_{i=1}^n{\boldsymbol x}_i$. By Theorem \ref{theo:exp} and the SLLN in a Hilbert space (see \cite{MR2814399}), $d_W^2(\bnu_n^*,\nu_0) =   \| {\bar\bx}_n- \E\bx\|_{\mu}^2 \to 0$, a.s.
\end{proof}
\begin{rem}
As pointed out by an anonymous reviewer, the result in Proposition \ref{prop:constfm} follows from Ziezold's strong law of large number \cite{Ziezold1977}.
\end{rem}

Recall that if $\Omega$ is compact then $\WS$ is compact. In this case $\CL(W)$ is also compact, as can be easily shown from  Theorem \ref{theo:exp} and Proposition \ref{prop:CLXcompact}. Therefore,  if $\Omega$ is compact, then  every sequence $G_{n}\in{\cal G}_{n,k}(W),n \geq 1$, has a convergent subsequence in $\CL(W)$.

Let $\GG_{\nu_0,k}(W)$ be the set of GPG of $\bnu$, with reference measure $\nu_0=\nu^*$. Let also  ${\cal G}_{n,k}(W):=\argmin_{G\in\mathrm{CG}_{{\bnu}_n^*,k}(W)}\costnb(G)$ the (random) set of empirical GPG of $\bnu_1, \ldots,\bnu_n$, with reference measure ${\bnu}_n^*$, and ${\cal G}_{n,k}(\VV):=\argmin_{C\in\mathrm{CC}_{\bar{\bx}_n,k}(\VV)}\costhnb(C)$.

\begin{defin} The empirical GPG are consistent, denoted
${\cal G}_{n,k}(W)\to\GG_{\nu_0,k}(W)$ a.s., if for every $G_{n}\in{\cal G}_{n,k}(W),n\ge1$, and $G\in\GG_{\nu_0,k}(W)$,\ \\
(a) $\costnb(G_n) \to\cost(G)\;\text{a.s.}$, and\ \\
(b) the accumulation points of $(G_{n})$ belong to $\GG_{\nu_0,k}(W)$ a.s.
\end{defin}

\begin{theo}\label{theo:GGcon}
If $\Omega$ is compact then ${\cal G}_{n,k}(W)\to\GG_{\nu_0,k}(W)$ a.s.
\end{theo}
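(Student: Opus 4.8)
The plan is to reduce the assertion, through the isometry of Theorem \ref{theo:exp}, to the already-established consistency of empirical CPCA (Theorem \ref{theo:GGXcon}) applied with $H=\LL$ and $X=\VV$. First I would note that since $\Omega$ is compact, $\WS$ is compact (see \cite{villani-topics}), hence so is $\VV=\log_\mu(\WS)$ because $\exp_\mu$ restricted to $\VV$ is an isometric homeomorphism onto $\WS$. Thus $\VV$ is a compact, closed, convex subset of $\LL$ and Theorem \ref{theo:GGXcon} is applicable with that choice of $X$.

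Next I would transport all the objects of the GPCA problem to the CPCA problem in $\VV$, exactly as in Section \ref{sec:GPCAequivCPCA}. Set $\bx=\log_\mu(\bnu)$, $\bx_i=\log_\mu(\bnu_i)$, $x_0=\log_\mu(\nu_0)=\log_\mu(\nu^*)$. By Proposition \ref{prop:frechetmean}(ii) one has $x_0=\E\bx$ and $\log_\mu(\bnu_n^*)=\bar{\bx}_n$, and by Proposition \ref{prop:constfm} (equivalently the SLLN in $\LL$) $\bar{\bx}_n\to x_0$ a.s.; moreover $\bar{\bx}_n\in\VV$ since $\VV$ is closed and convex. Using Corollary \ref{coro:geoconvex} one gets $\mathrm{CG}_{\nu_0,k}(W)=\exp_\mu(\mathrm{CC}_{x_0,k}(\VV))$ and $\mathrm{CG}_{\bnu_n^*,k}(W)=\exp_\mu(\mathrm{CC}_{\bar{\bx}_n,k}(\VV))$, while Theorem \ref{theo:exp} gives $\cost(G)=\costh(\log_\mu(G))$ and, replacing $\bnu$ by $\bnu^{(n)}$, $\costnb(G)=\costhnb(\log_\mu(G))$. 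Arguing as in the proof of Proposition \ref{prop:CPCA2GPCA_G}, but with the empirical random measure and the random reference $\bar{\bx}_n$ in place of the population ones, this yields $\GG_{\nu_0,k}(W)=\exp_\mu(\GG_{x_0,k}(\VV))$ and ${\cal G}_{n,k}(W)=\exp_\mu({\cal G}_{n,k}(\VV))$.

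Then I would simply invoke Theorem \ref{theo:GGXcon} with $X=\VV$: for every $C_n\in{\cal G}_{n,k}(\VV)$ and $C\in\GG_{x_0,k}(\VV)$ we have $\costhnb(C_n)\to\costh(C)$ a.s.\ and every accumulation point of $(C_n)$ lies in $\GG_{x_0,k}(\VV)$ a.s.; it remains to push these two assertions forward through $\exp_\mu$. Given $G_n\in{\cal G}_{n,k}(W)$ and $G\in\GG_{\nu_0,k}(W)$, write $C_n:=\log_\mu(G_n)\in{\cal G}_{n,k}(\VV)$ and $C:=\log_\mu(G)\in\GG_{x_0,k}(\VV)$. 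Part (a) is then $\costnb(G_n)=\costhnb(C_n)\to\costh(C)=\cost(G)$ a.s. For part (b), because $\exp_\mu|_{\VV}$ is an isometry it carries the Hausdorff distance $h$ on $\CL(\VV)$ to $h_{W_2}$ on $\CL(W)$ and is a homeomorphism for these metrics (and, $\VV,\WS$ being compact, also for the equivalent Kuratowski convergence), so the accumulation points of $(G_n)=(\exp_\mu(C_n))$ are precisely the $\exp_\mu$-images of the accumulation points of $(C_n)$, which belong to $\exp_\mu(\GG_{x_0,k}(\VV))=\GG_{\nu_0,k}(W)$ a.s.

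The only step requiring care — the main ``obstacle'', though it is really bookkeeping rather than a genuine difficulty — is the empirical analogue of Proposition \ref{prop:CPCA2GPCA_G}: one must check that $\log_\mu$ simultaneously carries the random set ${\cal G}_{n,k}(W)$ onto ${\cal G}_{n,k}(\VV)$ and the random reference measure $\bnu_n^*$ to $\bar{\bx}_n$, keeping clearly in mind that the measure $\bnu_n^*$ defining the admissible geodesic sets and the fixed reference measure $\mu$ defining $\log_\mu,\exp_\mu$ play distinct roles. The identifications nonetheless go through because $\log_\mu$ is a bijection $\WS\to\VV$ mapping geodesic sets to convex sets of the same dimension and preserving containment. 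Once this correspondence is in place, Theorem \ref{theo:GGcon} follows as an immediate corollary of Theorem \ref{theo:GGXcon}.
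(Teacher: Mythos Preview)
Your proposal is correct and follows essentially the same route as the paper's proof: both reduce the GPCA consistency to the CPCA consistency of Theorem \ref{theo:GGXcon} via the isometry $\log_\mu$, using Proposition \ref{prop:CPCA2GPCA_G} (and its empirical analogue) to identify $\GG_{\nu_0,k}(W)$ with $\exp_\mu(\GG_{x_0,k}(\VV))$ and ${\cal G}_{n,k}(W)$ with $\exp_\mu({\cal G}_{n,k}(\VV))$, and then transfer the two consistency assertions through the induced Hausdorff-metric homeomorphism between $\CL(\VV)$ and $\CL(W)$. Your write-up is in fact a bit more explicit than the paper's (e.g.\ verifying compactness of $\VV$, $\log_\mu(\bnu_n^*)=\bar{\bx}_n$, and the empirical version of Proposition \ref{prop:CPCA2GPCA_G}), but there is no substantive difference in strategy.
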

\begin{proof}
From Proposition \ref{prop:CPCA2GPCA_G}
\begin{equation}\label{eq:GGcon1}
\GG_{x_0,k}(\VV) =\log_{\mu}\left(\GG_{\nu_0,k}(W)\right)\quad\text{and}\quad {\cal G}_{n,k}(\VV)=\log_{\mu}\left({\cal G}_{n,k}(W)\right) .
\end{equation}

On the other hand, from Theorem \ref{theo:exp}, it can be shown that $\log_\mu$  is an isometric bijection for the Hausdorff distance, between $\CL(W)$ and $\CL(\VV)$. Let $G_{n}\in{\cal G}_{n,k}(W),n \geq 1$, with a subsequence $(G_{n'})$ converging to $G \in \CL(W)$. Then, by the continuity of $\log_\mu$, $C_{n'}:=\log_\mu(G_{n'})\to C:=\log_\mu(G)$.

 From \eqref{eq:GGcon1}, we have $C_{n} \in {\cal G}_{n,k}(\VV),n\ge1$. Therefore, Theorem \ref{theo:GGXcon} implies that $C \in  \GG_{x_0,k}(\VV)$ and $\costhnb(C_n) \to \costh(C)$ a.s. Finally, the result follows from the equalities $\costh(C)=\cost(G),\costhnb(C_n)=\costnb(G_n), n \geq 1,$ and \eqref{eq:GGcon1}.
\end{proof}

\begin{rem}
From the proof of Theorem \ref{theo:GGcon} one can see that the sequence of reference measures $\bnu_n^*$ (the empirical Fr\'echet means) can be replaced by any other sequence of atomless measures in $\WS$, converging to an atomless limit, say, $\mu_{0}$. Of course, the limiting GPG would have $\mu_{0}$ as reference measure.
\end{rem}

\begin{rem}
As commented in Remark \ref{rem:randobs}, in practical situations we often have access only to, say, $n_i$ random observations from each $\nu_i$. In this context, consistency has to be redefined as $n$ and the $n_i$ tend to infinity.
\end{rem}

\section{GPCA in $\WS$ and PCA in a Riemannian manifold} \label{sec:conclusion}

As already mentioned in the introduction, nonlinear analogs of PCA  have been proposed in the literature for the analysis of data belonging to curved Riemannian manifolds \cite{geodesicPCA,citeulike:10314363}. To perform a PCA-like analysis two popular approaches are:
\begin{enumerate}
\item Standard PCA of the data projected onto the tangent space at their Fr\'echet mean, with back projection onto the manifold  and
\item Principal Geodesic Analysis (PGA), that is, a PCA along geodesics.
\end{enumerate}
Below, we briefly recall   the main ideas of these two approaches that generally lead  to different  directions  of geodesic variability in a curved manifold \cite{exactPGA}.

Consider $y_{1},\ldots,y_{n}$  belonging to a complete Riemannian manifold $\M$ admitting a geodesic distance $d_{\M}$. In order to define a PCA like analysis in  $\M$, one needs a notion of average. It has been suggested \cite{geodesicPCA} that  the appropriate notion is the Fr\'echet mean, defined as an element $z \in \M$ (not necessarily unique) minimizing the sum of squared distances to the data, namely
\begin{equation*}
z \in \argmin_{y \in \M} \frac{1}{n} \sum_{i=1}^{n} d_{\M}^{2}(y,y_{i}).
\end{equation*}
We refer to \cite{Bhattacharya03} for details and properties of the Fr\'echet mean in Riemannian manifolds.

Let $T_{z}\M$ be the tangent space to $\M$ at  $z$. If $v$ denotes a tangent vector in $T_{z}\M$, there exists a unique geodesic $\gamma_{v}(t)$ having $v$ as  its initial velocity, where $t \in \R$ is a time parameter.  The Riemannian exponential map $\exp_{z} : T_{z}\M \to \M$,
 defined by $\exp_{z}(v) = \gamma_{v}(1)$ is a diffeomorphism on a neighborhood of zero and its inverse is the Riemannian log map, denoted by $\log_{z}$. \\

\noindent {\bf (1) PCA via linearization in the tangent space:} in this approach the data $y_1,\ldots,y_n$ is first projected on  $T_{z}\M$ by means of the $\log_z$ map, thus obtaining $x_{i} = \log_{z}(y_{i}), \; i=1,\ldots,n$. Next, a standard PCA of  $x_{1},\ldots,x_{n}$ is performed in the linear space $(T_{z}\M, \langle \cdot, \cdot \rangle, \| \cdot \|)$, which leads to computing the first principal component $v^{lin}$, the eigenvector associated with the largest eigenvalue of the covariance operator
\begin{equation*}
K v = \frac{1}{n} \sum_{i=1}^{n} \langle x_{i} - \bar{x}_n, v \rangle (x_{i} - \bar{x}_n), \; v \in T_{z}\M,
\end{equation*}
where $\bar{x}_{n} = \frac{1}{n} \sum_{i=1}^{n} x_{i}$. Finally, $v^{lin}$ is  projected back onto $\M$ by means of the $\exp_z$ map to obtain $w^{lin} = \exp_{z}(v^{lin})$, which represents a first notion of principal direction  of geodesic variability. The main drawback of PGA via linearization is the fact that distances are generally not preserved by the projection step, that is, \ $\|x_{i} - x_{j} \| \neq d_{\M}(y_{i},y_{j})$. \\

\noindent {\bf (2) PGA  on $\M$: }  the notion of PCA along geodesics on $\M$ is motivated by formulation  \eqref{eq:pca}, which characterizes standard PCA. In a first step, one computes
\begin{equation*}
v^{geo} = \argmin_{v \in  T_{z}\M, \; \| v \| = 1}  \frac{1}{n} \sum_{i=1}^{n} d^2_{\M}\left( y_i, G_{v} \right),
\end{equation*}
where $G_{v} = \{ \exp_{z}( t v ), \; t \in \R \}$ and $d_{\M}(y,G) =\inf_{y' \in G}d_{\M}(y,y')$ for $y \in \M$ and $G \subset \M$. Then, in a second (and final) step, one projects the element   $v^{geo} \in T_{z}\M$ onto $\M$, by computing $w^{geo} = \exp_{z}(v^{geo})$. This yields another notion of principal direction  of geodesic variability of the data and generally one has that $w^{lin} \neq w^{geo}$, except if $\M$ is a Hilbert space. Therefore, PCA via linearization on the tangent space and PCA along geodesics may lead to different  directions  of geodesic variability in a curved manifold. A detailed analysis of the differences between these methods can be found in \cite{exactPGA}. In both methods it is also possible to define subsequent principal directions (second, third, and so on) of geodesic variability   in a recursive manner, and we refer to \cite{geodesicPCA} for further details. \\

In this paper, we have considered the analysis of data in the Wasserstein space $\WS$, which is not a  Riemannian manifold but has pseudo-Riemannian structure, rich enough to allow the definition of a notion of geodesic PCA. By means of the analogs of the logarithmic and  of the  exponential maps, we also introduce the corresponding version of the standard PCA in the tangent space, with back projection onto $\WS$, thus establishing a parallel to the methodological duality available for data in Riemannian manifolds, as presented above. Also, as could be expected, these two approaches yield, in general, different forms of geodesic variability.

There is however a significant distinguishing feature of our methodology, namely the possibility of performing a PCA in the tangent space under convexity restrictions, which is equivalent (after projection) to the geodesic PCA in $\WS$. This motivates the definition of Convex PCA (see Section \ref{sec:cpca}), a general PCA-like method for analyzing data on a closed convex subset of a Hilbert space, which can be of interest beyond its specific application in the context of GPCA.  The CPCA applied to the logarithms of the data measures is interesting because it is formally simpler than the geodesic PCA in $\WS$ although more complex than standard PCA. In this respect it is also worth noticing that if the data are ``sufficiently concentrated'', the standard and the restricted PCA in the tangent space yield the same results.

It should be  mentioned that the terminology geodesic PCA (GPCA) was used previously by Huckemann et al. in  \cite{citeulike:10314363} to denote a Riemannian manifold generalization of linear PCA.  Their approach shares similarities with the PGA method introduced in \cite{geodesicPCA} but optimizes additionally for the placement of the center point (not necessarily equal to the Fr\'echet mean). Furthermore, it does not use a linear approximation of the manifold and is only suited for Riemannian manifolds, where explicit formulas for geodesics exist. However, it is difficult to compare our approach to the GPCA in \cite{citeulike:10314363} since the notion of principal geodesic, that we propose in this paper, is defined with respect to a given reference measure $\nu_0$ (chosen to be either the population or the empirical Fr\'echet mean). For a precise comparison it would be necessary to carry out the optimization in Definition \ref{def:GG}(a), with respect to the reference measure $\nu_0$, a task which is beyond the scope of this paper.

Finally observe that, from Theorem \ref{theo:exp}, one can interpret $\WS$ as a space with no curvature, and hence the pseudo-Riemannian formalism, used in Section \ref{sec:Riemannian}, is not essential for our development. However, such a framework allows making a connection between our approach and PCA methods adapted to Riemannian manifolds.

\appendix
\section{Appendix}
\subsection{Increasing functions and quantiles}
We present some useful, well-known results about increasing functions and quantiles. For additional information, see  \cite{Embrechts10anote}, \cite{Rockafellar14}. In this section $\mu, \nu$ denote probability measures on $(\R,{\cal B}(\R))$, $F_\nu$ denotes the  (right-continuous) cdf of $\nu$ and $L^2(0,1)$ is the space of square-integrable functions, with respect to the Lebesgue measure on $(0,1)$.
\begin{defin}\label{def:incr}
Let $A \subseteq\R$ and $T:A\to\R$.
\ \\
(a) $T$ is increasing on $B\subseteq A$ if $\forall x,y\in B$, $x<y$ implies $T(x)\le T(y)$.
\ \\
(b) $T$ is $\mu$-a.e. increasing if there exists $B_\mu \in {\cal B}(\Omega)$, with $B_\mu\subseteq A$, $\mu(B_\mu)=1$ and $T$  increasing on $B_\mu$.
\end{defin}
\begin{rem}
A $\mu$-a.e. increasing function $T:A\to\R$ needs not to have a version increasing on $A$. A version of $T$ is a function $\tilde T:A\to\R$ such that $T=\tilde T$, $\mu$-a.e.
\end{rem}
\begin{defin}\label{def:quantile}
The quantile function of $\nu$ is defined as $F_\nu^-(y)=\inf\{x\in \R|F_\nu(x)\ge y\}$,  $y\in(0,1)$.
\end{defin}
\begin{prop}\label{prop:quantile}
\ \\
(a) $F_\nu^-$ is left-continuous and increasing on $(0,1)$.
\ \\
(b) Any left-continuous and increasing $T:(0,1)\to\R$ is the quantile of some probability $\nu$.
\ \\
(c) $\nu$ has finite second moment if and only if $\int_0^1(F_\nu^-(x))^2dx<\infty$.

\end{prop}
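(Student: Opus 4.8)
The plan is to prove the three parts in order, with the workhorse being the \emph{Galois connection} between a cdf and its quantile: for a probability $\nu$ on $(\R,{\cal B}(\R))$ with right-continuous cdf $F_\nu$, one has $F_\nu^-(y)\le x$ if and only if $F_\nu(x)\ge y$, for all $y\in(0,1)$ and $x\in\R$. This follows directly from Definition \ref{def:quantile}: right-continuity of $F_\nu$ makes $\{x:F_\nu(x)\ge y\}$ closed, hence equal to $[F_\nu^-(y),\infty)$, so in particular $F_\nu(F_\nu^-(y))\ge y$, from which both implications are immediate using that $F_\nu$ is nondecreasing.

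For (a), monotonicity is immediate, since $y_1<y_2$ gives $\{x:F_\nu(x)\ge y_2\}\subseteq\{x:F_\nu(x)\ge y_1\}$ and hence $F_\nu^-(y_1)\le F_\nu^-(y_2)$ on taking infima. For left-continuity I would fix $y\in(0,1)$ and $y_n\uparrow y$; by monotonicity the limit $\ell:=\lim_n F_\nu^-(y_n)$ exists and satisfies $\ell\le F_\nu^-(y)$. Since $F_\nu^-(y_n)\le\ell$ and $F_\nu(F_\nu^-(y_n))\ge y_n$, monotonicity of $F_\nu$ gives $F_\nu(\ell)\ge y_n$ for every $n$, hence $F_\nu(\ell)\ge y$; one last application of the Galois connection yields $F_\nu^-(y)\le\ell$, so $\ell=F_\nu^-(y)$.

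For (b), given left-continuous increasing $T:(0,1)\to\R$, let $\lambda$ denote Lebesgue measure on $(0,1)$ and set $\nu:=T\#\lambda$, which is a well-defined probability on $\R$ because monotone functions are Borel measurable. It then suffices to establish the equivalence $T(y)\le x\iff F_\nu(x)\ge y$ for all $x\in\R$, $y\in(0,1)$: combined with the Galois connection of part (a), this forces $F_\nu^-(y)\le x\iff T(y)\le x$ for every $x$, i.e.\ $F_\nu^-=T$ on $(0,1)$. The forward direction uses only that $T$ is increasing: if $T(y)\le x$ then $(0,y]\subseteq\{u:T(u)\le x\}$, so $F_\nu(x)=\lambda\{u:T(u)\le x\}\ge y$. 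The converse is where left-continuity is essential: if $T(y)>x$, then $\lim_{u\uparrow y}T(u)=T(y)>x$ gives $\delta>0$ with $T(u)>x$ for $u\in[y-\delta,y)$, while $T$ increasing gives $T(u)\ge T(y)>x$ for $u\ge y$; hence $\{u:T(u)\le x\}\subseteq(0,y-\delta)$ and $F_\nu(x)\le y-\delta<y$ (the case $y-\delta\le0$ being trivial). This converse step, and the role of the left-continuity hypothesis (without which $T$ and $F_\nu^-$ could disagree at countably many points), is the one place I would write out in full detail; everything else is routine bookkeeping.

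For (c), I would use the probabilistic reading of the Galois connection: if $U$ is uniform on $(0,1)$ then $\P(F_\nu^-(U)\le x)=\P(U\le F_\nu(x))=F_\nu(x)$, so $F_\nu^-(U)$ has law $\nu$; equivalently $\nu=F_\nu^-\#\lambda$. The change-of-variables formula for this pushforward then gives $\int_\R x^2\,d\nu(x)=\int_0^1(F_\nu^-(y))^2\,dy$ as an identity in $[0,\infty]$, and the stated equivalence follows at once.
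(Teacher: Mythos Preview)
Your proof is correct and self-contained. The paper itself does not give a proof of this proposition; it simply cites \cite{Embrechts10anote} and \cite{Rockafellar14} for these standard facts about quantile functions. Your argument via the Galois connection $F_\nu^-(y)\le x\iff F_\nu(x)\ge y$ is the canonical one, and each of the three parts is handled cleanly: monotonicity and left-continuity in (a) follow directly from the order-theoretic setup; in (b) the pushforward $\nu=T\#\lambda$ together with the dual Galois equivalence $T(y)\le x\iff F_\nu(x)\ge y$ pins down $F_\nu^-=T$ exactly (and you correctly identify left-continuity of $T$ as the hypothesis needed for the converse direction); and (c) is the usual inverse-transform identity $\nu=F_\nu^-\#\lambda$ read as an equality of second moments. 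One very minor point: in (b) you might also note that $T$ takes values in $\R$ (not $\pm\infty$) on $(0,1)$, so the pushforward is indeed a probability on $\R$; this is implicit in the hypothesis but worth a word. Otherwise there is nothing to add.
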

\begin{proof}
See  \cite{Embrechts10anote}, \cite{Rockafellar14}.
\end{proof}
\begin{lem}\label{lem:quantile}
Let $T\in L^2(0,1)$ a.e. increasing. Then there exists $\nu\in W_2(\R)$ such that $T=F_\nu^-$ a.e.
\end{lem}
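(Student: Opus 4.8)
The plan is to modify $T$ on a Lebesgue-null subset of $(0,1)$ so as to obtain a genuinely increasing, left-continuous function $\tilde T:(0,1)\to\R$, then invoke Proposition \ref{prop:quantile}(b) to realize $\tilde T$ as a quantile function $F_\nu^-$, and finally use Proposition \ref{prop:quantile}(c) together with $T\in L^2(0,1)$ to conclude $\nu\in W_2(\R)$. Since $\tilde T=T$ a.e., this yields $T=F_\nu^-$ a.e.

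By Definition \ref{def:incr}(b) there is a Borel set $B\subseteq(0,1)$ with Lebesgue measure $1$ on which $T$ is increasing; in particular $B$ is dense in $(0,1)$. I would set
\[
\tilde T(y):=\sup\{T(x): x\in B,\ x<y\},\qquad y\in(0,1).
\]
First one checks that $\tilde T$ is real-valued: the set $B\cap(0,y)$ is nonempty by density of $B$, so $\tilde T(y)>-\infty$; and choosing any $x_0\in B$ with $x_0>y$, monotonicity of $T$ on $B$ gives $T(x)\le T(x_0)$ for all $x\in B\cap(0,y)$, whence $\tilde T(y)\le T(x_0)<\infty$. Monotonicity of $\tilde T$ is immediate, as the supremum runs over an increasing family of sets, and left-continuity follows from $B\cap(0,y)=\bigcup_{z<y}\bigl(B\cap(0,z)\bigr)$, which gives $\tilde T(y)=\sup_{z<y}\tilde T(z)$.

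Next I would show that $\tilde T=T$ Lebesgue-a.e. Being increasing, $\tilde T$ is continuous outside a countable set $D\subseteq(0,1)$. Fix $y\in B\setminus D$. On one hand $\tilde T(y)\le T(y)$, since every $x\in B\cap(0,y)$ satisfies $T(x)\le T(y)$. On the other hand, choosing $x_n\in B$ with $x_n\downarrow y$ and $x_n>y$, we have $y\in B\cap(0,x_n)$, hence $\tilde T(x_n)\ge T(y)$; letting $n\to\infty$ and using continuity of $\tilde T$ at $y$ gives $\tilde T(y)\ge T(y)$. Thus $\tilde T=T$ on $B\setminus D$, a set of full Lebesgue measure, and consequently $\tilde T\in L^2(0,1)$ with $\int_0^1\tilde T^2=\int_0^1 T^2<\infty$.

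Finally, by Proposition \ref{prop:quantile}(a)--(b) there exists a probability measure $\nu$ on $(\R,{\cal B}(\R))$ with $F_\nu^-=\tilde T$ on $(0,1)$, and by Proposition \ref{prop:quantile}(c) the integrability just established implies that $\nu$ has finite second moment, i.e.\ $\nu\in W_2(\R)$. Since $\tilde T=T$ a.e., we obtain $T=F_\nu^-$ a.e., as claimed. The only delicate point is the a.e.\ identification $\tilde T=T$ (and, relatedly, the finiteness of $\tilde T$, which uses density of $B$); constructing and verifying the left-continuous increasing envelope is otherwise routine. An alternative route, leading to the same measure, is to take $\nu:=T\#\lambda$ with $\lambda$ the Lebesgue measure on $(0,1)$ and check directly that $F_\nu^-=T$ a.e., but this requires essentially the same monotonicity bookkeeping.
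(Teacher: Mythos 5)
Your proof is correct and follows essentially the same route as the paper: construct a left-continuous, increasing modification of $T$ agreeing with it a.e.\ (the paper does this in two steps, first filling in off $B$ with an infimum of future values and then taking the left-continuous version, while you build the left-continuous envelope directly via a supremum of past values), and then invoke Proposition \ref{prop:quantile}(b,c). The countability of discontinuities of an increasing function is the key fact in both arguments.
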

\begin{proof}
Suppose $T$ is increasing on a full measure set $B\subseteq (0,1)$ (that is the Lebesgue measure of $B$ is one). Let $\tilde T:(0,1)\to\R$ be defined as $\tilde T(x)=T(x)$, for $x\in B$, and $\tilde T(x)=\inf_{y\in B,x<y}T(y)$, for $x\not\in B$. Then $\tilde T$ is increasing in $(0,1)$ and $\tilde T=T$ a.e. Finally, let $\hat T$ be the left-continuous version of $\tilde T$, that is, $\hat T(x):=\lim_{t\to x^-}\tilde T(t)$. So, as $\hat T$ is left-continuous and increasing on $(0,1)$, from Proposition \ref{prop:quantile}(b,c) there exists a probability $\nu\in W_2(\R)$, such that $F_\nu^-=\hat T$. Finally, since the number of discontinuities of any increasing function is countable, we have $\hat T=\tilde T$ a.e.
\end{proof}
\begin{prop}\label{prop:closedconvexquantile}
Let $\Omega$ be an interval of real numbers (not necessarily bounded). Then, the set of quantile functions $\{F_\nu^-|\nu\in\WS\}$ is closed and convex in $L^2(0,1)$.
\end{prop}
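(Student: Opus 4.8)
I would establish convexity and closedness separately, reducing both to the description of the quantile maps of $\WS$ as the Lebesgue-a.e.\ equivalence classes of increasing, square-integrable functions on $(0,1)$ with values in $\Omega$, via Lemma \ref{lem:quantile}.

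For convexity, fix $\nu_1,\nu_2\in\WS$ and $\lambda\in[0,1]$, and set $T:=\lambda F_{\nu_1}^-+(1-\lambda)F_{\nu_2}^-$. Since $F_{\nu_1}^-$ and $F_{\nu_2}^-$ are increasing (Proposition \ref{prop:quantile}(a)), so is $T$; since $\Omega$ is an interval, hence convex, $T$ is again $\Omega$-valued; and $T\in L^2(0,1)$ because $L^2(0,1)$ is a vector space. By Lemma \ref{lem:quantile} there is $\nu\in W_2(\R)$ with $T=F_\nu^-$ a.e.; as $F_\nu^-$ is then $\Omega$-valued a.e.\ and $\nu$ is the law of $F_\nu^-(U)$ for $U$ uniform on $(0,1)$, we get $\nu(\Omega)=1$, i.e.\ $\nu\in\WS$. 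Hence $T=F_\nu^-$ belongs to the set in question.

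For closedness, let $F_{\nu_n}^-\to T$ in $L^2(0,1)$ with $\nu_n\in\WS$. Extract a subsequence $(F_{\nu_{n_k}}^-)$ converging to $T$ Lebesgue-a.e., say on a full-measure set $B\subseteq(0,1)$. For $y<y'$ in $B$, passing to the limit in $F_{\nu_{n_k}}^-(y)\le F_{\nu_{n_k}}^-(y')$ shows that $T$ is increasing on $B$, hence a.e.\ increasing, and $T\in L^2(0,1)$; so Lemma \ref{lem:quantile} gives $\nu\in W_2(\R)$ with $T=F_\nu^-$ a.e. Finally, for $y\in B$ we have $T(y)=\lim_k F_{\nu_{n_k}}^-(y)$ with $F_{\nu_{n_k}}^-(y)\in\Omega$, and since $\Omega$ is closed, $T(y)\in\Omega$; thus $F_\nu^-(y)\in\Omega$ for a.e.\ $y$, whence $\nu(\Omega)=1$ and $\nu\in\WS$. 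Therefore $T=F_\nu^-$ lies in $\{F_\nu^-\mid\nu\in\WS\}$, which is thus closed.

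The only point requiring care is the closedness step: $L^2$-convergence yields pointwise convergence only along a subsequence and only a.e., so the monotonicity of the limit must be deduced on a single common full-measure set; everything else is routine, with closedness of $\Omega$ being exactly what keeps the limiting quantile $\Omega$-valued, and Lemma \ref{lem:quantile} certifying that any a.e.-increasing square-integrable function is the quantile of a measure in $W_2(\R)$.
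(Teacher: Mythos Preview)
Your proof is correct and follows essentially the same route as the paper's: convexity via the fact that a convex combination of quantiles is again (a.e.) increasing and square-integrable, closedness via extracting an a.e.-convergent subsequence and showing the limit is a.e.\ increasing, then invoking Lemma~\ref{lem:quantile}. The only minor differences are that the paper appeals to Proposition~\ref{prop:quantile}(b,c) rather than Lemma~\ref{lem:quantile} for the convexity step (since $T_\alpha$ is genuinely left-continuous and increasing, not just a.e.), and that you are more explicit than the paper in checking that the resulting measure is supported on $\Omega$, using that $\Omega$ is convex (for convexity) and closed (for closedness).
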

\begin{proof}
For convexity let $\alpha\in(0,1)$ and $\nu_1,\nu_2\in\WS$. Then $T_\alpha:=\alpha F_{\nu_1}^-+(1-\alpha)F_{\nu_2}^-$ is increasing, left-continuous and square integrable. Hence, by Proposition \ref{prop:quantile}(b,c), $T_\alpha$ is the quantile of some $\nu_\alpha\in\WS$. For closedness consider a sequence $(\nu_n)$ in $\WS$, such that $\int_0^1(F_{\nu_n}^-(x)-T(x))^2dx\to0$, as $n\to\infty$. Then, there exists a subsequence $(\nu_{k_n})$ of $(\nu_n)$ such that $F_{\nu_{k_n}}^-\to T$ a.e. and hence, $T$ is square-integrable and a.e. increasing. So, by Lemma \ref{lem:quantile}, $T$ is a quantile. As usual, the elements of $L^2(0,1)$ are understood as equivalence classes.
\end{proof}
\subsection{Geodesics in metric spaces}
We introduce the concept of geodesic in metric spaces. For notations, definitions and results, we follow \cite{Chodosh2011} and references therein. For convenience, without loss of generality, we consider $I$ such that $[0,1] \subset I$.
\begin{defin}
A curve in a metric space $(X,d)$ is a continuous function $\gamma : I \to X$, where $I \subset \R$ is a closed (not necessarily bounded) interval. Also
\ \\
(i) $\gamma$ is said to pass through $z \in X$ if $\gamma(t)=z$, for some $t \in I$;
\ \\
(ii) $\gamma$ joins $x,y \in X$ if there exists $a,b \in I$, such that $\gamma(a)=x$ and $\gamma(b)=y$ and
\ \\
(iii) $\gamma$ is rectifiable if its length $L(\gamma)$ is finite.
\end{defin}

\begin{defin}\label{def:geodesic}
A metric space $(X,d)$ is said to be geodesic if for every $x,y \in X$, there exists a rectifiable curve $\gamma$ joining $x$ and $y$, such that $d(x,y) = L(\gamma)$. Such minimum length curve $\gamma$ is called a shortest path between $x$ and $y$. A curve $\gamma : I \to X$ is a geodesic if for every $t \in I$, there exist $a,b \in I, a < b, a \leq t \leq b$ such that the restriction of $\gamma$ to $[a,b]$ is a shortest path between $\gamma(a)$ and $\gamma(b)$.
\end{defin}

The following is a useful characterization of shortest path (See \cite{Chodosh2011} for a proof).

\begin{lem}\label{lem:reparam}
For any shortest path, there exists a continuous reparametrization $\gamma$ on $[0,1]$ such that
\begin{equation*}
d(\gamma(s),\gamma(t)) = |t-s|d(\gamma(0),\gamma(1)) \mbox{ for all } s,t \in  [0,1] .
\end{equation*}
\end{lem}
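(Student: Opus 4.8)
The plan is to produce $\gamma$ as the constant-speed (arc-length) reparametrization of the given shortest path. Start from a shortest path $\gamma_0 : [a,b] \to X$ realizing $L(\gamma_0) = d(\gamma_0(a),\gamma_0(b)) =: \ell$. If $\ell = 0$ the curve is constant and any constant map on $[0,1]$ works, so assume $\ell > 0$. Define the normalized arc-length function $\varphi(t) := L(\gamma_0|_{[a,t]})/\ell$ for $t \in [a,b]$. Using additivity of length over concatenations and continuity of the length function of a rectifiable curve (standard facts, available in \cite{Chodosh2011}), $\varphi$ is continuous and nondecreasing with $\varphi(a)=0$ and $\varphi(b)=1$, hence surjective onto $[0,1]$.

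The key structural point is that every sub-arc of a shortest path is again a shortest path. Precisely, for $a \le t_1 \le t_2 \le b$, additivity of length gives $\ell = L(\gamma_0|_{[a,t_1]}) + L(\gamma_0|_{[t_1,t_2]}) + L(\gamma_0|_{[t_2,b]})$; each of the three terms dominates the corresponding endpoint distance, and those three distances sum to at least $d(\gamma_0(a),\gamma_0(b)) = \ell$ by the triangle inequality. Hence every inequality in this chain is an equality, and in particular $d(\gamma_0(t_1),\gamma_0(t_2)) = L(\gamma_0|_{[t_1,t_2]}) = \ell\,(\varphi(t_2)-\varphi(t_1))$. Equivalently, $d(\gamma_0(t),\gamma_0(t')) = \ell\,|\varphi(t)-\varphi(t')|$ for all $t,t' \in [a,b]$.

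From this identity, $\varphi(t)=\varphi(t')$ forces $\gamma_0(t)=\gamma_0(t')$, so there is a well-defined map $\gamma : [0,1] \to X$ characterized by $\gamma(\varphi(t)) = \gamma_0(t)$; surjectivity of $\varphi$ shows $\gamma$ is defined on all of $[0,1]$. For $u,v \in [0,1]$, choose $t,t'$ with $\varphi(t)=u$, $\varphi(t')=v$; then $d(\gamma(u),\gamma(v)) = d(\gamma_0(t),\gamma_0(t')) = \ell\,|u-v|$. Since $\gamma(0)=\gamma_0(a)$ and $\gamma(1)=\gamma_0(b)$, we get $\ell = d(\gamma(0),\gamma(1))$, which is exactly the asserted relation; continuity of $\gamma$ is automatic as it is $\ell$-Lipschitz.

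The only genuinely non-formal ingredient is the elementary "length calculus": additivity of $L$ over concatenated sub-arcs and continuity of $t \mapsto L(\gamma_0|_{[a,t]})$. I would either invoke these from \cite{Chodosh2011} or derive them quickly from the definition of length as a supremum over partitions. Beyond that bookkeeping, the proof is a short chain of (in)equalities, and I expect this length-functional bookkeeping — rather than any conceptual difficulty — to be the only point requiring care.
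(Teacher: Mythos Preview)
Your argument is correct: this is the standard arc-length reparametrization proof, and every step is sound. Note that the paper itself does not give a proof of this lemma; it simply cites \cite{Chodosh2011}, so your write-up actually supplies more detail than the paper does.
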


\begin{lem}\label{lem:gammaH}
Let $H$ be a Hilbert space and $x,y \in H$. Then $\gamma$ is a shortest path joining $x$ and $y$ if and only if  $\gamma(t) = (1-t)x + ty$, for all $t \in  [0,1]$, up to a continuous reparametrization.
\end{lem}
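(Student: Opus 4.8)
The plan is to prove the two implications separately; the forward direction is immediate from the definition of length, and the converse rests on the strict convexity of the Hilbert norm, used via the equality case of the Cauchy--Schwarz inequality.

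For the ``if'' direction, suppose $\gamma(t)=(1-t)x+ty$ on $[0,1]$. Any curve joining $x$ and $y$ has length at least $d(x,y)=\|x-y\|$, while a direct estimate of the length of this particular affine curve gives $L(\gamma)=\|y-x\|$; hence $\gamma$ realizes the distance and is a shortest path. Since length and image are invariant under continuous reparametrization, the same holds for any reparametrization of $\gamma$, which gives the ``up to reparametrization'' statement. For the ``only if'' direction, I would first use Lemma~\ref{lem:reparam} to replace $\gamma$ by a continuous reparametrization $\gamma:[0,1]\to H$ with $\gamma(0)=x$, $\gamma(1)=y$ and $d(\gamma(s),\gamma(t))=|t-s|\,\|x-y\|$ for all $s,t\in[0,1]$ (the degenerate case $x=y$, where $\gamma$ is constant, being trivial and treated apart).

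Now fix $t\in(0,1)$ and apply the metric-linearity relation to the pairs $(0,t)$, $(t,1)$, $(0,1)$, obtaining $\|\gamma(t)-x\|+\|y-\gamma(t)\|=t\|x-y\|+(1-t)\|x-y\|=\|x-y\|$. Writing $u=x-\gamma(t)$ and $v=\gamma(t)-y$, this says $\|u\|+\|v\|=\|u+v\|$; squaring and using the inner product yields $\langle u,v\rangle=\|u\|\,\|v\|$, i.e.\ equality in Cauchy--Schwarz, so (both $u,v$ being nonzero when $x\neq y$ and $t\in(0,1)$) we have $v=cu$ for some $c>0$. Solving gives $\gamma(t)=(1-\lambda)x+\lambda y$ with $\lambda=\tfrac{1}{1+c}\in(0,1)$, and then $\|\gamma(t)-x\|=\lambda\|y-x\|=t\|x-y\|$ forces $\lambda=t$. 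Together with the endpoint values this shows $\gamma(t)=(1-t)x+ty$ for all $t\in[0,1]$.

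The analytic core --- equality in the triangle inequality implies collinearity in a Hilbert space --- is a one-line Cauchy--Schwarz computation, so the only mildly delicate point is the initial reduction: justifying that a ``shortest path joining $x$ and $y$'' may, after continuous reparametrization (and discarding constant pieces), be taken to be a map on $[0,1]$ with endpoints $x,y$ and the metric-linearity property of Lemma~\ref{lem:reparam}, and isolating the trivial case $x=y$.
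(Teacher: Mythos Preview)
Your proof is correct and follows essentially the same route as the paper's: both invoke Lemma~\ref{lem:reparam} to obtain $\|x-\gamma(t)\|+\|\gamma(t)-y\|=\|x-y\|$, then use the equality case of Cauchy--Schwarz on $u=x-\gamma(t)$ and $v=\gamma(t)-y$ to deduce collinearity and recover $\gamma(t)=(1-t)x+ty$. You are slightly more careful in isolating the degenerate case $x=y$ and handling the endpoints, but otherwise the arguments are the same.
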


\begin{proof}
Denote the inner product and the induced norm in H by $\langle \cdot,\cdot \rangle$ and $\|\cdot\|$ respectively. Let $\gamma$ be a shortest path between $x$ and $y$, and $t \in [0,1]$. After a reparametrization such that $\gamma(0)=x$ and $\gamma(1)=y$, from Lemma \ref{lem:reparam} we have $\|x-\gamma(t)\| = t \|x-y\|$ and $\|\gamma(t)-y\| = (1-t) \|x-y\|$, then
$\|x-\gamma(t)\| + \|\gamma(t)-y\| = \|x-y\|$.

Squaring and simplifying the former expression above, we obtain $\|x-\gamma(t)\| \|\gamma(t)-y\| = \langle x-\gamma(t) , \gamma(t)-y\rangle$.
Hence, by the Cauchy-Schwartz inequality, there exists $\lambda \geq 0$ such that $x-\gamma(t) = \lambda (\gamma(t)-y)$. Finally,  taking norm we find $\lambda = \frac{t}{1-t}$ and the result follows. The other implication is direct.
\end{proof}

From the previous lemma we deduce that, in Hilbert spaces, any geodesic is locally a segment and so, geodesics are straight lines. We state this in the following corollary.

\begin{coro}\label{coro:gamma}
Let $H$ be a Hilbert space and $\gamma : I \to H$ a curve, such that $\gamma(0)=x$ and $\gamma(1)=y$. Then $\gamma$ is a geodesic if and only if  $\gamma(t) = (1-t)x + ty$, for all $t \in  I$, up to a continuous reparametrization.
\end{coro}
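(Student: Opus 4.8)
The plan is to leverage Lemma~\ref{lem:gammaH}, which already characterizes \emph{shortest paths} in a Hilbert space as affine segments (up to reparametrization), and upgrade it to geodesics by a local-to-global argument exploiting the connectedness of $I$.

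For the direction ($\Leftarrow$), I would argue that if $\gamma=\tilde\gamma\circ\theta$ for a continuous monotone $\theta$, where $\tilde\gamma(s)=(1-s)x+sy$, then $\tilde\gamma$ restricted to any compact subinterval is an affine segment, hence a shortest path between its endpoints by Lemma~\ref{lem:gammaH}; moreover any sub-arc of a shortest path is itself a shortest path between its own endpoints, and this property is preserved under the monotone reparametrization $\theta$ (which alters neither endpoints nor length). Thus $\gamma$ is locally a shortest path, i.e.\ a geodesic.

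For the direction ($\Rightarrow$), assume $\gamma$ is a geodesic. I would first dispose of the degenerate case $x=y$, where $\gamma$ must be constant on $[0,1]$: a non-constant piece would attain an interior strict extremum of $t\mapsto\|\gamma(t)-x\|$, and near such a point $\gamma$ cannot be length-minimizing. Assuming now $x\ne y$, for each $t\in I$ pick, by Definition~\ref{def:geodesic}, an interval $[a_t,b_t]\subseteq I$ with $a_t<b_t$ and $a_t\le t\le b_t$ on which $\gamma$ is a shortest path. Lemma~\ref{lem:gammaH} then gives that $\gamma([a_t,b_t])$ is the segment $[\gamma(a_t),\gamma(b_t)]$, and since a backtracking path is strictly longer than the distance between its endpoints, $\gamma$ traverses this segment monotonically (its affine coordinate along the segment is monotone in $t$). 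These intervals cover the connected set $I$; whenever two of them overlap in more than one point, the two corresponding segments share a non-degenerate sub-segment and so lie on a common affine line. Propagating this along $I$, all local segments lie on a single affine line $L$, necessarily the line through $x=\gamma(0)$ and $y=\gamma(1)$.

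To conclude, since $x\ne y$ the affine coordinate along $L$ is a continuous function of the point, so $\gamma(t)=(1-\theta(t))x+\theta(t)y$ for a continuous $\theta:I\to\R$ with $\theta(0)=0$ and $\theta(1)=1$. The local monotone-traversal property makes $\theta$ locally monotone, and $\theta(0)<\theta(1)$ forces it to be non-decreasing on all of $I$; hence $\theta$ is an admissible continuous reparametrization and $\gamma=\tilde\gamma\circ\theta$ with $\tilde\gamma(s)=(1-s)x+sy$, as claimed. The routine ingredients here (Lemma~\ref{lem:gammaH}, the no-backtracking inequality) are immediate; I expect the main obstacle to be the bookkeeping of the patching step — handling the overlaps of the local shortest-path intervals, the one-sided neighborhoods at the endpoints of $I$, and possible constant stretches of $\gamma$ — so that collinearity and monotonicity genuinely propagate across the whole interval, together with fixing the precise meaning of ``up to a continuous reparametrization''.
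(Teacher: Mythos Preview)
Your proposal is correct and follows the same idea the paper only sketches: the paper gives no detailed proof of Corollary~\ref{coro:gamma}, merely remarking that ``any geodesic is locally a segment and so, geodesics are straight lines,'' i.e.\ exactly your local-to-global patching via Lemma~\ref{lem:gammaH} and connectedness of $I$. Your write-up is in fact more careful than the paper, explicitly flagging the degenerate case $x=y$, constant stretches, and the monotonicity of the reparametrization --- all of which the paper leaves implicit.
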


\begin{defin}\label{def:geodesicspace}
Let $(X,d)$ be a geodesic space and  $Y \subset X$. We say that $Y$ is geodesic if the induced metric space $(Y,d)$ is geodesic. In other words, if for any $x,y \in Y$, there exists a shortest path joining $x$ and $y$, totally contained in $Y$.
\end{defin}

Note from Lemma \ref{lem:gammaH} that a Hilbert space $H$ is geodesic and $C \subset H$ is geodesic if and only if $C$ is convex.

\subsection{$K$-convergence}\label{sec:Klim}

In this section we present definitions and results that we use for proving the existence of principal geodesics (see Section \ref{sec:pgs}). In particular, we define  an appropriate concept of convergence for sequences of convex sets in a metric space $(X,d)$.

\begin{defin}\label{def:Klim}
Let $C,C_n \subset X, n \geq 1$. We say that the sequence $(C_n)$ converges to $C$ in the sense of Kuratowski, denoted by $K$-$\lim_{n \to \infty}C_n=C$, if
\ \\
(i) for all $x \in C$, there exist $x_n\in C_n, n\ge1$,  such that $x_n\to x$ and
\ \\
(ii) for all $x_n \in C_n, n \geq 1$, and for any accumulation point $x$ of $(x_n)$,  $x\in C$.
\end{defin}

\begin{defin} \label{def:Haussdorff}
The deviation from $x \in X$ to $B\subseteq X$ is defined by $d(x,B) := \inf_{x' \in B}d(x,x')$; the deviation from $A\subseteq X$ to $B$ is $d(A,B) := \sup_{x \in A}d(x,B)$ and the Hausdorff distance between the sets $A$ and $B$ is
\begin{equation}\label{eq:h}
h(A,B) := \max\{d(A,B),d(B,A)\}.
\end{equation}
\end{defin}

\begin{rem}\label{rem:kurato}
It is well known (see \cite{Price40,Beer85} and references therein) that convergence with respect to the Hausdorff distance is stronger than convergence in the sense of Kuratowski. Moreover,  if $X$ is compact both notions of convergence coincide.
\end{rem}
\begin{defin}\label{def:CLX}
 We define the metric space $\CL(X)$ as the set of nonempty, closed subsets of $X$,
endowed with the Hausdorff distance $h$.
\end{defin}

\begin{prop}\label{prop:dKcont}
For all $x\in X$, then $d(x,\cdot)$ is continuous on $\CL(X)$.
\end{prop}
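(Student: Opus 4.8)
The plan is to show that $d(x,\cdot)$ is in fact $1$-Lipschitz on $\CL(X)$ with respect to the Hausdorff distance $h$, which immediately gives continuity. So I would fix $x\in X$ and two nonempty closed sets $A,B\in\CL(X)$, and aim to establish the inequality $|d(x,A)-d(x,B)|\le h(A,B)$.

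The key step is the triangle-type estimate $d(x,B)\le d(x,A)+d(A,B)$, where $d(A,B)=\sup_{a\in A}d(a,B)$ is the deviation appearing in Definition \ref{def:Haussdorff}. To prove it, I would take an arbitrary $a\in A$ and write, for any $b\in B$, $d(x,b)\le d(x,a)+d(a,b)$; taking the infimum over $b\in B$ gives $d(x,B)\le d(x,a)+d(a,B)\le d(x,a)+d(A,B)$. Since this holds for every $a\in A$, taking the infimum over $a\in A$ yields $d(x,B)\le d(x,A)+d(A,B)\le d(x,A)+h(A,B)$, using $d(A,B)\le h(A,B)$ from \eqref{eq:h}. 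Exchanging the roles of $A$ and $B$ gives the reverse bound, hence $|d(x,A)-d(x,B)|\le h(A,B)$.

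From this Lipschitz bound, continuity of $d(x,\cdot)$ on $(\CL(X),h)$ is immediate: if $h(C_n,C)\to 0$ then $|d(x,C_n)-d(x,C)|\le h(C_n,C)\to 0$. There is essentially no obstacle here; the only minor point to keep straight is the asymmetry of the deviation $d(A,B)$ versus $d(B,A)$, which is why the argument must be run in both directions before invoking the symmetric quantity $h(A,B)=\max\{d(A,B),d(B,A)\}$. Nonemptiness of the sets in $\CL(X)$ is what makes $d(x,A)$ and $d(x,B)$ well defined (finite infima over nonempty sets), so no degenerate cases arise.
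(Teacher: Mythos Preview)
Your proof is correct and follows exactly the approach of the paper: establish the $1$-Lipschitz bound $|d(x,A)-d(x,B)|\le h(A,B)$ via the triangle inequality and the deviation, then conclude continuity. The paper's proof is just a terser version of yours, stating the inequality $d(x,A)\le d(x,B)+h(A,B)$ without spelling out the intermediate steps you supply.
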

\begin{proof}
Observe that, for all $x\in X$ and $A,B\in \CL(X)$, $d(x,A)\le d(x,B)+h(A,B)$. Then $|d(x,A) - d(x,B)|\le h(A,B)$ and the conclusion follows.
\end{proof}

\begin{lem}\label{lem:Klimaux}
Let $B,C,B_n,C_n \subset X$, with $B_n \subset C_n$, $n \geq 1$, such that $K$-$\lim_{n \to \infty}B_n=B$ and $K$-$\lim_{n \to \infty}C_n=C$. Then $B \subset C$.
\end{lem}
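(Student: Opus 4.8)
The plan is to chase an arbitrary point $x\in B$ through the two convergences, using the lower half of Kuratowski convergence for $(B_n)$ and the upper half for $(C_n)$. First I would invoke Definition \ref{def:Klim}(i) applied to $K$-$\lim_{n\to\infty}B_n=B$: since $x\in B$, there exist $x_n\in B_n$, $n\ge1$, with $x_n\to x$. Because $B_n\subset C_n$ for every $n$, this same sequence satisfies $x_n\in C_n$ for all $n\ge1$.

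Next I would apply Definition \ref{def:Klim}(ii) to $K$-$\lim_{n\to\infty}C_n=C$. The sequence $(x_n)$ has $x_n\in C_n$ for every $n$, and since $x_n\to x$, the point $x$ is in particular an accumulation point of $(x_n)$. Therefore clause (ii) forces $x\in C$. As $x\in B$ was arbitrary, we conclude $B\subset C$, which is the assertion of the lemma.

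I expect no real obstacle here: the argument is just the composition of the two halves of the definition of Kuratowski convergence, and the hypothesis $B_n\subset C_n$ is exactly what allows the sequence witnessing $x\in B$ (via $B_n$) to serve simultaneously as an admissible test sequence for membership in $C$ (via $C_n$). The only point to state carefully is the trivial but essential remark that a convergent sequence is accumulating at its limit, so that clause (ii) of Definition \ref{def:Klim} is applicable with accumulation point $x$.
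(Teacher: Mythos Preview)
Your proof is correct and is essentially identical to the paper's own argument: pick $x\in B$, use clause~(i) of Definition~\ref{def:Klim} for $(B_n)$ to get $x_n\in B_n\subset C_n$ with $x_n\to x$, then apply clause~(ii) for $(C_n)$ to conclude $x\in C$. The only difference is that you spell out explicitly that a limit is an accumulation point, which the paper leaves implicit.
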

\begin{proof}
By Definition \ref{def:Klim}(i), for any $x\in B$ there exist $x_n \in B_n, n \geq 1$, such that $x_n\to x$. As $x_n \in B_n \subset C_n$, $n \geq 1$, from Definition \ref{def:Klim}(ii) we have $x \in C$.
\end{proof}

\subsection{$\Gamma$-convergence}\label{sec:Gamma}

The notion of $\Gamma$-convergence in a metric space $(X,d)$ (\cite{Attouch84,DalMaso93}) is used in the proof of Theorem \ref{theo:GGcon}.

\begin{defin}\label{def:GammaConv}
Let $F, F_n:X\mapsto \overline{\R}:=\R \cup \{+\infty,-\infty\}, n\ge1$, a sequence of functions. We say that $(F_n)$ $\Gamma$-converges to $F$, denoted $\Gamma$-$\lim_{n \to \infty}F_n = F$, if, for every $x \in X$, \ \\
(i) $F(x) \leq \liminf_{n \to \infty}F_n(x_n)$, for any $x_n\in X, n\ge1$, with $x_n\to x$, and\ \\
(ii) there exist $x_n\in X, n\ge1$, with $x_n\to x$, such that $F(x) = \lim_{n \to \infty}F_n(x_n)$.
\end{defin}

\begin{defin}\label{def:M}
For $F : X \to \overline{\R}$, let $M(F) := \{x \in X : F(x) = \inf_{y \in X}F(y) \}$.
\end{defin}

The following result (see \cite{DalMaso93}, Theorems 7.8 and 7.23) shows that $\Gamma$-convergence together with compactness (or more generally equicoercivity) implies convergence of minimum values and minimizers.

\begin{theo}\label{theo:GammaM}
Assume that $X$ is compact and let $F, F_n:X\mapsto \overline{\R}, n\ge1$, such that $\Gamma$-$\lim_{n \to \infty}F_n = F$. Then $M(F)$ is nonempty and
\begin{equation*}
\lim_{n \to \infty}\inf_{x \in X}F_n(x) = \min_{x \in X}F(x).
\end{equation*}
Moreover, if $x_{n} \in M(F_{n}), n \geq 1$, then the accumulation points of $(x_n)$ belong to $M(F)$.
\end{theo}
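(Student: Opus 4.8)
The plan is to deduce the three assertions---$M(F)\neq\emptyset$, $\lim_n\inf_X F_n=\min_X F$, and that accumulation points of minimizers of $F_n$ minimize $F$---directly from the two clauses of the definition of $\Gamma$-convergence (Definition \ref{def:GammaConv}), invoking compactness of $X$ only to extract convergent subsequences. Write $m_n:=\inf_{x\in X}F_n(x)$ and $m:=\inf_{x\in X}F(x)$, a priori elements of $\overline{\R}$. Note first that clause (i), stated for full sequences $x_n\to x$, immediately gives its ``subsequential'' form: if $x_{n'}\to x$ along a subsequence, then $F(x)\le\liminf_{n'}F_{n'}(x_{n'})$, as one sees by padding the complementary indices with the constant value $x$ and using that a liminf along a subsequence dominates the liminf along the whole sequence.

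First I would handle the upper bound, which needs no compactness: fix $x\in X$ and take a recovery sequence $x_n\to x$ with $F_n(x_n)\to F(x)$ furnished by clause (ii); since $m_n\le F_n(x_n)$ this yields $\limsup_n m_n\le F(x)$, and infimizing over $x$ gives $\limsup_n m_n\le m$.

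Next, for the matching lower bound and the existence of a minimizer, I would argue as follows. For each $n$ pick a near-minimizer $x_n\in X$ with $F_n(x_n)\le m_n+1/n$ (read as $F_n(x_n)\le -n$ when $m_n=-\infty$)---this is the step where one must avoid assuming that $F_n$ attains its infimum. Then pass first to a subsequence $(n')$ along which $m_{n'}\to\liminf_n m_n$, and only afterwards invoke compactness of $X$ to extract a further subsequence $(n'')$ with $x_{n''}\to\bar{x}\in X$. The subsequential form of clause (i) now gives
\[
F(\bar{x})\ \le\ \liminf_{n''}F_{n''}(x_{n''})\ \le\ \liminf_{n''}\bigl(m_{n''}+\tfrac1{n''}\bigr)\ =\ \liminf_n m_n .
\]
Together with the upper bound this forces $m\le F(\bar{x})\le\liminf_n m_n\le\limsup_n m_n\le m$, so all inequalities are equalities: $\bar{x}\in M(F)$, hence $M(F)\neq\emptyset$ and the infimum $m$ is attained (so writing $\min_X F$ is legitimate), and $\lim_n m_n=m=\min_X F$. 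Finally, for the statement on minimizers, let $x_n\in M(F_n)$ and let $x$ be an accumulation point, $x_{n'}\to x$; then $F_{n'}(x_{n'})=m_{n'}\to m$ by what precedes, and clause (i) gives $F(x)\le\liminf_{n'}F_{n'}(x_{n'})=m$, while $F(x)\ge m$ trivially, so $F(x)=m$, i.e.\ $x\in M(F)$.

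The computations are all elementary; the one genuinely delicate point is the bookkeeping in the middle step---one must realize $\liminf_n m_n$ along a subsequence \emph{before} using compactness to extract a convergent subsequence of near-minimizers, so that the resulting inequality controls the lower limit of the whole sequence $(m_n)$ rather than of some uncontrolled subsequence---together with the systematic use of near-minimizers in place of exact ones, since $F_n$ need not attain its infimum. Compactness of $X$ is used exactly once, precisely in that extraction, and it (or the weaker equicoercivity hypothesis of \cite{DalMaso93}) cannot be dropped.
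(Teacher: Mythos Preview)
Your proof is correct and self-contained; the argument is the standard one for this fundamental result on $\Gamma$-convergence. Note, however, that the paper does not supply its own proof of this theorem: it is stated in the Appendix with a citation to Theorems~7.8 and~7.23 of \cite{DalMaso93}, so there is nothing to compare against beyond observing that your argument is precisely the one underlying those cited results. Your careful handling of near-minimizers (since the $F_n$ need not attain their infima) and of the order of subsequence extractions is exactly right, and your remark that compactness can be weakened to equicoercivity matches the paper's own parenthetical comment.
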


\begin{defin}\label{def:chi}
 The indicator  of $A \subset X$ is the function $\chi_{_{ A}} : X \to \R \cup \{+ \infty\}$ defined by $\chi_{_{ A}}(x)=0$, if $x \in A$, and $\chi_{_{ A}}(x)=+ \infty$, if $x \notin A$.
\end{defin}

The following Proposition (see  \cite{Attouch84}, Proposition 4.15.) shows the relation between $\mathrm{K}$-convergence (see Definition \ref{def:Klim}) and $\Gamma$-convergence.

\begin{lem}\label{lem:chi}
Let  $A, A_n  \subset X,\ n \geq 1$. Then $\mathrm{K}$-$\lim_{n \to \infty} A_n = A$ if and only if $\Gamma$-$\lim_{n \to \infty}\chi_{_{A_n}}=\chi_{_{ A}}$.
\end{lem}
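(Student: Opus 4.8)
The plan is to prove Lemma \ref{lem:chi} by directly unwinding Definitions \ref{def:Klim}, \ref{def:GammaConv} and \ref{def:chi}, exploiting that an indicator takes only the values $0$ and $+\infty$. This yields the elementary dictionary: for any $x_n\to x$ one has $\liminf_n\chi_{_{A_n}}(x_n)=0$ iff $x_n\in A_n$ for infinitely many $n$, and $\liminf_n\chi_{_{A_n}}(x_n)=+\infty$ iff $x_n\notin A_n$ for all large $n$; moreover $\chi_{_{A_n}}(x_n)\to 0$ (resp.\ $\to+\infty$) iff $x_n\in A_n$ (resp.\ $x_n\notin A_n$) for all large $n$. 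With this in hand, clause \ref{def:Klim}(i) will be matched with the recovery-sequence clause \ref{def:GammaConv}(ii), and clause \ref{def:Klim}(ii) with the $\liminf$ clause \ref{def:GammaConv}(i). Since the identically-$+\infty$ function is $\chi_{_\emptyset}$, and finitely many empty $A_n$ affect neither notion of convergence (modulo the obvious reading of \ref{def:Klim}(i)), I would assume throughout that $A_n\neq\emptyset$ for all large $n$.

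For the implication $\mathrm{K}\text{-}\lim A_n=A\Rightarrow\Gamma\text{-}\lim\chi_{_{A_n}}=\chi_{_A}$, I would verify \ref{def:GammaConv}(i) and (ii) at an arbitrary $x$. The inequality \ref{def:GammaConv}(i) is automatic when $x\in A$, since then $\chi_{_A}(x)=0$; when $x\notin A$, clause \ref{def:Klim}(ii) prevents $x$ from being an accumulation point of any sequence chosen from the $A_n$, so for any $x_n\to x$ one has $x_n\in A_n$ only finitely often, i.e.\ $\liminf_n\chi_{_{A_n}}(x_n)=+\infty=\chi_{_A}(x)$. For \ref{def:GammaConv}(ii): if $x\in A$, the sequence supplied by \ref{def:Klim}(i) gives $\chi_{_{A_n}}(x_n)=0\to 0$; if $x\notin A$, the previous observation shows $x\notin A_n$ for all large $n$, so the constant sequence $x_n\equiv x$ is a recovery sequence.

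For the converse, I would verify \ref{def:Klim}(i) and (ii). For (i), given $x\in A$, a recovery sequence from \ref{def:GammaConv}(ii) satisfies $\chi_{_{A_n}}(x_n)\to 0$, hence lies in $A_n$ for all large $n$; correcting the finitely many initial terms with arbitrary points of the (nonempty) $A_n$ gives the required sequence. For (ii), let $x_n\in A_n$ for all $n$ and let $x$ be an accumulation point, say $x_{n_k}\to x$; to feed this into the full-sequence inequality \ref{def:GammaConv}(i), I would pass from the subsequence to a full sequence $z_n\to x$ by the standard staircase device: $z_n:=x_{n_{k(n)}}$ where $k(n):=\max\{j:n_j\le n\}$ (and $z_n:=x_{n_1}$ for $n<n_1$). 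Then $z_n\to x$ and $z_{n_k}=x_{n_k}\in A_{n_k}$, so $\liminf_n\chi_{_{A_n}}(z_n)\le\liminf_k\chi_{_{A_{n_k}}}(x_{n_k})=0$, and \ref{def:GammaConv}(i) forces $\chi_{_A}(x)=0$, i.e.\ $x\in A$.

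The only genuinely delicate point is this last step, passing from an accumulation point (a subsequential notion in \ref{def:Klim}(ii)) to a full sequence admissible in the $\Gamma$-$\liminf$ inequality; everything else is routine bookkeeping with two-valued functions. The argument is precisely Proposition 4.15 of \cite{Attouch84}; alternatively, when $X$ is compact one may invoke Remark \ref{rem:kurato} to reduce Kuratowski convergence to Hausdorff convergence.
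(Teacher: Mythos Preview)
Your proof is correct and is exactly the standard verification behind the reference the paper cites; the paper itself gives no argument for this lemma, merely pointing to Proposition~4.15 in \cite{Attouch84}, which you also invoke. One minor remark: in the forward direction, your claim that ``$x_n\in A_n$ only finitely often'' when $x\notin A$ also relies on the same filling-in trick (choose arbitrary points of the nonempty $A_n$ at the remaining indices to manufacture a full sequence admissible in \ref{def:Klim}(ii)), which you make explicit only in the converse; it would be cleaner to say so there as well.
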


\bibliographystyle{alpha}
\bibliography{wasserstein_gPCA}

\end{document}